 \titleformat{\subparagraph}[runin]{\normalfont}{\thesubparagraph}{0pt}{\underline}[.]
 \titleformat{\paragraph}[hang]{\normalfont\bfseries}{\theparagraph}{0pt}{}
\renewcommand{\todo}[2][]{\tikzexternaldisable\@todo[#1]{#2}\tikzexternalenable}
\setlist[enumerate]{itemsep=2.0pt plus 1.0 pt minus 0.5pt, topsep=4.0pt plus 2.0 pt minus 1.0pt}
\setlist[itemize]{itemsep=2.0pt plus 1.0 pt minus 0.5pt, topsep=4.0pt plus 2.0 pt minus 1.0pt}
\DeclareFontShape{U}{wasy}{b}{n}{ <-10> ssub * wasy/m/n
 <10> <10.95> <12> <14.4> <17.28> <20.74> <24.88>wasyb10 }{}
\DeclareMathAlphabet\mathbfcal{OMS}{cmsy}{b}{n}
\newcommand\numberthis{\addtocounter{equation}{1}\tag{\theequation}}
\numberwithin{equation}{section}
\newtheorem{definition}{Definition}[section]
\newtheorem*{theorem*}{Theorem}
\newtheorem*{conjecture*}{Conjecture}
\newtheorem*{corollary*}{Corollary}
\newtheorem{proposition}{Proposition}[section]
\newtheorem{lemma}[proposition]{Lemma}
\newtheorem{remark}[proposition]{Remark}
\newtheoremstyle{mystyle}
  {}
  {}
  {\itshape}
  {}
  {\bfseries}
  {.}
  { }
  {\thmname{#1}\thmnumber{ #2}\thmnote{ (#3)}}
\theoremstyle{mystyle}
\newtheorem{bigfact}{Fact}
\newcommand{\mb}[1]{\mathbb{#1}}
\newcommand{\mc}[1]{\mathcal{#1}}
\newcommand{\mr}[1]{\mathrm{#1}}
\newcommand{\p}{\partial}
\newcommand{\lp}{\left}
\newcommand{\rp}{\right}
\renewcommand{\d}{{\rm d}}
\newcommand{\swei}[2]{#1^{[{#2}]}}
\newcommand{\sml}[2]{#1_{ml}^{[{#2}],\,a\omega}}
\newcommand{\smlambda}[2]{#1_{m\Lambda}^{[{#2}],\,a\omega}}
\newcommand{\bmLslash}{\textup{\textbf{\L{}}}}
\newcommand{\Lslash}{\textup{{Ł}}}
\title{\textbf{The Teukolsky--Starobinsky constants:}\\ \textbf{facts and fictions}}
\author[1,2]{{\Large Marc Casals}}
\author[3]{{\Large  Rita \mbox{Teixeira da Costa}\vspace{0.4cm}}}
\affil[1]{\small  Centro Brasileiro de Pesquisas Físicas (CBPF),
Rio de Janeiro,
CEP 22290-180, Brazil  }
\affil[2]{School of Mathematics and Statistics, University College Dublin, Belfield, Dublin 4, Ireland \vspace{0.2cm}\ }
\affil[3]{\small 
University of Cambridge, Center for Mathematical Sciences, Cambridge CB3 0WA, United Kingdom}
\date{\today}
\begin{document}

\maketitle

\vspace{-0.5cm}

\begin{abstract}
The Teukolsky Master Equation describes the dynamics of massless fields with spin on a Kerr black hole. Under separation of variables, spin-reversal for this equation is accomplished through the so-called Teukolsky--Starobinsky identities. These identities are associated to  the so-called Teukolsky--Starobinsky constants, which are spin-dependent.

We collect some properties of the Teukolsky--Starobinsky constants and dispel some myths present in the literature. We show that, contrary to popular belief,  these constants can be negative for spin larger than 2. Such fields thus exhibit a novel form of energy amplification which occurs for non-superradiant frequencies.
\end{abstract}

\section{Introduction}

Since as early as the 1950s \cite{Regge1957}, the study of perturbations of stationary black holes has been a central theme of research in classical General Relativity. In 4 spacetime dimensions and in vacuum, the paradigmatic and, conjecturally, only examples of stationary black hole solutions  are the rotating black holes in the Kerr family \cite{Kerr1963}, parametrized by mass $M>0$ and specific angular momentum $|a|\leq M$.

While studying stability of Kerr, Teukolsky \cite{Teukolsky1973} introduced what became to be known as the Teukolsky Master Equations. In Boyer--Lindquist coordinates $(t,r,\theta,\phi)\in\mathbb{R}\times(r_+\equiv M+\sqrt{M^2-a^2},\infty)\times\mathbb{S}^2$, these are given by
\begin{align}
\begin{split}
\bigg[\Box_{g_{a,M}} &+\frac{2(\pm s)}{\rho^2}(r-M)\p_r +\frac{2(\pm s)}{\rho^2}\lp(\frac{a(r-M)}{\Delta}+i\frac{\cos\theta}{\sin^2\theta}\rp)\p_\phi\\
&+\frac{2(\pm s)}{\rho^2}\lp(\frac{M(r^2-a^2)}{\Delta}-r -ia\cos\theta\rp)\p_t +\frac{1}{\rho^2}\lp(\pm s-s^2\cot^2\theta\rp)\bigg]\upalpha^{[\pm s]}  =0\,.
\end{split}\label{eq:Teukolsky-equation-intro}
\end{align}
where  $\Delta=r^2-2Mr+a^2$, $\rho^2=r^2+a^2\cos^2\theta$ and $\Box_{g_{a,M}}$ the covariant scalar wave operator on the Kerr background. Here, $\pm s$ denotes a spin parameter, making $\upalpha^{[\pm s]}$ a spin-weighted function; this notion, and hence \eqref{eq:Teukolsky-equation-intro}, is well-defined for any $s\in\frac12\mathbb{Z}_{\geq 0}$, see \cite{Penrose1984}. The Teukolsky Master Equations~\eqref{eq:Teukolsky-equation-intro} play a crucial role in characterizing perturbations of Kerr black holes. For $s=0$, they reduce to the scalar wave equation on Kerr. For $s=1$ and $2$, they describe the dynamics of gauge-invariant electromagnetic and curvature quantities under the linearized Maxwell and Einstein equations, respectively, in the Newman--Penrose formalism \cite{Newman1962}. Some half-integer spin cases can also be  interpreted physically: $s=1/2$ corresponds to the Dirac equation for (massless) neutrinos; the case $s=3/2$ is known as the Rarita--Schwinger equation. 
The  case $s>2$ of the Teukolsky Master Equations \eqref{eq:Teukolsky-equation-intro}, although less often considered, 
is of mathematical interest in its own right and  might play a role in higher-spin theories \cite{Vasiliev1996}.

In this paper, we focus on separable solutions of \eqref{eq:Teukolsky-equation-intro} (see \cite{Carter1968,Teukolsky1973}): for $\omega\in\mathbb{R}$, $m-s\in\mathbb{Z}$ and $l\in\mathbb{Z}_{\geq\max\{|m|,s\}}$, these take the form
\begin{align*}
\swei{\upalpha}{\pm s}(t,r,\theta,\phi)=e^{-i\omega t}e^{im\phi}\sml{S}{\pm s}(\theta)\sml{\upalpha}{\pm s}(r)\,, 
\end{align*}
where $\sml{S}{\pm s }$ and $\sml{\upalpha}{\pm s}$ satisfy ODEs,  respectively
referred to as the angular ODE, given below as \eqref{eq:angular-ode}, and the radial ODE, given below as \eqref{eq:radial-ODE-alpha}.

As discovered by Starobinsky \cite{Starobinsky1974} and Teukolsky \cite{Teukolsky1974} for $s=1,2$ and later generalized to all $s\in\frac12\mathbb{Z}_{\geq 0}$ \cite{Kalnins1989}, the angular and radial Teukolsky ODEs of spin $\pm  s$ exhibit a curious property: on applying a certain first order differential operator $s$ times to a solution to the ODE with spin $+ s$, one obtains a solution to the ODE with spin $-s$, and vice-versa. Though seldom remarked in the classical literature, each of the radial Teukolsky--Starobinsky operators of spin $\pm s$ acts differently on the ingoing and outgoing components of the radial solutions, and a similar statement is true in the angular setting,  see Propositions~\ref{prop:TS-angular} and \ref{prop:TS-radial} below. Nevertheless, by applying these operators in succession to a solution of the angular or radial ODE, one can check that, at least for $|s|\leq 3$, the exact same solution is recovered up to a constant: respectively, the \textit{angular} Teukolsky--Starobinsky constant, denoted by $\mathfrak{B}_s=\mathfrak{B}_s(a\omega,m,l)$, and the \textit{radial} Teukolsky--Starobinsky constant, denoted by 
$$\mathfrak{C}_s=\mathfrak{C}_s(a,M,\omega,m,l)\,.$$
It is on the radial Teukolsky--Starobinsky constant, $\mathfrak{C}_s$, that we focus on for the rest of this introduction.

In the classical literature, see e.g.\ \cite{Kalnins1989,Chandrasekhar1990,Kalnins1992}, $\mathfrak{C}_s$ is frequently denoted as the square of a complex number. Taking this claim literally, we would conclude
\begin{align}
s\in\frac12\mathbb{Z}_{\geq 0}\implies \mathfrak{C}_s(a,M,\omega,m,l)\geq 0\,\,\,\, \forall\, (\omega,m,l) \text{~real~}\,. \label{eq:bogus-nonnegativity-claim}\tag{$\times$}
\end{align}
To justify \eqref{eq:bogus-nonnegativity-claim}, some authors point to the fact that $\swei{\upalpha}{-s}$ and $\Delta^s\swei{\upalpha}{+s}$ satisfy complex conjugate equations. Unfortunately, this argument is incorrect: \eqref{eq:bogus-nonnegativity-claim} does not follow from the fact that these radial ODEs are complex conjugates of each other. In fact, we show that \eqref{eq:bogus-nonnegativity-claim} is manifestly \textit{false} for general spin $s$:

\begin{bigfact}[TS constant sign I] \label{fact:negativity} The radial Teukolsky--Starobinsky constant can be negative: for any $|a|\in(0,M]$,
$$s\in\lp\{\frac52,3\rp\} \implies \exists\, (\omega,m,l) \text{~real such that~~}\mathfrak{C}_s(a,M,\omega,m,l)<0 \,.$$
\end{bigfact}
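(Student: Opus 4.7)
The plan is to derive a closed-form polynomial expression for $\mathfrak{C}_s$ in the separation parameters and then exhibit specific parameter values at which it is negative.

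My first step is to exploit Proposition~\ref{prop:TS-radial}, which realizes $\mathfrak{C}_s$ as the scalar by which the composition of the two $s$-th order radial Teukolsky--Starobinsky operators acts on a fixed radial solution. Since the constituent first-order operators have coefficients polynomial in $r$, and since this composition necessarily acts as pointwise multiplication on any true solution to the radial ODE, one can extract $\mathfrak{C}_s$ by applying the composition to a known asymptotic form of $\sml{\upalpha}{+s}(r)$ at $r \to \infty$, or equivalently by reducing all higher derivatives using the 2nd order radial ODE. This yields a closed-form polynomial
\[
\mathfrak{C}_s \;=\; P_s\lp(a\omega,\,m,\,M\omega,\,\mathfrak{B}_s\rp),
\]
in which $\mathfrak{B}_s$ is itself a polynomial in $a\omega$, $m$ and the angular eigenvalue (obtained by the analogous procedure using Proposition~\ref{prop:TS-angular}).

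With the explicit formula for $s \in \{5/2,\,3\}$ in hand, I would exhibit negativity at specific parameter values. The most direct route is numerical: fix $|a|\in(0,M]$, choose a mode $(m,l)$, and tabulate $P_s$ as a function of $\omega\in\mathbb{R}$. To locate candidate regions analytically one may first inspect asymptotic regimes: for $\omega$ large with $m,l$ fixed the top-degree monomial in $\omega$ controls the sign of $\mathfrak{C}_s$; alternatively, in the Schwarzschild limit $a\to 0$ the constant $\mathfrak{B}_s$ reduces to a product of integers, and $P_s$ becomes a polynomial in $M\omega$ only, in which sign-indefinite cross-terms can be isolated and shown to dominate. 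Once such a region is identified, it is enough to display a single tuple $(a,M,\omega,m,l)$ per spin at which $\mathfrak{C}_s < 0$.

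The main obstacle is the algebraic bulk of $P_s$ for $s\geq 5/2$: iterating a first-order operator $2s\in\{5,6\}$ times with symbolic coefficients produces an unwieldy expression, best handled with computer algebra. Conceptually, the novelty compared to $s\in\{1,2\}$ is that at higher spin new sign-indefinite cross-terms appear which are not captured by the ``sum of squares'' ansatz implicitly underlying the popular (but false) claim~\eqref{eq:bogus-nonnegativity-claim}; isolating such terms and confirming that they can dominate is the analytic heart of the argument.
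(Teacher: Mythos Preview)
Your overall plan of computing $\mathfrak{C}_s$ explicitly and then exhibiting negativity is correct in spirit, and matches the paper's starting point. However, there are two concrete gaps in the execution that prevent the argument from going through.

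First, the Schwarzschild route fails outright. The paper notes (see the paragraph after Fact~\ref{fact:negativity} in the introduction, and the last statement of Lemma~\ref{lemma:TS-radial-constant-examples}) that $\mathfrak{C}_s(a=0,M,\omega,m,l)\geq 1$ for all $s\leq 3$. At $a=0$ the explicit formula \eqref{eq:TS-radial-constants} reduces, say for $s=5/2$, to $\Lslash^2(\Lslash+3)^2(\Lslash+4)+1152(\Lslash+2)M^2\omega^2$, which is a sum of non-negative terms since $\Lslash\geq s$ there. So no ``sign-indefinite cross-terms'' survive the $a\to 0$ limit; the negativity is a genuinely rotating effect.

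Second, and more fundamentally, the large-$\omega$ route is missing its key ingredient. Your polynomial $P_s$ depends on $\Lslash=\bmLslash_{sml}^{(a\omega)}$, which is itself a nontrivial function of $a\omega$. To determine ``the top-degree monomial in $\omega$'' you must know the asymptotics of the spheroidal eigenvalue as $a\omega\to\infty$; this is precisely the content of Proposition~\ref{prop:high-freq-expansion}, which gives $\bmLslash\sim 2(q_{sml}-m)a\omega$. The paper's proof (Lemma~\ref{lemma:TS-radial-constant-negative}) decomposes $\mathfrak{C}_s=(-1)^{2s}\mathfrak{B}_s+\mathfrak{F}_s M^2\omega^2$, uses Lemma~\ref{lemma:TS-angular-high-freq-expansion} to show $\mathfrak{B}_s$ is negligible for suitable $(l,m)$, and then substitutes the eigenvalue asymptotics into $\mathfrak{F}_s$ to see that its leading term is negative for certain values of $q_{sml}-m$. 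Because this works for any fixed $a\neq 0$, it proves the claim for \emph{all} $|a|\in(0,M]$, which your fallback of ``display a single tuple'' would not.
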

\noindent The proof of Fact~\ref{fact:negativity} is short and elementary: we require only the high frequency expansions for spin-weighted spheroidal angular eigenvalues, on which $\mathfrak{C}_s$ depends  (see Lemma~\ref{lemma:TS-radial-constant-examples}), which were obtained\footnote{Earlier work on these limits \cite{Breuer1977} suffered from flaws which were corrected in the references given.} in recent work of the first author and collaborators \cite{Casals2005,Casals2018}.  We note here the importance of the assumption $a\neq 0$: it is well-known that $\mathfrak{C}_s(a=0,M,\omega,m,l)\geq 1$, at least if $|s|\leq 3$.

The implications of Fact~\ref{fact:negativity} are quite surprising. To explain these, recall that the Teukolsky--Starobinsky identities serve to define an energy identity for $\swei{\upalpha}{\pm s}$, which will thus depend on the constants $\mathfrak{C}_s$; this observation goes back to \cite{Teukolsky1974} but the reader may find a rigorous statement below in Lemma~\ref{lemma:TS-energy-identity}. In the case $s=0$, corresponding to the scalar wave equation, where $\mathfrak{C}_{s=0}\equiv 1$ plays no role in the energy, it is well-known that \textit{energy amplification} occurs if and only if the frequency parameters $(\omega,m,l)$ are superradiant, i.e.\ such that
\begin{align}
 \omega(\omega-m\upomega_+)<0\,, \qquad \upomega_+\equiv \frac{a}{2Mr_+}\,. \label{eq:superrad-frequencies-intro}
\end{align}
Condition \eqref{eq:superrad-frequencies-intro} is intimately tied to the Kerr geometry, as $\upomega_+$ is uniquely specified by the Kerr parameters and the amplification effect generated can be linked to the presence of an ergoregion in Kerr. For higher $s$, the superradiant condition is
\begin{align}
s\in\mathbb{Z}_{\geq 0}\text{~~~and~\eqref{eq:superrad-frequencies-intro} holds} \,, \label{eq:superrad-frequencies-intro-higher-s}
\end{align}
as the half-integer spin particles do not interact with the ergorregion.   
If one could show $\mathfrak{C}_s\geq 0$ unconditionally, then superradiance \eqref{eq:superrad-frequencies-intro-higher-s} would be the only source of energy amplification, with half-integer spins experiencing none. Such claims are often made in the classical literature. However, in view of Fact~\ref{fact:negativity}, we see that this is another fiction: in general, energy amplification may occur for half-integer spins and, in general, it may occur for integer spins and $(\omega,m,l)$ not in \eqref{eq:superrad-frequencies-intro-higher-s}:
\begin{bigfact}[Non-superradiant amplification] \label{fact:new-superradiance} If $s=5/2$ and $|a|\in(0,M]$, there are real $(\omega,m,l)$ for which there is energy amplification. If $s=3$ and $|a|\in(0,M]$, there are real $(\omega,m,l)$ such that the superradiant condition does not hold, i.e.\  $\omega(\omega-m\upomega_+)>0$, but for which there is energy amplification.
\end{bigfact}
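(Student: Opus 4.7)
The plan is to combine Fact~\ref{fact:negativity} with the energy identity of Lemma~\ref{lemma:TS-energy-identity}. That identity is a Wronskian-style balance law between the asymptotic fluxes of $\sml{\upalpha}{\pm s}$ at the horizon and at null infinity, with $\mathfrak{C}_s$ appearing as the coefficient that controls whether the outgoing energies can exceed the incident energy. Consequently, whenever $\mathfrak{C}_s<0$ at real frequencies, the balance forces amplification of the scattered flux, regardless of whether $(\omega,m,l)$ lies in the superradiant region \eqref{eq:superrad-frequencies-intro-higher-s}.

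For $s=5/2$ the claim is then immediate. The superradiant condition \eqref{eq:superrad-frequencies-intro-higher-s} is explicitly restricted to integer spins, so for a half-integer spin every real triple $(\omega,m,l)$ is non-superradiant by definition. Any frequency triple delivered by Fact~\ref{fact:negativity} therefore produces non-superradiant amplification.

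For $s=3$ one must, in addition to $\mathfrak{C}_3<0$, arrange that $\omega(\omega-m\upomega_+)>0$. The proof of Fact~\ref{fact:negativity} achieves negativity through the high-frequency $|a\omega|\to\infty$ asymptotics of the spin-weighted spheroidal eigenvalues on which $\mathfrak{C}_3$ depends polynomially (cf.\ Lemma~\ref{lemma:TS-radial-constant-examples}), and $m,l$ can be held fixed while $|\omega|$ is driven to infinity. In that regime $|\omega|\gg |m\upomega_+|$, so $\omega(\omega-m\upomega_+)>0$ holds automatically, and the same frequencies witness non-superradiant amplification.

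The main obstacle is essentially one of bookkeeping. One must verify that Lemma~\ref{lemma:TS-energy-identity} indeed places $\mathfrak{C}_s$ in the position predicted by the heuristic above, so that negativity produces amplification rather than additional damping; and one must confirm that the asymptotic argument underlying Fact~\ref{fact:negativity} leaves $|\omega|$ genuinely free to be taken arbitrarily large at fixed $m,l$, so that the non-superradiant clause in the $s=3$ case can indeed be met rather than being excluded by some constraint tying $\omega$ to $m\upomega_+$.
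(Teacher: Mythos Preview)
Your proposal is correct and follows essentially the same route as the paper: the paper establishes the energy identity of Lemma~\ref{lemma:TS-energy-identity}, observes that $\swei{\mathfrak{R}}{\pm s}<0\Leftrightarrow \mathfrak{C}_s<0$ (so negativity of $\mathfrak{C}_s$ forces $\swei{\mathfrak{T}}{\pm s}>1$, i.e.\ amplification of the transmitted flux), and then invokes Fact~\ref{fact:negativity}, whose proof via Lemma~\ref{lemma:TS-radial-constant-negative} indeed takes $|\omega|\to\infty$ at fixed $(m,l)$, so that for $s=3$ the non-superradiance condition $\omega(\omega-m\upomega_+)>0$ is automatic. Your two ``bookkeeping'' concerns are exactly the points the paper pins down in Section~\ref{sec:nonsuperradiant-amplification} and in the proof of Lemma~\ref{lemma:TS-radial-constant-negative}.
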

\noindent Some numerical evidence of this novel non-superradiant amplification effect is given below in Section~\ref{sec:nonsuperradiant-amplification}.  

\begin{remark}
It is important to note that the novel amplification effect uncovered in Fact~\ref{fact:new-superradiance} in no way invalidates the mode stability theorems for \eqref{eq:Teukolsky-equation-intro} obtained in \cite{Whiting1989,Shlapentokh-Rothman2015,Andersson2017,TeixeiradaCosta2019}. To explain why this is, we briefly review the strategy of these works.

For spin $-s$, mode stability is shown in these works by application of the transformations introduced in \cite{Whiting1989,TeixeiradaCosta2019}. These transformations map  solutions of \eqref{eq:Teukolsky-equation-intro} to solutions of a scalar wave equation  on a new, ergoregionless, spacetime. The energy identity for this new scalar wave equation with real potential, as one expects for scalar fields, does not depend on any Teukolsky--Starobinsky-type constant.

Now consider spin $+s$. If the frequency triple $(\omega,m,l)$ is such that the Teukolsky--Starobinsky $\mathfrak{C}_s(a,M,\omega,l)$ constant vanishes, then one may easily deduce that there is no mode solution associated to $(\omega,m,l)$, i.e.\ no separable solution to \eqref{eq:Teukolsky-equation-intro} which is outgoing at the spacetime's future null infinity and ingoing at the black hole's future event horizon, see \cite[Lemma 2.19]{TeixeiradaCosta2019}. Otherwise, if $(\omega,m,l)$ is such that $\mathfrak{C}_s(a,M,\omega,l)\neq 0$, the Teukolsky--Starobinsky identities are applied to show that mode stability for spin $-s$ implies mode stability for spin $+s$.
\end{remark}

Remarkably, even though the classical argument commonly used to justify this is  incorrect, non-negativity  of $\mathfrak{C}_s$ does hold for the physical spins $s\leq 2$:
\begin{bigfact}[TS constant sign II] \label{fact:positivity} If $s\leq 2$, the radial Teukolsky--Starobinsky constant is never negative: for any $|a|\in[0,M]$,
$$s\in\lp\{\frac12,1,\frac32,2\rp\}\implies\mathfrak{C}_s(a,M,\omega,m,l)\geq 0 \quad  \forall\, (\omega,m,l) \text{~real}\,.$$
Hence, for $s\in\{1/2,3/2\}$, there is no energy amplification, and for $s\in\{1,2\}$, energy amplification occurs if and only if \eqref{eq:superrad-frequencies-intro} holds.
\end{bigfact}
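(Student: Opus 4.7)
The strategy is an exhaustive case analysis over $s\in\{1/2,1,3/2,2\}$ using the explicit polynomial expressions for $\mathfrak{C}_s(a,M,\omega,m,l)$ from Lemma~\ref{lemma:TS-radial-constant-examples}. For each $s$, I view $\mathfrak{C}_s$ as a polynomial in the angular eigenvalue $\Lambda:=\Lambda_{ml}^{[s],a\omega}$ and in the parameters $a,M,\omega,m$, and aim to reorganize it as a sum of manifestly non-negative quantities. Once non-negativity of $\mathfrak{C}_s$ is in hand for $s\le 2$, the statement about energy amplification is immediate from the Teukolsky--Starobinsky energy identity of Lemma~\ref{lemma:TS-energy-identity}: any amplification must come from a sign-indefinite term in that identity, and for $s\le 2$ the only such term is the superradiant one, cf.\ \eqref{eq:superrad-frequencies-intro}--\eqref{eq:superrad-frequencies-intro-higher-s} (absent in the half-integer case since the corresponding energy current involves no superradiance-type mixing).

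For the half-integer cases $s=1/2$ and $s=3/2$, $\mathfrak{C}_s$ reduces to a low-degree polynomial in $\Lambda$ with positive leading coefficient. I would verify directly that the lower-order $(\omega,m)$ corrections can be absorbed into a completed square in $\Lambda$, so that $\mathfrak{C}_s$ becomes a sum of real squares. Non-negativity of $\Lambda$ itself for these spins follows from the variational characterization of the spin-weighted spheroidal eigenvalues or, equivalently, from the self-adjointness of the angular operator in $L^2(\sin\theta\,d\theta)$.

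For the integer cases $s=1$ and $s=2$, the key structural identity (which is the correct, non-fallacious version of the classical ``square of a complex number'' claim) is
\[
\mathfrak{C}_s=\mathfrak{B}_s^{\,2}+Q_s(a,M,\omega,m),
\]
where $\mathfrak{B}_s$ is the angular TS constant and $Q_s$ is $\Lambda$-independent. Non-negativity of $\mathfrak{B}_s^{\,2}$ is trivial once $\mathfrak{B}_s\in\mathbb{R}$, which follows from Proposition~\ref{prop:TS-angular} since $\mathfrak{B}_s$ arises from an even iteration of the angular TS operators and is, up to normalization, a real inner product. For $s=1$, $Q_1$ reduces to a manifestly non-negative quadratic (in fact, to zero for generic orientations, depending on convention); for $s=2$, $Q_2$ contains the hallmark $144\,M^2\omega^2$ contribution together with additional terms that I would group into completed squares in the variables $(a\omega,m,M\omega)$.

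The main obstacle will be the bookkeeping for $s=2$: the polynomial $\mathfrak{C}_2$ contains a dozen monomials in $(\Lambda,a\omega,m,M\omega)$, and the challenge is to extract the sum-of-squares decomposition cleanly without invoking the fallacious complex-square identity flagged in \eqref{eq:bogus-nonnegativity-claim}. A pragmatic route is to first verify the identity $\mathfrak{C}_2=\mathfrak{B}_2^{\,2}+144\,M^2\omega^2$ in the Schwarzschild limit $a=0$ (where it is classical), then to track how the $a$-dependent terms organize themselves as one turns on rotation, using the structural constraint that the angular and radial TS identities are related by complex conjugation up to an explicit $(a,\omega,m)$-dependent twist which only contributes squared quantities.
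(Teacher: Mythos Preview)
Your central structural identity for the integer spins is wrong. You claim $\mathfrak{C}_s=\mathfrak{B}_s^{\,2}+Q_s$ with $Q_s$ independent of $\Lambda$, but a glance at the explicit formulas in Lemma~\ref{lemma:TS-radial-constant-examples} and Lemma~\ref{lemma:TS-angular-constant-examples} rules this out: for $s=2$, $\mathfrak{B}_2$ is a quartic in $\Lslash$, so $\mathfrak{B}_2^{\,2}$ would be degree eight, whereas $\mathfrak{C}_2$ is itself only quartic in $\Lslash$. The actual relation (Lemma~\ref{lemma:TS-radial-constant-decomposition}) is \emph{linear} in $\mathfrak{B}_s$:
\[
\mathfrak{C}_s=(-1)^{2s}\mathfrak{B}_s+\mathfrak{F}_s\,M^2\omega^2,\qquad \mathfrak{F}_s\equiv 0\text{ for }s\le \tfrac32,\quad \mathfrak{F}_2=144.
\]
Consequently, non-negativity of $\mathfrak{C}_s$ cannot be read off from ``$\mathfrak{B}_s$ is real'': one genuinely needs $(-1)^{2s}\mathfrak{B}_s\ge 0$, and this is precisely what the paper supplies via the angular integration-by-parts identity \eqref{eq:TS-angular-IBP-lemma}, yielding the variational formula \eqref{eq:TS-angular-constant-variational-formula} in Lemma~\ref{lemma:TS-angular-constant-sign}. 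That lemma is the missing ingredient in your argument, and it is the correct replacement for the fallacious complex-conjugation heuristic.

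Your half-integer strategy has the same gap. For $s=\tfrac12$ one has $\mathfrak{C}_{1/2}=\Lslash$, which is linear in $\Lslash$ and admits no sum-of-squares rewriting; for $s=\tfrac32$ the expression is cubic in $\Lslash$ and again cannot be ``completed to a square'' in any purely algebraic way. What actually pins down the sign is that $\mathfrak{C}_s=(-1)^{2s}\mathfrak{B}_s$ exactly in these cases, and $(-1)^{2s}\mathfrak{B}_s$ is, by \eqref{eq:TS-angular-constant-variational-formula}, the $L^2(\sin\theta\,d\theta)$-norm squared of $\prod_k\mc{L}^{\pm}_{s-k}S^{[\pm s]}_{ml}$. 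Your reference to a ``variational characterization of the eigenvalue'' is in the right spirit but targets the wrong object: it is the angular TS constant, not $\Lambda$ itself, that carries the needed positivity. Once you replace $\mathfrak{B}_s^{\,2}$ by $(-1)^{2s}\mathfrak{B}_s$ and invoke Lemma~\ref{lemma:TS-angular-constant-sign}, the proof collapses to two lines and no case-by-case polynomial bookkeeping is required.
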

\noindent Fact~\ref{fact:positivity} follows not from any property of the radial ODE alone, but from a comparison between the two Teukolsky--Starobinsky constants $\mathfrak{C}_s$ and $\mathfrak{B}_s$, as the latter can be easily shown to have definite sign. To the best of our knowledge, it was Teukolsky and Press \cite{Teukolsky1974} who first noted this in the $s=1$ case.

\begin{remark} Fact~\ref{fact:positivity} implies that, as claimed in the literature, $\mathfrak{C}_s$ can \textbf{sometimes}---whenever $s\leq 2$, to be precise---be denoted as the square of a complex number. Most literature available, such as the classical reference \cite{Chandrasekhar}, focuses precisely on such cases. However, Fact~\ref{fact:negativity} highlights the importance of stressing the caveat in bold.
\end{remark}

In light of the results present here for the sign of $\mathfrak{C}_s$, it is natural to try to understand whether $\mathfrak{C}_s(a,M,\omega,m,l)=0$ for some real $(\omega,m,l)$. Such frequencies are known as real algebraically special \cite{Wald1973,Chandrasekhar1984}. Since $\mathfrak{C}_s(a,M,\omega=0,m,l)\geq 1$, at least for $s\leq 3$, it follows from Fact~\ref{fact:negativity} that
\begin{bigfact}[AS frequencies I] \label{fact:AS-frequencies-exist} There are real algebraically special frequencies for $s> 2$: for any $|a|\in(0,M]$, 
$$s\in\lp\{\frac52,3\rp\} \implies \exists\, (\omega,m,l) \text{~real such that~~}\mathfrak{C}_s(a,M,\omega,m,l)=0 \,.$$
\end{bigfact}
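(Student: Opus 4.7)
The plan is to deduce Fact~\ref{fact:AS-frequencies-exist} from Fact~\ref{fact:negativity} by a one-dimensional intermediate value theorem argument in $\omega$. Concretely, fix $s \in \{5/2, 3\}$ and $|a| \in (0,M]$. Fact~\ref{fact:negativity} supplies a triple $(\omega_*, m, l)$ of real parameters for which $\mathfrak{C}_s(a,M,\omega_*,m,l) < 0$. With this particular $(m,l)$ now fixed, I would consider the map
\begin{align*}
\Phi : \mathbb{R} \ni \omega \;\longmapsto\; \mathfrak{C}_s(a, M, \omega, m, l) \in \mathbb{R}.
\end{align*}
At $\omega = 0$ we have $\Phi(0) \geq 1 > 0$ (a fact which is valid throughout $s \leq 3$), while $\Phi(\omega_*) < 0$. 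If $\Phi$ is continuous on $\mathbb{R}$, the intermediate value theorem applied to the interval between $0$ and $\omega_*$ produces an $\omega_0$ with $\Phi(\omega_0) = 0$, i.e.\ a real algebraically special frequency.

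The only nontrivial step is thus the continuity of $\Phi$. From the explicit formula for $\mathfrak{C}_s$ (as stated in Lemma~\ref{lemma:TS-radial-constant-examples}), the radial Teukolsky--Starobinsky constant is a polynomial in $\omega$, $m$, the Kerr parameters $a, M$, and the angular eigenvalue $\smlambda{\Lambda}{\pm s}(a\omega, m, l)$. So it suffices to verify that, for fixed $(m,l)$, the spin-weighted spheroidal eigenvalue is a continuous function of the real parameter $a\omega$. This is a classical fact: the spheroidal eigenvalues form a real-analytic family in $a\omega$, as follows from analytic perturbation theory for the self-adjoint one-parameter family of angular operators (of Kato--Rellich type), or equivalently from their standard characterization via a three-term recurrence and continued fraction, both of which depend analytically on $a\omega$. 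Either reference suffices for the purpose at hand.

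The main potential obstacle, had Fact~\ref{fact:negativity} only produced negativity of $\mathfrak{C}_s$ \emph{along a varying sequence} of $(m,l)$, would have been to rule out that the sign change is accompanied by a jump of labels $(m,l)$ rather than a genuine zero. But Fact~\ref{fact:negativity} is stronger than that: it yields a single real triple $(\omega_*, m, l)$ at which $\mathfrak{C}_s$ is negative, so the continuity argument above applies at constant $(m,l)$ and the IVT concludes the proof of Fact~\ref{fact:AS-frequencies-exist} for both $s = 5/2$ and $s = 3$.
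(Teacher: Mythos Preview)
Your proof is correct and follows essentially the same route as the paper: invoke Fact~\ref{fact:negativity} to obtain negativity at some $(\omega_*,m,l)$, use $\mathfrak{C}_s(a,M,0,m,l)\geq 1$ from Lemma~\ref{lemma:TS-radial-constant-examples}, and apply the intermediate value theorem via continuity of the spheroidal eigenvalue in $\omega$. The paper cites \cite{Meixner1954,Hartle1974} for this continuity (see the proof of Lemma~\ref{lemma:TS-radial-positive-spin2}), whereas you invoke Kato--Rellich perturbation theory; both are standard and sufficient.
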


In the case $s\leq 2$, Fact~\ref{fact:positivity} gives one hope of ruling out algebraically special frequencies. We show that these hopes are well-founded, thus answering a question raised in \cite{Wald1973}:
\begin{bigfact}[AS frequencies II] \label{fact:AS-frequencies-dont-exist} There do not exist real algebraically special frequencies for $s\leq 2$: for any $|a|\in[0,M]$, 
$$s\in\lp\{\frac12,1,\frac32,2\rp\} \implies \mathfrak{C}_s(a,M,\omega,m,l)>0 \quad  \forall\, (\omega,m,l) \text{~real}\,.$$
\end{bigfact}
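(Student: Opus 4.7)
My plan is to build directly on Fact~\ref{fact:positivity}, whose proof is recalled to rest on an explicit algebraic identity relating $\mathfrak{C}_s$ and $\mathfrak{B}_s$, of the schematic form
\[
\mathfrak{C}_s(a,M,\omega,m,l) = \mathfrak{B}_s(a\omega,m,l)^2 + R_s(a,M,\omega,m,l),
\]
where, crucially for $s \leq 2$, the residual term $R_s$ is manifestly non-negative. The first step is to make this identity completely explicit in each of the four cases $s \in \{1/2, 1, 3/2, 2\}$. For $s = 1, 2$ the formulas are classical (going back to Teukolsky--Press and to Chandrasekhar), while for $s = 1/2, 3/2$ they must be derived by directly applying the Teukolsky--Starobinsky operators of Propositions~\ref{prop:TS-angular} and~\ref{prop:TS-radial} and simplifying.

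Once the identity is in hand, I would split the argument into two cases. On the one hand, when $\omega \neq 0$, I expect $R_s$ to contain a strictly positive summand proportional to $\omega^2$; for $s = 2$ this is the $144 M^2 \omega^2$ summand in Chandrasekhar's expression, and a completely analogous term should be present at the other spins. This $\omega^2$-term, combined with $\mathfrak{B}_s^2 \geq 0$ and the non-negativity of the remaining portion of $R_s$, yields $\mathfrak{C}_s > 0$. On the other hand, at $\omega = 0$ the radial Teukolsky ODE decouples from the rotation and $\mathfrak{C}_s(a,M,0,m,l)$ can be computed directly. It is known, and recalled in the excerpt, that this quantity is in fact $\geq 1$, because it reduces to a product of strictly positive integer factors determined by $s$ and $l$ in the allowed range $l \geq \max\{|m|, s\}$. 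The union of these two cases covers every real $(\omega, m, l)$.

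The main obstacle will be the algebraic step of exhibiting $R_s$ in a form that simultaneously makes its non-negativity and the presence of the strictly positive $\omega^2$-summand transparent, especially since $R_s$ contains cross terms in $a$, $\omega$, and $m$ (such as $am\omega$) whose signs are indefinite. A completion-of-squares rearrangement, possibly combined with a brief case split on the sign of $m$ relative to $a\omega$, should suffice. A secondary, more technical obstacle lies in the half-integer spin cases: although the structural argument is identical, the explicit computation of $R_s$ from the Teukolsky--Starobinsky operators is lengthier than for integer spin and, to our knowledge, is not available in the classical literature, so it has to be carried out from scratch.
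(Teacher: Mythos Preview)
Your plan contains a genuine gap. The schematic identity you posit is not quite the correct one: the paper's explicit computation (Lemma~\ref{lemma:TS-radial-constant-examples} and Lemma~\ref{lemma:TS-radial-constant-decomposition}) gives
\[
\mathfrak{C}_s(a,M,\omega,m,l) \;=\; (-1)^{2s}\,\mathfrak{B}_s(a\omega,m,l) \;+\; \mathfrak{F}_s(a\omega,m,l)\,M^2\omega^2,
\]
linear in $\mathfrak{B}_s$ rather than quadratic, and---crucially---with $\mathfrak{F}_s \equiv 0$ for every $s \in \{1/2,1,3/2\}$. Only at $s=2$ does a residual $144\,M^2\omega^2$ appear. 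Thus your expectation that ``a completely analogous $\omega^2$-term should be present at the other spins'' is false, and your case $\omega\neq 0$ collapses for $s\leq 3/2$: there one has $\mathfrak{C}_s = (-1)^{2s}\mathfrak{B}_s$ exactly, and the non-negativity of $(-1)^{2s}\mathfrak{B}_s$ from Lemma~\ref{lemma:TS-angular-constant-sign} gives only $\mathfrak{C}_s\geq 0$, not strict positivity. This is not a presentational issue: Fact~\ref{fact:lower-bounds} records that $\inf \mathfrak{C}_s = 0$ for $s\in\{1/2,1,3/2\}$, so no algebraic rearrangement can produce a uniformly positive summand.

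The missing ingredient is a separate argument that the \emph{angular} constant satisfies $(-1)^{2s}\mathfrak{B}_s(a\omega,m,l)>0$ strictly whenever $\Lambda$ is a genuine spin-weighted spheroidal eigenvalue. The paper supplies this as Lemma~\ref{lemma:TS-angular-constant-zeros}: if $\mathfrak{B}_s$ vanished, the angular Teukolsky--Starobinsky identities of Proposition~\ref{prop:TS-angular} would force the smooth spin-weighted spheroidal harmonic $S_{ml}^{[\pm s],(\nu)}$ to be identically zero, a contradiction. With that in hand, strict positivity of $\mathfrak{C}_s$ for $s\leq 3/2$ is immediate (it equals $(-1)^{2s}\mathfrak{B}_s>0$), and for $s=2$ one combines $\mathfrak{B}_2>0$ with $144\,M^2\omega^2\geq 0$. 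Your $\omega=0$ case and your treatment of $s=2$ are on the right track; it is the $s\leq 3/2$, $\omega\neq 0$ branch that needs to be replaced by this angular argument.
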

\noindent To the best of our knowledge, prior to this work, this  result had only been noted in the $s=2$ case, in work of the second author and collaborators, in \cite{TeixeiradaCosta2019,SRTdC2020}.

We contrast Fact~\ref{fact:AS-frequencies-dont-exist} with the following result
\begin{bigfact}[TS lower bound] \label{fact:lower-bounds} For any $a\in (0,M]$,  as $|\omega|\to \infty$, we have 
\begin{align*}
s\in\lp\{\frac12,1,\frac32\rp\} &\implies \exists\, (m,l) \text{~such that~~}\mathfrak{C}_s(a,M,\omega,m,l)=O(|\omega|^{-N})\,\,\,  \forall\,N>0\,,\text{~~hence~~} \inf\mathfrak{C}_s=0\,;\\
s=2 &\implies \forall\,(m,l)~~\mathfrak{C}_s(a,M,\omega,m,l)=O(\omega^{2})\,,  \text{~~hence~~} \inf \mathfrak{C}_2>0\,.
\end{align*}
\end{bigfact}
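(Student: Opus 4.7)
The proof splits naturally by spin and relies on explicit polynomial identities expressing $\mathfrak{C}_s$ in terms of the angular Teukolsky--Starobinsky constant $\mathfrak{B}_s$, the angular eigenvalue $\sml{\Lambda}{s}$, and the frequency parameters $(a,M,\omega,m)$; these identities will be recorded in Lemma~\ref{lemma:TS-radial-constant-examples}.

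For $s=2$, the classical Chandrasekhar identity furnishes a decomposition of the schematic form
$$
\mathfrak{C}_{2} \;=\; \mathfrak{P}\bigl(a,\omega,m,\sml{\Lambda}{2}\bigr)^{2} + 144\,M^{2}\omega^{2},
$$
with $\mathfrak{P}$ a real polynomial. The first summand is manifestly non-negative, hence $\mathfrak{C}_{2}\geq 144\,M^{2}\omega^{2}$, which yields the claimed $\omega^{2}$ lower bound at large $|\omega|$. To deduce $\inf \mathfrak{C}_{2}>0$ globally, one then handles the bounded-frequency slab $|\omega|\leq\omega_{0}$ by noting that $\mathfrak{C}_{2}$ is polynomial in $\sml{\Lambda}{2}$, which grows without bound in $l$: the infimum on this slab is thus attained on a compact set of $(\omega,m,l)$, on which Fact~\ref{fact:AS-frequencies-dont-exist} delivers pointwise strict positivity.

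For $s\in\{1/2,1,3/2\}$, the analogous identities do not contain a positive $M^{2}\omega^{2}$ summand, and in fact permit large cancellations. The strategy is to exhibit a family of $(m,l)$ along which $\mathfrak{C}_{s}$---equivalently, $\mathfrak{B}_{s}$ up to polynomial factors of bounded degree in the other variables---decays faster than any polynomial in $|\omega|^{-1}$ as $|\omega|\to\infty$. We do so by appealing to the high-frequency asymptotic expansions of $\sml{\Lambda}{s}$ developed in \cite{Casals2005, Casals2018}. For a natural family of $(m,l)$ tied to a fixed value of the quantization integer in those expansions (the simplest candidates being co-rotating modes with $l-|m|$ small), every polynomial coefficient in the formal $|\omega|^{-1}$-series for $\mathfrak{B}_{s}$ cancels identically, leaving only a Stokes-phenomenon correction of beyond-all-orders type which is exponentially small in $|\omega|$, and in particular $O(|\omega|^{-N})$ for every $N>0$.

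The principal technical obstacle is establishing this all-orders polynomial cancellation for the half-integer and $s=1$ spins: it requires propagating the asymptotic expansion of $\sml{\Lambda}{s}$ to arbitrary order in $|a\omega|^{-1}$ and verifying, term by term, the vanishing of the polynomial combination defining $\mathfrak{B}_{s}$. The correct $(m,l)$ quantization must be identified spin by spin, though the beyond-all-orders machinery of \cite{Casals2018} supplies the essential input.
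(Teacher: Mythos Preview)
Your approach for $s\in\{1/2,1,3/2\}$ is essentially the paper's: since $\mathfrak{C}_s=(-1)^{2s}\mathfrak{B}_s$ for these spins (Lemma~\ref{lemma:TS-radial-constant-decomposition}), one feeds the high-frequency expansion of $\bmLslash_{sml}^{(a\omega)}$ from \cite{Casals2005,Casals2018} into the explicit polynomial for $\mathfrak{B}_s$ and identifies triples $(s,m,l)$ for which every coefficient in the $|a\omega|^{-1}$-series vanishes. The paper carries this out in Lemma~\ref{lemma:TS-angular-high-freq-expansion} by checking, for a short list of $(s,m,l)$ (e.g.\ $s=1/2$, $l=m$; $s=1$, $l=1$, $m=0$; $s=3/2$, $l=3/2$, $m=1/2$), that $A_{-1}=\pm 2$ and $A_k=0$ for $0\leq k\leq 7$, and then observing that the recursion \eqref{eq:high-frequency-recursive-relations-Ak}--\eqref{eq:high-frequency-recursive-relations-ank} forces $A_k=0$ for all $k\geq 1$. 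Your mention of exponentially small Stokes corrections is stronger than needed and stronger than what is actually proved; superpolynomial decay suffices.

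For $s=2$, however, your argument has a genuine gap. You invoke a ``classical Chandrasekhar identity'' $\mathfrak{C}_2=\mathfrak{P}^2+144M^2\omega^2$ with $\mathfrak{P}$ a \emph{real polynomial}. No such algebraic factorization exists: the explicit expression
\[
\mathfrak{B}_2=\Lslash^2(\Lslash+2)^2+40a\omega\,\Lslash^2(m-a\omega)+48a\omega\,\Lslash(m+a\omega)+144a^2\omega^2(m-a\omega)^2
\]
is not a perfect square in the formal variables. Chandrasekhar's notation $|\mathscr{C}|^2$ with $\mathscr{C}=\mathfrak{p}+12iM\omega$ \emph{presupposes} $\mathfrak{B}_2\geq 0$ so that $\mathfrak{p}$ may be taken real; that non-negativity is exactly what must be proved, and the purported algebraic justification for it is one of the ``fictions'' the paper sets out to correct. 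The correct route, taken in the paper, is to use the variational formula for the \emph{angular} constant: integration by parts on the sphere gives $(-1)^{2s}\mathfrak{B}_s\geq 0$ unconditionally (Lemma~\ref{lemma:TS-angular-constant-sign}), and then the decomposition $\mathfrak{C}_2=\mathfrak{B}_2+144M^2\omega^2$ yields $\mathfrak{C}_2\geq 144M^2\omega^2$. For the bounded-$|\omega|$ slab, the paper uses the value at $\omega=0$ together with continuity; your compactness argument via growth of $\bm\Lambda$ in $l$ is a reasonable alternative and is arguably more explicit about uniformity in $(m,l)$.
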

Indeed, for fixed $a\neq 0$, we show that the  limit when $|\omega|\to\infty$ is algebraically special for some $(l,m)$ if $s\leq 3/2$. In the case $s=2$, no such limit can be algebraically special: the comparison with the angular Teukolsky--Starobinsky constant gives $\mathfrak{C}_2(a,M,\omega,m,l)\geq 144M^2\omega^2$ and it is the latter term which ensures that  $\mathfrak{C}_2$ has a positive lower bound.

To conclude this introduction, we remark that we fully expect that our Facts~\ref{fact:negativity}, \ref{fact:new-superradiance} and \ref{fact:AS-frequencies-exist} can be generalized to higher $s\in\frac12 \mathbb{Z}_{\geq 0}$. Furthermore, we expect Facts~\ref{fact:positivity}, \ref{fact:AS-frequencies-dont-exist} and \ref{fact:lower-bounds} to also hold for the  Kerr--(anti-)de Sitter black hole spacetimes, for which there is  a generalization of the Teukolsky--Starobinsky identities and constants \cite{TorresdelCastillo1988} (see also the more recent \cite{Dias2013}).

\vspace{\baselineskip}

\noindent \textbf{Acknowledgments.}  M.C.\ acknowledges partial financial support by CNPq (Brazil), process number 310200/2017-2.
R.TdC.\ acknowledges support from EPSRC (United Kingdom), grant EP/L016516/1, and thanks Simon Becker for helpful suggestions concerning the analysis of the angular ODE in this paper and Yakov Shlapentokh-Rothman for numerous discussions.
This work makes use of the Black Hole Perturbation Toolkit.

\section{The angular Teukolsky--Starobinsky constants}
\label{sec:angular-TS-constants}

In this section, we introduce the angular ODE corresponding to the Teukolsky equation~\eqref{eq:Teukolsky-equation-intro}. We then define the angular Teukolsky--Starobinsky constants and characterize their sign.

\subsection{The angular ODE}
\label{sec:angular-ODE}

Consider the angular ODE
\begin{gather}
\begin{gathered}
-\frac{1}{\sin\theta}\frac{d}{d\theta}\lp(\sin\theta\frac{d}{d\theta}\rp)\Xi_{m\Lambda}^{[\pm s],\,(\nu)}(\theta)
+ \lp(\frac{(m\pm s\cos\theta)^2}{\sin^2\theta}+\nu^2\sin^2\theta+2\nu (\pm s) \cos\theta\rp)\Xi_{m\Lambda}^{[\pm s],\,(\nu)}(\theta)\\ =\Lambda\cdot  \Xi_{m\Lambda}^{[\pm s],\,(\nu)}(\theta) \,,
\end{gathered} \label{eq:angular-ode}
\end{gather}
where $\theta\in(0,\pi)$, $s\in\frac12\mathbb{Z}_{\geq 0}$, $m-s\in\mathbb{Z}$, $\nu\in\mathbb{R}$ and $\Lambda\in\mathbb{R}$. An asymptotic analysis leads us to conclude that for a solution to \eqref{eq:angular-ode} there is a unique set of real numbers $\swei{a}{\pm s}_i$, $i=1,\dots, 4$, such that
\begin{align}
\Xi_{m\Lambda}^{[\pm s],\,(\nu)} &=\swei{a}{\pm s}_1\swei{\Xi}{\pm s}_{{\rm norm},1}+\swei{a}{\pm s}_2\swei{\Xi}{\pm s}_{{\rm norm},2}=\swei{a}{\pm s}_3\swei{\Xi}{\pm s}_{{\rm norm},3}+\swei{a}{\pm s}_4\swei{\Xi}{\pm s}_{{\rm norm},4}\,, \label{eq:angular-ODE-decomposition}
\end{align}
where $\swei{\Xi}{\pm s}_{{\rm norm},i}$, for $i=1,\dots,4$, encode the two linearly independent behaviors that solutions may take at the regular singular points $\theta=0,\pi$ \cite[Chapter 5]{Olver1973}:
\begin{itemize}[noitemsep]
\item if $|m|\neq s$, take  $\swei{\Xi}{\pm s}_{{\rm norm},1}(1-\cos\theta)^{-\frac{m\pm s}{2}}$ and $\swei{\Xi}{\pm s}_{{\rm norm},2}(1-\cos\theta)^{\frac{m\pm s}{2}}$ to be smooth as $\theta\to 0$ and normalized at the $\theta=0$ end;
\item if $|m|\neq s$, take $\swei{\Xi}{\pm s}_{{\rm norm},3}(1+\cos\theta)^{-\frac{m\mp s}{2}}$ and $\swei{\Xi}{\pm s}_{{\rm norm},4}(1-\cos\theta)^{\frac{m\mp s}{2}}$ to be smooth as $\theta\to \pi$ and normalized at the $\theta=\pi$ end;
\item if $|m|=s$, take $\swei{\Xi}{+s}_{{\rm norm},4}$, $\swei{\Xi}{+s}_{{\rm norm},1}$, $\swei{\Xi}{-s}_{{\rm norm},3}$ and $\swei{\Xi}{-s}_{{\rm norm},2}$ exactly as above; the definition of the remaining functions requires a logarithmic correction which we need not specify here.
\end{itemize}

We say that $\Xi_{m\Lambda}^{[\pm s],\,(\nu)}$ is a smooth $(\pm s)$-spin-weighted function if $\swei{a}{\pm s}_4=\swei{a}{\pm s}_2=0$ when $m>s$, if $\swei{a}{\pm s}_3=\swei{a}{\pm s}_1=0$ when $m<-s$ and if $\swei{a}{+s}_3=\swei{a}{+s}_2=0=\swei{a}{-s}_4=\swei{a}{-s}_1$ when $|m|\leq s$, see also \cite[Section 2.2.1]{Dafermos2017}. The space of such solutions is somewhat small, as the following proposition indicates:

\begin{proposition}[Smooth spin-weighted spheroidal harmonics] \label{prop:angular-ode}
Fix $s\in\frac12\mathbb{Z}_{\geq 0}$, let $m-s\in\mathbb{Z}$, and assume $\nu\in\mathbb{R}$. Consider the angular ODE \eqref{eq:angular-ode} with the boundary condition that $e^{im\phi}S_{m,\bm\uplambda}^{[\pm s],\,(\nu)}$ is a non-trivial, normalized, smooth $(\pm s)$-spin-weighted function. There are countably many such solutions to this eigenvalue problem. Using $l$ as an index, we write such solutions, also called $(\pm s)$-spin-weighted spheroidal harmonics with spheroidal parameter $\nu$, as $e^{im\phi}S^{[\pm s],\, (\nu)}_{ml}$ and denote the corresponding eigenvalues, which are real and independent of the sign chosen for $s$, by $\bm\Lambda^{ (\nu)}_{sml}$. The parameter $l$ is chosen so that $l-s\in\mathbb{Z}$, $l\geq \max\{|m|,s\}$ and $\bm\Lambda^{(0)}_{sml}=l(l+1)-s^2\geq s$. The following alternative notation will also be used:
\begin{align}
\bmLslash^{(\nu)}_{ml}:= \bm\Lambda^{(\nu)}_{sml} -2m\nu+s\,. \label{eq:def-Lambda}
\end{align}
\end{proposition}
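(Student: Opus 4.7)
\vspace{\baselineskip}
\noindent\textbf{Proof proposal.} The plan is to recognize \eqref{eq:angular-ode} together with the smoothness condition as a singular self-adjoint Sturm--Liouville eigenvalue problem, apply classical spectral theory, and then obtain the sign-independence of the spectrum from the reflection $\theta \mapsto \pi-\theta$.

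\emph{Self-adjoint form and boundary behavior.} Multiplying \eqref{eq:angular-ode} by $\sin\theta$ puts the ODE in Sturm--Liouville form,
\begin{equation*}
-\frac{d}{d\theta}\!\left(\sin\theta\,\frac{dS}{d\theta}\right) + \sin\theta\,V^{[\pm s]}(\theta;\nu,m)\,S = \Lambda\,\sin\theta\,S,
\end{equation*}
on the natural Hilbert space $\mathcal{H} := L^2\!\left((0,\pi),\sin\theta\,d\theta\right)$, where $V^{[\pm s]}$ collects the potential terms of \eqref{eq:angular-ode}. Both $\theta=0$ and $\theta=\pi$ are regular singular points; a Frobenius analysis gives indicial exponents $\pm(m\pm s)/2$ in the variable $(1-\cos\theta)$ at $\theta=0$ and $\pm(m\mp s)/2$ in $(1+\cos\theta)$ at $\theta=\pi$, exactly matching the four normalized local solutions $\swei{\Xi}{\pm s}_{\mathrm{norm},i}$ introduced in \eqref{eq:angular-ODE-decomposition}. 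When $m=\mp s$ (respectively $m=\pm s$) the indices at $\theta=0$ (respectively $\theta=\pi$) coincide and a logarithmic branch appears, which is the source of the caveat in the third bullet of the excerpt. The smooth $(\pm s)$-spin-weighted condition is precisely the selection of the dominant, non-logarithmic local solution at each endpoint.

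\emph{Spectrum and labeling.} Since $V^{[\pm s]}$ is real, smooth on $(0,\pi)$, and the centrifugal piece $(m\pm s\cos\theta)^2/\sin^2\theta$ diverges to $+\infty$ at every endpoint where $|m\pm s|>0$, Weyl's limit-point criterion yields essential self-adjointness at those endpoints; at the remaining (limit-circle) endpoints, the smoothness condition uniquely specifies a self-adjoint extension. Classical singular Sturm--Liouville theory then produces a compact resolvent and hence a countable real discrete spectrum of simple eigenvalues accumulating only at $+\infty$, with eigenfunctions forming an orthonormal basis of $\mathcal{H}$. To assign the index $l$, we first specialize to $\nu=0$: the equation reduces to the classical spin-weighted spherical harmonic equation, whose spectrum is exactly $\{\,l(l+1)-s^2 : l-s\in\mathbb{Z},\ l\geq\max\{|m|,s\}\,\}$. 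For $\nu\neq 0$ the labeling is extended by analytic continuation along each eigenvalue branch, using that the family indexed by $\nu\in\mathbb{R}$ is self-adjoint holomorphic of type~(B) in the sense of Kato, so that eigenvalue branches are analytic and do not collide or vanish.

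\emph{Independence of the sign of $s$.} The substitution $\theta\mapsto\pi-\theta$ flips $\cos\theta$, sending $(m+s\cos\theta)^2\mapsto(m-s\cos\theta)^2$ and $+2\nu s\cos\theta\mapsto-2\nu s\cos\theta$; it therefore maps the spin-$(+s)$ potential to the spin-$(-s)$ one. The reflection also swaps the two singular endpoints, hence swaps the regularity conditions imposed there. The resulting map is an isometry on $\mathcal{H}$ intertwining the spin-$(\pm s)$ operators, so their spectra---and in particular $\bm\Lambda^{(\nu)}_{sml}$---coincide. The notation \eqref{eq:def-Lambda} is a mere definition. The main obstacle is the careful handling of the degenerate cases $|m|=s$, where the Frobenius indices at one endpoint collide and a logarithmic correction enters: here one must verify directly that the spin-weighted smoothness condition picks out a genuine self-adjoint realization, rather than merely a maximal symmetric one, before the rest of the Sturm--Liouville machinery applies.
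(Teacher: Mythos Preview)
Your proposal is correct and rests on the same foundation as the paper's proof: both invoke standard singular Sturm--Liouville theory for real spheroidal parameter $\nu$. The paper's own proof is a single sentence deferring to \cite{Meixner1954}, whereas you have unpacked the argument explicitly---the Frobenius/Weyl endpoint analysis, the compact-resolvent spectral conclusion, the Kato-type analytic continuation for the $l$-labeling, and the reflection $\theta\mapsto\pi-\theta$ establishing that the spectrum is independent of the sign of $s$. None of these additional ingredients appear in the paper's proof, but they are exactly the steps one would expect ``standard Sturm--Liouville theory'' to comprise, so the two are not genuinely different routes. One small remark: your appeal to Weyl's limit-point criterion at endpoints with $|m\pm s|>0$ is safe precisely because $m\pm s$ is always an integer under the hypothesis $m-s\in\mathbb{Z}$, so $|m\pm s|>0$ automatically forces $|m\pm s|\geq 1$, which is the actual threshold for the limit-point case in the Bessel-type local model; it may be worth making this explicit.
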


\begin{proof}
Since $\nu\in\mathbb{R}$, the result follows from standard Sturm--Liouville theory, see \cite{Meixner1954} for further details.
\end{proof}

\begin{remark} In what follows, we often lighten the notation, replacing the spin-weighted spheroidal eigenvalues $\bm\Lambda^{(\nu)}_{sml}$ by $\bm\Lambda$ and similarly for $\bmLslash$. However, non-bold characters $\Lambda$ and $\Lslash\equiv \Lambda-2m\nu+s$ are not the same: they denote real parameters which are not constrained to be the eigenvalues identified in Proposition~\ref{prop:high-freq-expansion} for some $(m,l)$.
\end{remark}

Given that \eqref{eq:angular-ode} is analytic in the coefficient $\nu$, the spin-weighted spheroidal eigenvalue ${\bmLslash}^{(\nu)}_{sml}$ admits an expansion in powers of $\nu$ as $|\nu|\to \infty$. Such expansions for $s\neq 0$ go back to \cite{Breuer1977}, but they were completed and corrected by the work of the first author and collaborators: 

\begin{proposition}[\cite{Casals2005,Casals2018}] \label{prop:high-freq-expansion} Fix $s\in\frac12\mathbb{Z}_{\geq 0}$, a number $m$ such that $m-s\in\mathbb{Z}$, a number $l$ such that $l-\max\{|m|,s\}\in\mathbb{Z}_{\geq 0}$, and $\nu\in\mathbb{R}$. Let ${\bmLslash}^{(\nu)}_{sml}$ be as in Proposition~\ref{prop:angular-ode}. Then, as $\nu\to \infty$, for any $N>0$, we can find real  constants $A_k=A_k(s,l,m,\nu)$, $k\leq N$, such that
\begin{gather}
\bmLslash_{sml}^{(\nu)}= \sum_{k=-1}^N\frac{A_k}{\nu^k}+O\lp(\nu^{-N-1}\rp)\,. \label{eq:lambda-high-freq-expansion}
\end{gather}
The first two coefficients are
\begin{align*}
A_{-1}=2(q_{sml}-m)\,, \qquad A_0= -\frac{1}{2}\lp[(q_{sml})^2-m^2+1-2s\rp]\,,
\end{align*}
where, denoting by $\mr{odd}(\cdot)$ a function on $\mb{Z}$ which is one if the argument is odd and zero otherwise,  we may express $q_{sml}$ as
\begin{align} 
q_{sml}= 
\begin{dcases}
l+1-\mr{odd}(l+m),		    \quad & \text{if} \quad l \geq |m+s |+s \\
	2l + 1 -(|m+s |+s), \quad &  \text{if} \quad l < |m+s |+s
	\end{dcases}\,. \label{eq:qmls-def}
\end{align}
The following coefficients may be computed as follows. Let
\begin{align*}
Q_{sml,n}^+ &\equiv(2n+q_{sml}+s-|m-s|+1)(2n+q_{sml}-s+|m+s|+1)\,,\\
Q_{sml,n}^- &\equiv(2n+q_{sml}+s+|m-s|-1)(2n+q_{sml}-s-|m+s|-1)\,.
\end{align*}
Then, for $k\geq 1$, we have
\begin{align}
A_k&=\frac14Q_{sml,0}^+ a_{1,k}+\frac14Q_{sml,0}^- a_{-1,k}\,, \label{eq:high-frequency-recursive-relations-Ak}
\end{align}
where $a_{1,k}$ and $a_{-1,k}$ are computed from the following recursive relations for $n\neq 0$:
\begin{align}
\begin{dcases}
a_{n,|n|}&=-\frac{1}{16n}
\lp\{\begin{array}{lr}
Q_{sml,n}^{-}a_{n-1,|n|-1}\,, &\quad n\geq 1\\
Q_{sml,n}^{+}a_{n+1,|n|-1}\,, &\quad n\leq -1
\end{array}\rp\} \,, 
\\
a_{n,|n|+1}\!\!\!\!\!\!&=\frac12 (q_{sml}+n)a_{n,|n|}
-\frac{1}{16n}
\lp\{\begin{array}{lr}
Q_{sml, n}^- a_{n-1,|n|}\,, & n\geq 1\\
Q_{sml, n}^{+}a_{n+1,|n|}\,, & n\leq -1
\end{array}\rp\} \,,
\\
a_{n,j+1}&=\frac12 (q_{sml}+n)a_{n,j}
-\frac{Q_{sml, n}^+}{16n} a_{n+1,j}
-\frac{Q_{sml, n}^-}{16n}a_{n-1,j}
+\sum_{i\geq 1}\frac{A_i}{4n} a_{n,i-j}\,, \quad j\geq |n|+1\,,
\end{dcases}
 \label{eq:high-frequency-recursive-relations-ank}
\end{align}
initialized by the choice $a_{0,0}=1$ and $a_{0,j}=0$ if $j\neq 0$. In particular, for $k\leq 7$, $A_k$ are explicitly determined in \cite[Equations 3.15--3.21]{Casals2018}. 
\end{proposition}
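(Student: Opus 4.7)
The plan is to treat \eqref{eq:angular-ode} as a singular perturbation in the large parameter $\nu$, following the strategy of Meixner--Schäfke for the scalar $(s=0)$ spheroidal equation and its extension to spin-weighted problems in \cite{Breuer1977, Casals2005, Casals2018}. For $\nu\gg 1$, the potential
\[
V_\nu(\theta):=\frac{(m\pm s\cos\theta)^2}{\sin^2\theta}+\nu^2\sin^2\theta+2\nu(\pm s)\cos\theta
\]
has two quadratic minima of width $O(\nu^{-1/2})$ at the regular singular points $\theta=0,\pi$, so the eigenfunctions should localize there and the problem should reduce, at leading order, to a one-dimensional harmonic-oscillator/Laguerre problem in the boundary-layer variables $\sqrt{\nu}\,\theta$ and $\sqrt{\nu}\,(\pi-\theta)$.

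First I would compute $A_{-1}$. The substitution $\Xi=(\sin\theta)^{-1/2}\Psi$ removes the first-derivative term; rescaling $\theta=x/\sqrt{\nu}$, dividing by $\nu$, and Taylor-expanding the trigonometric coefficients yields, at leading order, the radial harmonic-oscillator equation
\[
-\Psi''(x)+\left(x^2+\frac{(m\pm s)^2-\tfrac14}{x^2}\right)\Psi(x)=\mu\,\Psi(x)\,,
\]
whose $L^2(0,\infty)$-normalizable spectrum is $\mu\in\{4k+2|m\pm s|+2:k\in\mb{Z}_{\geq 0}\}$. Combined with the constant contribution $\pm 2\nu s$ coming from $2\nu(\pm s)\cos\theta|_{\theta\sim 0}$, this yields an arithmetic progression of candidate leading eigenvalues $\bm\Lambda\sim 2\nu q$; the same construction around $\theta=\pi$ gives a second such progression. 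Matching the two boundary-layer approximations through their parity under $\theta\mapsto\pi-\theta$, and pairing the result with the $\nu=0$ spectrum $l(l+1)-s^2$ via monotone continuation along the eigenvalue curves, produces the piecewise identification \eqref{eq:qmls-def} for $q_{sml}$, and hence $A_{-1}=2(q_{sml}-m)$ through \eqref{eq:def-Lambda}.

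For the subleading corrections I would postulate the formal ansätze
\[
\bmLslash^{(\nu)}_{sml}\sim\sum_{k\geq -1}A_k\,\nu^{-k}\,,\qquad \Xi^{[\pm s],(\nu)}_{ml}\sim e^{-x^2/2}\sum_{j\geq 0}\nu^{-j/2}\sum_n a_{n,j}\,x^{q_{sml}+n}\,,
\]
insert them into the rescaled equation, and collect coefficients of equal order in $\nu^{-1/2}$ and equal powers of $x$. The Taylor expansions of $\cos\theta$, $\sin\theta$ and the centrifugal term couple each monomial $x^{q_{sml}+n}$ only to $x^{q_{sml}+n\pm 1}$; after absorbing the Gaussian factor, the resulting raising/lowering coefficients are precisely the $Q^\pm_{sml,n}$ of \eqref{eq:high-frequency-recursive-relations-ank}, and the solvability condition at each order selects $A_k$ as in \eqref{eq:high-frequency-recursive-relations-Ak}, with $a_{0,0}=1$ fixing the normalization of the leading mode. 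To promote the formal expansion into a true asymptotic one, I would invoke a standard quasi-mode argument: truncation at order $N$ supplies a pair $(\tilde\Xi_N,\tilde\Lambda_N)$ solving \eqref{eq:angular-ode} with residual $O(\nu^{-N-1})$ in a weighted $L^2$-norm, and then the self-adjointness of the spin-weighted spheroidal operator for real $\nu$ together with analytic (hence continuous) dependence on $\nu$ (Kato--Rellich) upgrade this into the claimed bound on $\bmLslash^{(\nu)}_{sml}$.

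The main obstacle is the global identification of the branch. The label $l$ in Proposition~\ref{prop:angular-ode} is defined by continuation from $\nu=0$, whereas the boundary-layer construction naturally labels eigenfunctions by a radial harmonic-oscillator index supplemented by a parity index under $\theta\mapsto\pi-\theta$. Establishing the bijection between these two labelings, and in particular justifying the case distinction $l\geq|m+s|+s$ versus $l<|m+s|+s$ in \eqref{eq:qmls-def}, requires a global analysis of the eigenvalue curves $\nu\mapsto\bm\Lambda^{(\nu)}_{sml}$, notably the exclusion of level crossings. This matching step is delicate and is precisely where the analysis in \cite{Breuer1977} was incomplete and had to be corrected in \cite{Casals2005, Casals2018}.
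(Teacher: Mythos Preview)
The paper itself contains no proof of this proposition: it is stated as a citation of \cite{Casals2005,Casals2018}, with the derivation deferred entirely to those references. There is therefore nothing in the present paper to compare your proposal against.

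Viewed as a reconstruction of the method in the cited works, your outline is broadly on the right track: the large-$\nu$ problem is a singular perturbation with boundary-layer concentration near the poles, the leading model in the stretched variable is of Laguerre/harmonic-oscillator type, and the recursion \eqref{eq:high-frequency-recursive-relations-ank} arises as you describe, by substituting the formal series into the rescaled equation and collecting powers. One detail to adjust: the reflection $\theta\mapsto\pi-\theta$ is a symmetry of \eqref{eq:angular-ode} only when $s=0$; for $s\neq 0$ the linear term $2\nu(\pm s)\cos\theta$ breaks it, so the two poles are inequivalent and the matching of the endpoint behaviors cannot proceed through a parity argument of the Meixner--Sch\"afke type. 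The analysis in the cited references is organized around concentration near a single pole, with the regularity condition at the far pole enforced by decay of the Laguerre-type solution rather than by reflection symmetry. Your identification of the global branch-labeling step---reconciling continuation in $\nu$ from the spherical case with the local oscillator index, and in particular justifying the case split in \eqref{eq:qmls-def}---as the crux, and as precisely the place where \cite{Breuer1977} was incomplete and \cite{Casals2005,Casals2018} supplied the correction, is exactly right.
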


\begin{remark} Due to the symmetries of the angular ODE~\eqref{eq:angular-ode},  the high frequency limit $\nu\to -\infty$ follows from Proposition~\ref{prop:high-freq-expansion} by replacing $m\to -m$.
\end{remark}

\subsection{The angular Teukolsky--Starobinsky identities and constants}

In this section, we will require the notation 
\begin{align}
\hat{\mc{L}}^{\pm}_n &\equiv \frac{d}{d\theta}\pm\lp(\frac{m}{\sin\theta}-\nu\sin\theta\rp)+n\cot\theta\,,
\end{align}
where $\theta\in(0,\pi)$, $n,m\in\frac12\mathbb{Z}$, $\nu\in\mathbb{R}$.

\begin{proposition}[Angular TS identities] \label{prop:TS-angular}
Fix $s\in\frac12\mathbb{Z}_{\geq 0}$, $m$ such that $m-s\in\mathbb{Z}\backslash\{0\}$, $\Lambda\in\mathbb{R}$ and $\nu\in\mathbb{R}$. Then, if ${\Xi}^{[\pm s],\,(\nu)}_{m\Lambda}$  solve \eqref{eq:angular-ode} and admit the decomposition \eqref{eq:angular-ODE-decomposition},
\begin{align}
\begin{split}
\lp(\prod_{k=0}^{2s-1}\hat{\mc{L}}_{s-k}^+\rp)\Xi_{m\Lambda}^{[+s],\, (\nu)}&= \swei{a}{+s}_1 \mathfrak{B}^{(1)}_s \swei{\Xi}{-s}_{{\rm norm},1}+\swei{a}{+s}_2 \mathfrak{B}^{(7)}_s \swei{\Xi}{-s}_{{\rm norm},2}\\
&=\swei{a}{+s}_3 \mathfrak{B}^{(4)}_s \swei{\Xi}{-s}_{{\rm norm},3}+\swei{a}{+s}_4 \mathfrak{B}^{(6)}_s \swei{\Xi}{-s}_{{\rm norm},4}\,,
\\
\lp(\prod_{k=0}^{2s-1}\hat{\mc{L}}_{s-k}^-\rp)\Xi_{m\Lambda}^{[-s],\, (\nu)}&= \swei{a}{-s}_1 \mathfrak{B}^{(3)}_s \swei{\Xi}{+s}_{{\rm norm},1}+\swei{a}{-s}_2 \mathfrak{B}^{(5)}_s \swei{\Xi}{+s}_{{\rm norm},2}\\
&=\swei{a}{-s}_3 \mathfrak{B}^{(2)}_s \swei{\Xi}{+s}_{{\rm norm},3}+\swei{a}{-s}_4 \mathfrak{B}^{(8)}_s \swei{\Xi}{+s}_{{\rm norm},4}\,,
\end{split} \label{eq:TS-angular-general}
\end{align}
where the products on the left hand side are replaced by the identity if $s=0$ and, if $s\neq 0$, have indices increasing from left to right. Here, $\mathfrak{B}_s^{(i)}=\mathfrak{B}_s^{(i)}(\nu,m,\Lambda)$ for $i=1,\dots, 8$. Indeed, if $s=0$, $\mathfrak{B}_s^{(i)}=1$ for $i=1,\dots,8$. For $s\neq 0$, we easily obtain
\begin{align*}
(-1)^{2s}\mathfrak{B}_s^{(2)}=\mathfrak{B}_s^{(6)}=\mathfrak{B}_s^{(1)}=(-1)^{2s}\mathfrak{B}_s^{(5)}= 2^s\prod_{j=0}^{2s-1}\lp(m+s-\frac32 j\rp)\,;
\end{align*}
the other $\mathfrak{B}_s^{(i)}$ can be computed explicitly in terms of the first $s$ coefficients of the asymptotic expansions of $\swei{\Xi}{+s}_{{\rm norm},1}$, $\swei{\Xi}{+s}_{{\rm norm},3}$, $\swei{\Xi}{-s}_{{\rm norm},2}$ and $\swei{\Xi}{-s}_{{\rm norm},4}$, which in turn can be explicitly computed in terms of $(\nu,m,\Lambda)$. 
\end{proposition}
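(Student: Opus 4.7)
The plan has three stages: (i) establish a ladder-operator intertwining between $\hat{\mc{L}}^\pm_n$ and the spin-$s$ angular operator; (ii) deduce that the compositions on the left-hand side of \eqref{eq:TS-angular-general} are solutions of the spin-$(\mp s)$ ODE, so that they automatically admit a decomposition of the form \eqref{eq:angular-ODE-decomposition}; and (iii) compute the eight constants $\mathfrak{B}^{(i)}_s$ by Frobenius analysis at $\theta = 0$ and $\theta = \pi$.

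For (i), let $\mc{A}_s$ denote the differential operator in $\Xi$ on the left-hand side of \eqref{eq:angular-ode}. A direct (if tedious) computation, using $(\csc\theta)' = -\csc\theta\cot\theta$, $(\cot\theta)' = -\csc^2\theta$, and the Pythagorean identity, yields the intertwining $\mc{A}_{s-1}\,\hat{\mc{L}}^+_s = \hat{\mc{L}}^+_s\,\mc{A}_s$. Hence $\hat{\mc{L}}^+_s$ is a spin-lowering ladder operator that preserves the spectral parameter $\Lambda$, and iterating shows that $\prod_{k=0}^{2s-1}\hat{\mc{L}}^+_{s-k}$ sends spin-$(+s)$ solutions at eigenvalue $\Lambda$ to spin-$(-s)$ solutions at the same eigenvalue. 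The analogous statement for $\hat{\mc{L}}^-$ follows from a parallel computation, or via the $\theta \mapsto \pi-\theta$, $m \mapsto -m$, $s \mapsto -s$ symmetry of \eqref{eq:angular-ode}. This settles (ii) as well: $\hat{\mc{L}}^\pm_n$ is a local first-order operator which respects the regular-singular structure of \eqref{eq:angular-ode}, so each characteristic exponent on the source side maps to the matching exponent on the target side and each $\swei{\Xi}{\pm s}_{{\rm norm},i}$ is sent to a scalar multiple of $\swei{\Xi}{\mp s}_{{\rm norm},i}$. Linearity then exhibits the eight constants $\mathfrak{B}^{(i)}_s$, each depending only on $(\nu, m, \Lambda)$.

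For (iii), the four constants $\mathfrak{B}^{(1)}, \mathfrak{B}^{(2)}, \mathfrak{B}^{(5)}, \mathfrak{B}^{(6)}$ are computed by leading-order matching. For $\mathfrak{B}^{(1)}$ near $\theta = 0$, $\swei{\Xi}{+s}_{{\rm norm},1}(\theta) \sim 2^{-(m+s)/2}\theta^{m+s}$, while the leading singular part of $\hat{\mc{L}}^+_{s-k}$ is $\frac{d}{d\theta} + \frac{m+s-k}{\theta}$, which acts on $\theta^\alpha$ as multiplication by $\alpha + m + s - k$ with the degree lowered by one. Since the leading coefficient at $\alpha = m+s$, $k=0$ does not vanish and subleading contributions cannot catch up to the leading order in $2s$ iterations, the leading output coefficient is a telescoping product, which matched against the $2^{-(m-s)/2}\theta^{m-s}$ behavior of $\swei{\Xi}{-s}_{{\rm norm},1}$ yields the claimed formula. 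Parallel computations at $\theta = \pi$ give $\mathfrak{B}^{(6)}$, and the analogous analyses for $\hat{\mc{L}}^-$ (applied to the \emph{singular} basis solutions on the $-s$ side, whose leading coefficient under $\hat{\mc{L}}^-_s$ happens to be nonvanishing) give $\mathfrak{B}^{(5)}$ and $\mathfrak{B}^{(2)}$, with the sign $(-1)^{2s}$ arising from the reflection $\theta \mapsto \pi - \theta$.

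The main technical obstacle will be the remaining constants $\mathfrak{B}^{(3)}, \mathfrak{B}^{(4)}, \mathfrak{B}^{(7)}, \mathfrak{B}^{(8)}$. For these the naive leading coefficient cancels; for instance, for $\mathfrak{B}^{(7)}$, applying $\hat{\mc{L}}^+_s$ to the singular basis element $\sim \theta^{-(m+s)}$ gives $(-(m+s) + m + s)\theta^{-(m+s)-1} = 0$, and the same cancellation occurs at the first step of every subsequent ladder application. The surviving leading contribution on the target side is therefore built from the first $s$ subleading coefficients of the input's Frobenius expansion, which are determined through \eqref{eq:angular-ode} by $(\nu, m, \Lambda)$. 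To obtain these four constants I would write out the Frobenius recursion for each relevant basis solution, truncate after $s$ terms, and carefully propagate the resulting polynomial in $\Lambda$ through the $2s$ ladder operators, tracking all the subleading-to-leading feedbacks at each step; this bookkeeping explains both the intricate $\Lambda$-dependence of these four constants and why the paper reduces the problem to the first $s$ Frobenius coefficients of the four indicated basis elements.
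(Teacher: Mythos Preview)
Your approach is correct and matches the paper's: both proceed by differentiating the Frobenius/asymptotic expansions of the normalized basis solutions $\swei{\Xi}{\pm s}_{{\rm norm},i}$ at $\theta=0,\pi$, with the intertwining relation in your stage (i) implicit in the paper's referral to the analogous radial argument in \cite[Proposition~2.14]{TeixeiradaCosta2019}. The one addition the paper makes is the factorization
\[
\prod_{k=0}^{2s-1}\hat{\mc{L}}_{s-k}^\pm=(\sin\theta)^{2s}\lp(\frac{\hat{\mc{L}}_{s}^\pm}{\sin\theta}\rp)^{2s},
\]
which would significantly streamline the iterated differentiation and subleading bookkeeping you describe in stage (iii).
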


\begin{proof} The result follows from differentiating the asymptotic formulas for $\swei{\Xi}{\pm s}_{{\rm norm},i}$, $i=1,\dots, 4$. In doing so, it can be useful to note that
\begin{align*}
\prod_{k=0}^{2s-1}\hat{\mc{L}}_{s-k}^\pm=\lp(\sin\theta\rp)^{2s}\lp(\frac{\hat{\mc{L}}_{s}^\pm}{\sin\theta}\rp)^{2s} \,.
\end{align*}
For further details, we refer the reader to an analogous proof, in the setting of the radial Teukolsky--Starobinsky identities, in \cite[Proposition 2.14]{TeixeiradaCosta2019}. 
\end{proof}

We are ready to define the angular Teukolsky--Starobinsky constants:

\begin{definition}[Angular TS constants]\label{def:TS-angular-constant} Fix $s\in\frac12\mathbb{Z}_{\geq 0}$, $m$ such that $m-s\in\mathbb{Z}$, $\Lambda\in\mathbb{R}$, and $\nu\in\mathbb{R}$. 
Consider the operator 
\begin{align*}
\lp(\prod_{j=0}^{2s-1}\hat{\mc{L}}_{s-j}^\mp\rp)\lp(\prod_{k=0}^{2s-1}\hat{\mc{L}}_{s-k}^\pm\rp) = \lp(\sin\theta\rp)^{2s}\lp(\frac{\hat{\mc{L}}_{s}^\mp}{\sin\theta}\rp)^{2s}\lp(\sin\theta\rp)^{2s}\lp(\frac{\hat{\mc{L}}_{s}^\pm}{\sin\theta}\rp)^{2s}\,, 
\end{align*}
with indices $j,k$ increasing from right to left on the product, and the latter being replaced by the identity if $s=0$. If solutions of the angular ODE~\eqref{eq:angular-ode} with spin $\pm s$  are eigenfunctions of the above operator corresponding to the same eigenvalue, the eigenvalue is denoted by $\mathfrak B_s=\mathfrak B_s(a\omega,m,\Lambda)$ and it is called the {\normalfont angular Teukolsky--Starobinsky constant}. 
\end{definition}

\begin{remark} Note that, in Definition~\ref{def:TS-angular-constant}, we do not constrain $\Lambda$ to be a spin-weighted spheroidal eigenvalue, as defined in Proposition~\ref{prop:angular-ode}. In what follows, if we do take $\Lambda=\bm\Lambda_{sml}^{(\nu)}$ and $\Lslash=\bmLslash_{sml}^{(\nu)}$ for some $l$, then we write $\mathfrak{B}_s=\mathfrak B_s(\nu,m,l)$.
\end{remark}

\subsection{Examples of angular Teukolsky--Starobinsky constants}

By direct computation, we can check that an angular Teukolsky--Starobinsky  constant exists at least for low values of Teukolsky spin:
\begin{lemma} \label{lemma:TS-angular-constant-examples} For any $s\in\{0,\frac12,1,\frac32,2,\frac52,3\}$, there exists an angular Teukolsky--Starobinsky constant. Moreover, it is given by:
\begin{align}\label{eq:TS-angular-constants}
\begin{split} 
\mathfrak B_{0}(\nu, m, \Lambda)&= 1\,,\\
-\mathfrak B_{\frac12}(\nu, m, \Lambda)&= \Lslash\,,\\
\mathfrak B_{1}(\nu, m, \Lambda)&= \Lslash^2+4m\nu-4\nu^2\,,\\
-\mathfrak B_{\frac32}(\nu, m, \Lambda)&=  \Lslash^2\lp(\Lslash+1\rp)-16\nu\Lslash(\nu-m)+16\nu^2\,,\\
\mathfrak B_2(\nu, m, \Lambda)&= \Lslash^2(\Lslash+2)^2+40\nu\Lslash^2(m-\nu)+48\nu\Lslash(m+\nu)+144\nu^2(m-\nu)^2\,,\\
-\mathfrak B_{\frac52}(\nu, m, \Lambda)&= \Lslash^2\lp(\Lslash+3\rp)^2\lp(\Lslash+4\rp)+
16 m\nu \Lslash (3+\Lslash) (8+5 \Lslash) m  -16 \nu^2 \left(\Lslash
   (-12+\Lslash (2+5 \Lslash))\right) \\
   &\qquad   +1024 (1+\Lslash) \nu^2m^2   +1024 \nu^3
   (1-2 \Lslash) m +1024 \nu^4 (-2+\Lslash),  \\
\mathfrak B_{3}(\nu, m, \Lambda)&= \Lslash^2(\Lslash+4)^2(\Lslash+6)^2
+4 m\nu \Lslash (4+\Lslash) (360+7 \Lslash (36+5 \Lslash))  \\
&\qquad +4 \nu^2
   \left(-\Lslash (4+\Lslash) (-120+7 \Lslash (4+5 \Lslash))+4 (900+\Lslash
   (1140+259 \Lslash)) m^2\right) \\
   &\qquad+32 \nu^3 m \left(300-260 \Lslash-259
   \Lslash^2+450 m^2\right) +16 \nu^4 \left(100+\Lslash (-620+259
   \Lslash)\right) \\
   &\qquad -43200 \nu^4 m (m-\nu)-14400 \nu^6. 
\end{split}
\end{align}
In all the above examples, if $\nu=0$ and $\Lslash=\bmLslash_{sml}^{(\nu)}$ corresponds to a spin-weighted spheroidal eigenvalue with spheroidal parameter $\nu=0$ for some $l$, then $(-1)^{2s}\mathfrak{B}_s(\nu=0,m,l)\geq 1$.
\end{lemma}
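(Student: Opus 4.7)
To establish existence for each $s\in\{0,1/2,1,3/2,2,5/2,3\}$, one must show that the operator $D_s^\pm := \lp(\prod_{j=0}^{2s-1}\hat{\mc{L}}_{s-j}^\mp\rp)\lp(\prod_{k=0}^{2s-1}\hat{\mc{L}}_{s-k}^\pm\rp)$ acts as multiplication by the same scalar $\mathfrak{B}_s(\nu,m,\Lambda)$ on solutions of the angular ODE~\eqref{eq:angular-ode} of either sign of spin. Since each $\hat{\mc{L}}_n^{\pm}$ is first-order, $D_s^\pm$ is a differential operator of order $4s$. My plan is to apply it to a solution $\Xi^{[\pm s],\,(\nu)}_{m\Lambda}$ and, using \eqref{eq:angular-ode} repeatedly to replace $\Xi''$ and every higher derivative by expressions involving only $\Xi$, $\Xi'$ and explicit functions of $\theta$, reduce $D_s^\pm\Xi$ to the form $P(\theta;s,m,\nu,\Lambda)\cdot\Xi + Q(\theta;s,m,\nu,\Lambda)\cdot\Xi'$. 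The verification then amounts to checking that $Q\equiv 0$, that $P$ is independent of $\theta$, and that this common value coincides with the closed form for $\mathfrak{B}_s$ given in \eqref{eq:TS-angular-constants}. This reduction is entirely mechanical; for $s\leq 1$ it can be done by hand, while for $s\in\{3/2,2,5/2,3\}$ it is more conveniently performed in a computer algebra system. One must also verify that starting from spin $+s$ and from spin $-s$ yields the same scalar, which is consistent with the relations $\mathfrak{B}_s^{(1)}=\mathfrak{B}_s^{(6)}$ and $\mathfrak{B}_s^{(5)}=\mathfrak{B}_s^{(2)}=(-1)^{2s}\mathfrak{B}_s^{(1)}$ from Proposition~\ref{prop:TS-angular}, together with nontrivial compatibility constraints on the remaining $\mathfrak{B}_s^{(i)}$.

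For the non-negativity claim at $\nu=0$, the formulas in \eqref{eq:TS-angular-constants} simplify dramatically: every $\nu$-dependent term vanishes, leaving a polynomial in $\Lslash$ alone. Moreover, from Proposition~\ref{prop:angular-ode} we have $\bm\Lambda_{sml}^{(0)}=l(l+1)-s^2$, so by \eqref{eq:def-Lambda},
\[
\bmLslash_{sml}^{(0)} = l(l+1) - s^2 + s = (l+s)(l-s+1).
\]
Since $l\geq s$, this gives $\bmLslash_{sml}^{(0)}\geq 2s$, and in particular $\bmLslash\geq 1$ for every $s\geq 1/2$. Inspecting \eqref{eq:TS-angular-constants}, $(-1)^{2s}\mathfrak{B}_s(0,m,l)$ reduces to $\bmLslash$, $\bmLslash^2$, $\bmLslash^2(\bmLslash+1)$, $\bmLslash^2(\bmLslash+2)^2$, $\bmLslash^2(\bmLslash+3)^2(\bmLslash+4)$ and $\bmLslash^2(\bmLslash+4)^2(\bmLslash+6)^2$ for $s=1/2,1,3/2,2,5/2,3$ respectively, each of which is manifestly $\geq 1$ on this range.

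The principal obstacle is purely combinatorial: the expansions of $D_s^\pm\Xi$ prior to reduction via \eqref{eq:angular-ode} grow very rapidly with $s$, and the higher-spin cases are essentially intractable by hand. Conceptually, however, no new idea beyond the direct reduction described above is required, and the computer-algebra verification yields closed-form coefficients matching \eqref{eq:TS-angular-constants} on the nose.
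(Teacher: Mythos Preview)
Your proposal is correct and follows essentially the same approach as the paper. The paper defers the existence proof to the analogous radial case (Lemma~\ref{lemma:TS-radial-constant-examples}), where the strategy described is precisely yours: apply the operators one by one and use the ODE to trade second and higher derivatives for zeroth and first order terms, with computer algebra for the larger spins. For the $\nu=0$ bound, the paper simply notes that only the leading $\Lslash$-polynomial survives and identifies the minimum at $s=l=1/2$; your explicit computation $\bmLslash_{sml}^{(0)}=(l+s)(l-s+1)\geq 2s$ and case-by-case check of the resulting polynomials is a slightly more detailed version of the same observation.
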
 
\begin{proof} The proof of existence is similar to the case of the radial constants, in Lemma~\ref{lemma:TS-radial-constant-examples} to come, so we do not present it here. 
For the final statement, it is easy to see that, if $\nu=0$, only the first term of each expression in \eqref{eq:TS-angular-constants} remains. The least value is attained by $-\mathfrak{B}_{1/2}(\nu=0,m,l=s=1/2) = 1$. 
\end{proof}

\begin{remark} We expect that the angular Teukolsky--Starobinsky constant can be defined for all $s\in\frac12\mathbb{Z}_{\geq 0}$. However, no proof has been given in the literature.
\end{remark}

\subsection{Properties of the angular Teukolsky--Starobinsky constants}

If they exist, $\mathfrak{B}_s$ are non-negative (integer $s$) or non-positive (half-integer $s$):

\begin{lemma}[Sign of angular TS constants] \label{lemma:TS-angular-constant-sign} If an angular Teukolsky--Starobinsky constant exists for a certain $(\nu,m,\Lambda)$ satisfying the constraints in Definition~\ref{def:TS-angular-constant}, then $(-1)^{2s}\mathfrak B_s(\nu,m,\Lambda)\geq 0$. 
\end{lemma}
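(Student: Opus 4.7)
The strategy is to realize $(-1)^{2s}\mathfrak B_s$ as the squared $L^2$-norm of $\mathcal D^+\Xi^{[+s]}$ via an integration by parts, with $\mathcal D^\pm := \prod_{k=0}^{2s-1}\hat{\mc L}^\pm_{s-k}$ and $\Xi^{[+s]}$ a smooth $(+s)$-spin-weighted solution of the angular ODE. For this to make sense one needs $\Xi^{[+s]}\in L^2(\sin\theta\,d\theta)$; the hypothesis supplies such a solution since, to speak meaningfully of $\mathcal D^-\mathcal D^+$ acting as a scalar on solutions of spin $\pm s$ per Definition~\ref{def:TS-angular-constant}, one needs a genuine smooth spin-weighted solution, which by Proposition~\ref{prop:angular-ode} forces $\Lambda=\bm\Lambda^{(\nu)}_{sml}$ for some $l$ and allows us to take $\Xi^{[+s]}=S^{[+s],(\nu)}_{ml}$.

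The central computation is the formal $L^2$-adjoint of the ladder operator $\hat{\mc L}^+_n$ with respect to the inner product $\langle f,g\rangle:=\int_0^\pi f\bar g\,\sin\theta\,d\theta$. A single integration by parts yields $(\hat{\mc L}^+_n)^* = -\hat{\mc L}^-_{1-n}$: the minus sign comes from $d/d\theta$, while the shift $n\mapsto 1-n$ arises because differentiating the weight $\sin\theta$ contributes an additional $-\cot\theta$ that combines with the $n\cot\theta$ term. Taking the adjoint of the full product $\mathcal D^+$ reverses the order of composition and applies $n\mapsto 1-n$ to each factor; since this involution is a reflection of the index set $\{-(s-1),\ldots,s\}$ onto itself, the two reversals cancel, and accumulating one sign from each of the $2s$ factors gives
\begin{equation*}
(\mathcal D^+)^* = (-1)^{2s}\mathcal D^-,\qquad (\mathcal D^-)^* = (-1)^{2s}\mathcal D^+.
\end{equation*}

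The lemma then follows in one line. Using $\mathcal D^-\mathcal D^+\Xi^{[+s]}=\mathfrak B_s\Xi^{[+s]}$, and observing that the boundary contributions at $\theta=0,\pi$ vanish (the integrand carries a factor of $\sin\theta$ and $\Xi^{[+s]}$ is smooth at the poles), one computes
\begin{equation*}
\mathfrak B_s\,\|\Xi^{[+s]}\|^2 = \langle\mathcal D^-\mathcal D^+\Xi^{[+s]},\,\Xi^{[+s]}\rangle = \langle\mathcal D^+\Xi^{[+s]},\,(\mathcal D^-)^*\Xi^{[+s]}\rangle = (-1)^{2s}\,\|\mathcal D^+\Xi^{[+s]}\|^2\geq 0,
\end{equation*}
whence $(-1)^{2s}\mathfrak B_s\geq 0$.

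The main obstacle I anticipate is the index bookkeeping in the adjoint identity. After applying $\hat{\mc L}^+_n\mapsto -\hat{\mc L}^-_{1-n}$ to each factor of $\mathcal D^+$ and reversing the composition order, one must verify that the resulting sequence of subscripts $\{1-s,2-s,\ldots,s\}$ matches that of $\mathcal D^-$, namely $\{-(s-1),-(s-2),\ldots,s\}$, exactly. This is a delicate but elementary check that hinges on the symmetry of the index set about $1/2$, the fixed point of the involution $n\mapsto 1-n$.
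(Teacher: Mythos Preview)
Your approach is essentially the paper's: both establish the adjoint relation $(\hat{\mc L}^\pm_n)^* = -\hat{\mc L}^\mp_{1-n}$ (stated in the paper as the integration-by-parts identity $\int_0^\pi h(\hat{\mc L}^\pm_n f)\sin\theta\,d\theta = -\int_0^\pi f(\hat{\mc L}^\mp_{1-n}h)\sin\theta\,d\theta$, citing Chandrasekhar and Teukolsky--Press), observe that the index set is invariant under $n\mapsto 1-n$, and thereby realize $(-1)^{2s}\mathfrak B_s$ as the squared $L^2(\sin\theta\,d\theta)$-norm of $\mathcal D^\pm\Xi^{[\pm s]}$. One caveat: your claim that existence of the TS constant \emph{forces} $\Lambda$ to be an eigenvalue is not correct---Lemma~\ref{lemma:TS-angular-constant-examples} exhibits the constant as an explicit polynomial in $\Lambda$ valid for all real $\Lambda$ when $s\leq 3$---but the paper's proof likewise simply takes $\Xi^{[\pm s]}$ to be an $L^2$-normalized real solution without further comment, so both arguments implicitly restrict to eigenvalues (and the remark immediately following the lemma confirms this is the intended reading).
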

\begin{proof} To show that one has $(-1)^{s}\mathfrak{B}_s\geq 0$, recall the integration by parts identity of \cite[Section 68, Lemma 4]{Chandrasekhar} (earlier in \cite{Teukolsky1974}):
for $f$ and $h$ sufficiently regular functions of $\theta$,
\begin{align} \label{eq:TS-angular-IBP-lemma}
\int_0^\pi h \lp(\mc{L}^{\pm}_n f\rp) \sin\theta d\theta = -\int_0^\pi f \lp(\mc{L}^{\mp}_{-n+1} h\rp) \sin\theta d\theta\,.
\end{align}
Without loss of generality, let $\Xi^{[\pm s]}$ be  real solutions to \eqref{eq:angular-ode} normalized to have unit $L^2$ norm. Then, by \eqref{eq:TS-angular-IBP-lemma}, assuming existence of the constant,
\begin{align}
\mathfrak{B}_s &= \int_0^\pi \Xi^{[\pm s]}\prod_{j=0}^{2s-1}\mc{L}^\mp_{s-j}\prod_{k=0}^{2s-1}\mc{L}^\pm_{s-k}\Xi^{[\pm s]} \sin\theta d\theta  = (-1)^{2s} \int_0^\pi \lp(\prod_{k=0}^{2s-1}\mc{L}^\pm_{s-k}\Xi^{[\pm s]}\rp)^2 \sin\theta d\theta\,, \label{eq:TS-angular-constant-variational-formula}
\end{align}
where the integral on the right hand side is non-negative.
\end{proof}

\begin{remark} We note that the non-negativity of $(-1)^{2s}\mathfrak{B}_s$ provides nontrivial constraints on the values that $\Lambda$ can take in terms of $m$ and $\nu$; see also Lemma~\ref{lemma:TS-radial-constant-examples} for concrete examples.
\end{remark}

We may go further when we let $\Lambda=\bm\Lambda_{sml}^{(\nu)}$ be a spin-weighted spheroidal eigenvalue for some $l$:

\begin{lemma}[Zeros of angular TS constants] \label{lemma:TS-angular-constant-zeros} Fix $s\in\frac12\mathbb{Z}_{\geq 0}$, a number $m$ such that $m-s\in\mathbb{Z}$, a number $l$ such that $l-\max\{|m|,s\}\in\mathbb{Z}_{\geq 0}$, and $\nu\in\mathbb{R}$. If an angular Teukolsky--Starobinsky constant exists for such $(\nu,m,l)$, then $(-1)^{2s}\mathfrak B_s(\nu,m,l)> 0$. 
\end{lemma}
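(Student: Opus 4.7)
The plan is to leverage the variational identity derived in the proof of Lemma~\ref{lemma:TS-angular-constant-sign}: normalizing the smooth $(+s)$-spin-weighted spheroidal harmonic $\Xi^{[+s]}=S^{[+s],(\nu)}_{ml}$ to unit $L^2(\sin\theta\,d\theta)$-norm, one has
\[
(-1)^{2s}\mathfrak{B}_s=\int_0^\pi\lp(\prod_{k=0}^{2s-1}\mc{L}^+_{s-k}\Xi^{[+s]}\rp)^{2}\sin\theta\,d\theta.
\]
Since the integrand is non-negative and continuous, strict positivity of $(-1)^{2s}\mathfrak{B}_s$ is equivalent to showing that $\Phi:=\prod_{k=0}^{2s-1}\mc{L}^+_{s-k}\Xi^{[+s]}$ does not vanish identically on $(0,\pi)$. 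I would argue by contradiction and suppose $\Phi\equiv 0$.

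The next step is to convert $\Phi\equiv 0$ into explicit algebraic constraints via Proposition~\ref{prop:TS-angular}. That proposition gives two equivalent expansions for $\Phi$ (one adapted to $\theta=0$, one to $\theta=\pi$), from which $\Phi\equiv 0$ forces $\swei{a}{+s}_1\mathfrak{B}^{(1)}_s=\swei{a}{+s}_2\mathfrak{B}^{(7)}_s=\swei{a}{+s}_3\mathfrak{B}^{(4)}_s=\swei{a}{+s}_4\mathfrak{B}^{(6)}_s=0$. The smoothness conditions recorded in Section~\ref{sec:angular-ODE} then identify which of the $\swei{a}{+s}_i$ must be non-zero for a non-trivial smooth harmonic: in the regimes $m\geq s$ and $m\leq -s$ exactly two of the coefficients survive, and the identities collapse respectively to $\mathfrak{B}^{(1)}_s=0$ or $\mathfrak{B}^{(7)}_s=0$. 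Proposition~\ref{prop:TS-angular} exhibits these quantities as explicit non-zero products of integer or half-integer factors in those regimes, yielding the desired contradiction.

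The main obstacle I expect is the intermediate regime $|m|<s$, where the explicit product formula for $\mathfrak{B}^{(1)}_s$ can legitimately vanish on its own, so the identity $\swei{a}{+s}_1\mathfrak{B}^{(1)}_s=0$ carries no information. Here I would supplement the argument with a direct analysis of $\ker\bigl(\prod_{k=0}^{2s-1}\mc{L}^+_{s-k}\bigr)$: each first-order operator $\mc{L}^+_n$ has one-dimensional kernel spanned by $\cot(\theta/2)^m\sin^{-n}\theta\,e^{\nu\cos\theta}$, so iterated variation of parameters produces an explicit $2s$-dimensional basis for the iterated kernel. I would then verify that no non-trivial linear combination of these basis elements can simultaneously exhibit the asymptotics $\Xi^{[+s]}\sim\theta^{|m+s|}$ as $\theta\to 0$ and $\Xi^{[+s]}\sim(\pi-\theta)^{|m-s|}$ as $\theta\to\pi$ forced by $l\geq\max(|m|,s)$, thereby contradicting the non-triviality of $\Xi^{[+s]}$ and completing the proof.
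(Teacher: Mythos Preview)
Your overall strategy matches the paper's: both hinge on the variational identity \eqref{eq:TS-angular-constant-variational-formula} to reduce to $\Phi\equiv 0$, then extract constraints on the coefficients $\swei{a}{+s}_i$ via Proposition~\ref{prop:TS-angular}. Two points of divergence deserve comment.

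First, a bookkeeping slip: for $m<-s$ the surviving coefficients of the smooth $+s$ harmonic are $\swei{a}{+s}_2$ and $\swei{a}{+s}_4$, so the constants appearing in your constraint are $\mathfrak{B}_s^{(7)}$ and $\mathfrak{B}_s^{(6)}$. Of these, only $\mathfrak{B}_s^{(6)}$ is given a closed formula in Proposition~\ref{prop:TS-angular}, and it is manifestly non-zero when $m+s<0$ since every factor in the product is negative. You should invoke $\mathfrak{B}_s^{(6)}$, not $\mathfrak{B}_s^{(7)}$, to force $\swei{a}{+s}_4=0$ and reach the contradiction.

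Second, and more substantively, your treatment of $|m|\leq s$ is genuinely different from the paper's. The paper does \emph{not} analyse $\ker\bigl(\prod_{k}\mc{L}^+_{s-k}\bigr)$ by variation of parameters. Instead it runs exactly the argument you gave for $|m|>s$: in this range the surviving coefficients of a smooth $+s$ harmonic are $\swei{a}{+s}_1,\swei{a}{+s}_4$ (and of a smooth $-s$ harmonic $\swei{a}{-s}_2,\swei{a}{-s}_3$), and the paper simply asserts $\mathfrak{B}_s^{(1)}=\mathfrak{B}_s^{(6)}\neq 0$ and $\mathfrak{B}_s^{(2)},\mathfrak{B}_s^{(5)}\neq 0$ to kill them all and conclude in one line. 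Your observation that the printed product formula for $\mathfrak{B}_s^{(1)}$ can vanish for $|m|<s$ is a fair reading of Proposition~\ref{prop:TS-angular} as stated, but the paper does not engage with it. Your proposed route---an explicit $2s$-dimensional basis for the iterated kernel followed by two-endpoint asymptotic matching---is plausible in outline but only sketched; carrying it out for general $s$ would be substantially more work than the paper's approach.
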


\begin{proof}
If the angular Teukolsky--Starobinsky constant exists, then in Proposition~\ref{prop:TS-angular}, we must have
\begin{gather*}
\mathfrak{B}_s^{(3)}=\frac{\mathfrak{B}_s}{\mathfrak{B}_s^{(1)}}\,, \quad
\mathfrak{B}_s^{(4)}=\frac{\mathfrak{B}_s}{\mathfrak{B}_s^{(2)}}\,, \qquad
\mathfrak{B}_s^{(7)}= \frac{\mathfrak{B}_s}{\mathfrak{B}_s^{(5)}}\,,\quad
 \mathfrak{B}_s^{(8)}=\frac{\mathfrak{B}_s}{\mathfrak{B}_s^{(6)}}\,.
\end{gather*}

We deal first with the case  $|m|>s$, where we follow the strategy of the proof of \cite[Lemma 2.19]{TeixeiradaCosta2019}, an analogous result for the radial ODE. Suppose $m<-s$ and $\mathfrak B_s=0$; for a spin-weighted spheroidal harmonics, we have $\swei{a}{\pm s}_3=\swei{a}{\pm s}_1=0$, but by \eqref{eq:TS-angular-general},  $\swei{a}{-s}_2,\swei{a}{+s}_4=0$ too, which would force $S_{ml}^{[\pm s],\,(\nu)}\equiv 0$. Now suppose $m>s$ and $\mathfrak B_s=0$; for a spin-weighted spheroidal harmonics, we have $\swei{a}{\pm s}_4=\swei{a}{\pm s}_2=0$, but by \eqref{eq:TS-angular-general}, this implies $\swei{a}{+s}_1,\swei{a}{-s}_3=0$ too, which leads to another contradiction. 

Finally, suppose $-s\leq m\leq s$, so that a spin-weighted spheroidal harmonic has $\swei{a}{+s}_3=\swei{a}{+s}_2=0=\swei{a}{-s}_4=\swei{a}{-s}_1$. On the other hand, from the formula \eqref{eq:TS-angular-constant-variational-formula}, we find that $\mathfrak{B}_s=0$ if and only if \eqref{eq:TS-angular-general} vanish, in which case one must also have $\swei{a}{+s}_1=\swei{a}{+s}_4=0$, given that $\mathfrak{B}_s^{(1)},\mathfrak{B}_s^{(6)}\neq 0$, and that $\swei{a}{-s}_2=\swei{a}{-s}_3=0$, given that $\mathfrak{B}_s^{(3)},\mathfrak{B}_s^{(2)}\neq 0$. We conclude $S_{ml}^{[\pm s],\,(\nu)}\equiv 0$, which is a contradiction.

Hence, $\mathfrak{B}_s(\nu,m,l)\neq 0$. By Lemma~\ref{lemma:TS-angular-constant-sign}, the conclusion follows.
\end{proof}

In spite of the angular Teukolsky--Starobinsky constants' positivity, they approach arbitrarily small values, at least in the cases considered in Lemma \ref{lemma:TS-angular-constant-examples}.

\begin{lemma} \label{lemma:TS-angular-high-freq-expansion}
Fix $s\in\{\frac12,1,\frac32,2,\frac53,3\}$. Then, there are some pairs $(l,m)$, where $m-s\in\mathbb{Z}$ and $l-\max\{|m|,s\}\in\mathbb{Z}_{\geq 0}$, for which we have, as $\nu\to \infty$,
\begin{equation}
(-1)^{2s}\mathfrak{B}_s(\nu,m,l) = O(\nu^{-N})\,, \quad \forall\, N>0\,. \label{eq:TS-angular-high-freq-expansion}
\end{equation}
\end{lemma}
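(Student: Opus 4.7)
The plan is to substitute the high-frequency expansion of $\bmLslash^{(\nu)}_{sml}$ from Proposition~\ref{prop:high-freq-expansion} into the explicit polynomial expressions for $\mathfrak{B}_s$ given in Lemma~\ref{lemma:TS-angular-constant-examples}, and to exhibit, for each $s$, specific pairs $(m, l)$ for which the resulting formal series in $1/\nu$ vanishes identically. Since Proposition~\ref{prop:high-freq-expansion} guarantees that the actual $\bmLslash^{(\nu)}_{sml}$ agrees with any truncation of its formal series up to $O(\nu^{-N-1})$, this would yield $(-1)^{2s}\mathfrak{B}_s(\nu, m, l) = O(\nu^{-N})$ for all $N>0$.

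For $s = \tfrac12$, the simplest candidate is $(m, l) = (\tfrac12, \tfrac12)$, for which \eqref{eq:qmls-def} gives $q_{sml} = m$, so $A_{-1} = A_0 = 0$. A short check shows the boundary terms $Q^{-}_{sml, 1}$ and $Q^{+}_{sml, -1}$ of the recursion \eqref{eq:high-frequency-recursive-relations-Ak}--\eqref{eq:high-frequency-recursive-relations-ank} both vanish identically, which, together with the initial data $a_{0, 0} = 1$, propagates via induction on $j$ to give $a_{n, j} \equiv 0$ for all $n \neq 0$. All coefficients $A_k$ thus vanish, and since $-\mathfrak{B}_{1/2} = \Lslash$, the conclusion follows immediately in this case.

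For $s \in \{1, \tfrac32, 2, \tfrac52, 3\}$, the strategy is to view $\mathfrak{B}_s(\nu, m, \Lslash)$ as a polynomial in $\Lslash$ (with coefficients polynomial in $\nu$ and $m$) and factor it as $\prod_i(\Lslash - f_i(\nu, m))$ over an algebraic extension; each algebraic branch $f_i$ admits a formal Puiseux expansion at $\nu = \infty$. I would then select $(m, l)$ so that $q_{sml}$, and thus $A_{-1}$ and $A_0$, match the leading terms of such a branch expansion. For example, for $s = 1$ the choice $(m, l) = (1, 2)$ gives $q_{sml} = m + 1$ and hence $A_{-1} = 2$, $A_0 = -m$, matching the branch $\Lslash = 2\nu\sqrt{1 - m/\nu}$, on which $\mathfrak{B}_1 = \Lslash^2 + 4m\nu - 4\nu^2$ vanishes identically; the analogous choice $(m, l) = (\tfrac32, \tfrac72)$, with $q_{sml} = m + 2$, works for $s = \tfrac32$. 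For $s \in \{2, \tfrac52, 3\}$, the leading equation for $A_{-1}$ is a polynomial with integer roots (for instance $A_{-1}^4 - 40 A_{-1}^2 + 144 = 0$ when $s = 2$), each of which can be realized by an appropriate $(m, l)$.

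The main obstacle is extending the matching beyond $A_{-1}, A_0$ to all orders of the formal series. The cleanest conceptual argument uses the variational formula \eqref{eq:TS-angular-constant-variational-formula}: for the chosen $(m, l)$, the spin-weighted spheroidal harmonic $S^{[+s],(\nu)}_{ml}$ becomes, to leading order in $1/\nu$, a Hermite function whose index is precisely annihilated by $s$ applications of the raising operator $\hat{\mc{L}}^{+}_{s}$, so that $\prod_{k=0}^{2s-1}\hat{\mc{L}}^{+}_{s-k} S^{[+s],(\nu)}_{ml}$ has $L^2$-norm decaying faster than any polynomial in $1/\nu$. Making this precise to all orders requires a controlled WKB/matched-asymptotics construction. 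A less elegant but always-available alternative is to verify each $A_k$ order-by-order against the Puiseux expansion of the chosen algebraic branch using the recursion \eqref{eq:high-frequency-recursive-relations-Ak}--\eqref{eq:high-frequency-recursive-relations-ank}; this is tedious but conceptually routine for each fixed $s$, and is likely how the authors proceed in practice.
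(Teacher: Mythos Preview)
Your $s=\tfrac12$ argument is correct and is exactly the mechanism the paper uses throughout: for the pair $(m,l)=(\tfrac12,\tfrac12)$ one checks directly that $Q^-_{sml,1}=Q^+_{sml,-1}=0$, whence the recursion \eqref{eq:high-frequency-recursive-relations-ank} never couples to the seed $a_{0,0}$, all $a_{n,j}$ with $n\neq 0$ vanish, and hence $A_k=0$ for every $k\geq 1$. The key point you miss is that \emph{this same argument carries over verbatim to $s\geq 1$ once the right $(m,l)$ is chosen}. The paper does not switch strategies at $s=1$; it simply exhibits, for each $s\leq 3$, one explicit pair (e.g.\ $(m,l)=(0,1)$ for $s=1$) for which the recursion collapses and $\Lslash$ reduces to a linear polynomial in $\nu$ to all orders. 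Plugging $\Lslash=A_{-1}\nu$ into the explicit expressions \eqref{eq:TS-angular-constants} then gives zero identically.

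By contrast, your route for $s\geq 1$ has a genuine gap. The pair $(m,l)=(1,2)$ you propose for $s=1$ gives $q_{sml}=2$, and one computes $Q^+_{sml,-1}=4\neq 0$: the decoupling you exploited for $s=\tfrac12$ fails, the $A_k$ are no longer all zero, and you are forced into your Puiseux/WKB detour. But neither completion you sketch is a proof. ``Verify each $A_k$ against the Puiseux coefficient of a branch of $\mathfrak{B}_s=0$'' is circular: you are trying to show that the eigenvalue, determined by the angular ODE and its recursion, happens to coincide with an algebraic root of $\mathfrak{B}_s$; checking finitely many $k$ does not establish this, and you give no inductive mechanism linking the two recursions. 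The Hermite-function picture is an appealing heuristic but, as you acknowledge, controlling it to all orders would require a full matched-asymptotics argument you do not supply. The fix is not more machinery but a better choice of $(m,l)$: pick the paper's pairs, for which your own $s=\tfrac12$ recursion argument already does the job.
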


\begin{proof}
For $s=1/2$, the leading order term vanishes if and only if 
\begin{align*}
q_{\frac12, ml}- m=0\Leftrightarrow l=m\,,\,\, m\geq \frac12\,.
\end{align*}
For $s=1$, $\mathfrak{B}_1 = (A_{-1}^2-4)\nu^2+ O(\nu)$, where the leading order term vanishes if and only if 
\begin{align*}
q_{1,ml}- m=\pm 1\Leftrightarrow l=\lp\{\begin{array}{ll}
m, &m\geq 1\\
m+1, &m\geq 0
\end{array}\rp.\,.
\end{align*}
For $s=3/2$, the leading order term of $-\mathfrak{B}_{3/2} = (A_{-1}^2-16)A_{-1}\nu^3+ O(\nu^{2})$ vanishes if and only if 
\begin{gather*}
q_{\frac32,ml}- m=0,\pm 2
\Leftrightarrow 
l=\lp\{\begin{array}{llll}
m, & m\geq 3/2; & &\\
m+1, &m\geq 1/2; & ~~m+2, &m\geq -1/2\\
\end{array}\rp.\,.
\end{gather*}
For $s=2$, $\mathfrak{B}_2 = (A_{-1}^2-4)(A_{-1}^2-36)\nu^4+ O(\nu^{3})$, and the leading order term vanishes if and only if 
\begin{gather*}
q_{2,ml}- m=\pm 1, \pm 3
\Leftrightarrow 
l=\lp\{\begin{array}{llll}
m, &m\geq 2; &~~ m+2, &m\geq 0\\
m+1, &m\geq 1; &~~m+3, &m\geq -1
\end{array}\rp.\,.
\end{gather*}
For $s=5/2$, the leading order term of $\mathfrak{B}_{\frac52}$ vanishes if and only if 
\begin{gather*}
q_{\frac52,ml}- m=0,\pm 2, \pm 4
\Leftrightarrow l=\lp\{\begin{array}{llll}
m, & m\geq 5/2; & &\\
m+1, &m\geq 3/2; & ~~m+3, &m\geq -1/2\\
m+2, &m\geq 1/2; & ~~m+4, &m\geq -3/2
\end{array}\rp.\,.
\end{gather*}
For $s=3$, the leading order term  of $\mathfrak{B}_3$ vanishes if and only if 
\begin{gather*}
q_{3,ml}- m=\pm 1, \pm 3, \pm 5
\Leftrightarrow l=\lp\{\begin{array}{llll}
m, & m\geq 3; & ~~m+3 &m\geq 0\\
m+1, &m\geq 2; & ~~m+4, &m\geq -1\\
m+2, &m\geq 1; & ~~m+5, &m\geq -2
\end{array}\rp.\,.
\end{gather*}

To obtain \eqref{eq:TS-angular-high-freq-expansion} for some $N>0$, we need to compute the coefficients $A_k$ of the asymptotic expansion \eqref{eq:lambda-high-freq-expansion} up to $k=N-1+2s$. The formulas in \cite[Equation 3.15--3.21]{Casals2018} for $A_k$ up to $k=7$, for instance, yield that \eqref{eq:TS-angular-high-freq-expansion} holds for $N\geq 7+1-2s$ for all $(s,l,m)$ identified in the preceding paragraph.

In the cases
\begin{gather}
\begin{gathered}
s=\frac12\,,\,\, l=m\geq 1/2\,; \qquad 
s=1\,,\,\, l=1\,,\,\, m=0\,; \qquad 
s=\frac32\,,\,\, l=\frac32\,,\,\, m=\frac12\,;\\
s=2\,,\,\, l=2\,,\,\, m=-1\,;\qquad 
s=\frac52\,,\,\, l=\frac52\,,\,\, m=-\frac32\,; \qquad
s=3\,,\,\, l=3\,,\,\, m=-2\,;
\end{gathered}\label{eq:sml-for-angularTS-superpolynomial-decay}
\end{gather}
the formulas \cite[Equation 3.15--3.21]{Casals2018} in fact imply that $A_{-1}=\pm 2$ and that $A_k=0$ for $k=0,\dots\, 7$. By the recursive formulas \eqref{eq:high-frequency-recursive-relations-Ak} and \eqref{eq:high-frequency-recursive-relations-ank}, we must have that $A_k=0$ for all $k\geq 1$. Hence, for such $(s,m,l)$, \eqref{eq:TS-angular-high-freq-expansion} holds for all $N>0$, as stated.
\end{proof}

\begin{figure}[b]
\centering
\includegraphics[width=8.5cm]{./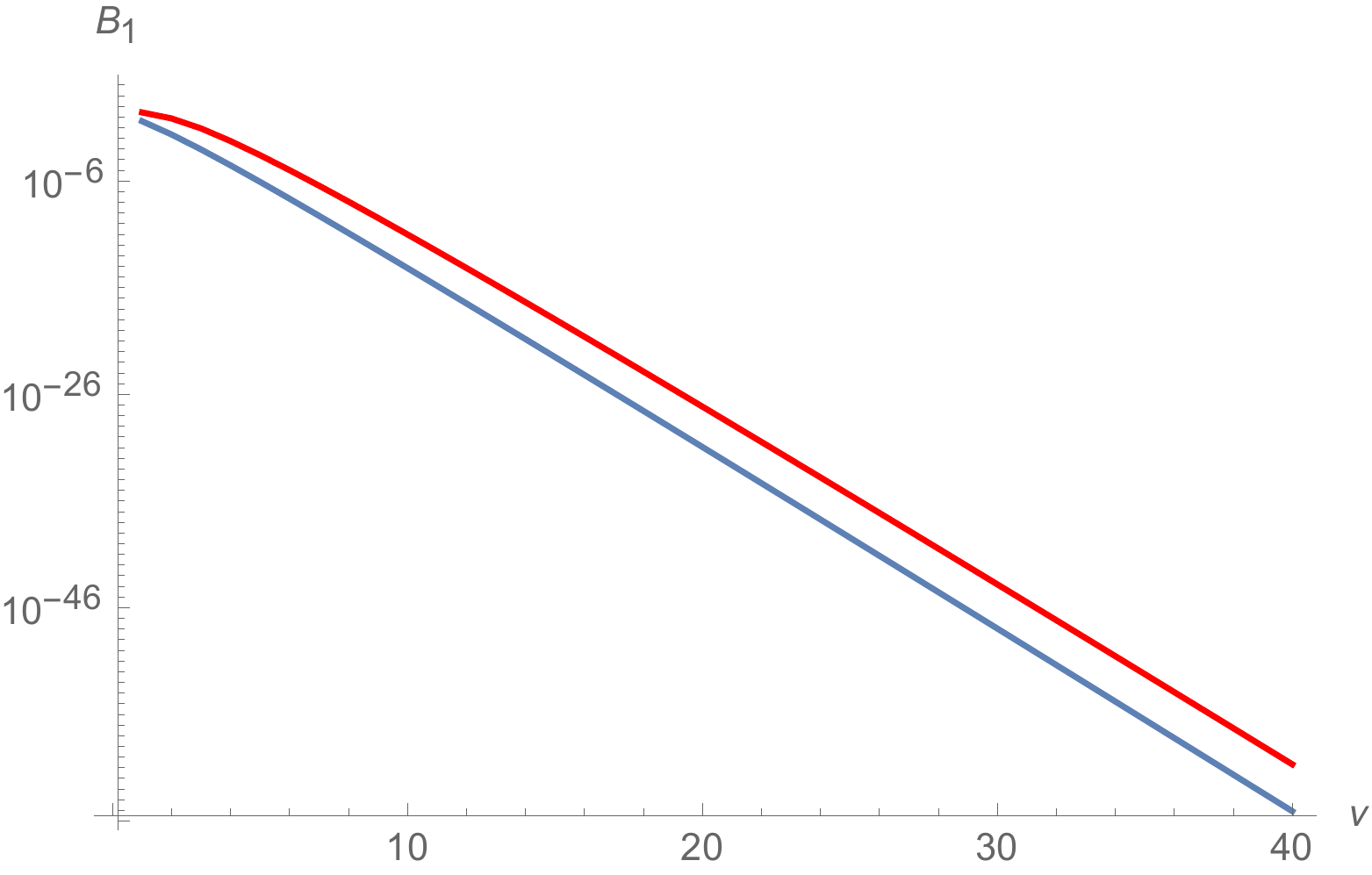}
\caption{Logarithmic plot of $\mathfrak{B}_{1}(\nu,m, l=1)$ as a function of $\nu$ for $m=1$ (blue) and $m=0$ (red), obtained by numerically computing $\Lambda_{sml}^{(\nu)}$ via \cite{BHPToolkit}. The linear relation in the logarithmic plot suggests exponential decay of $\mathfrak{B}_{1}$ as $\nu\to \infty$ for these modes.}
\label{fig:B1}
\end{figure}

\begin{remark} Computer-based symbolic computations suggest that, in fact, all of the triples $(s,m,l)$ identified in the first paragraph of the proof of Lemma~\ref{lemma:TS-angular-high-freq-expansion} verify the conclusion of the lemma, rather than just the smaller set in \eqref{eq:sml-for-angularTS-superpolynomial-decay}. This is confirmed by numerical analysis, see Figure~\ref{fig:B1}. However, to establish such a result would require a much more detailed analysis of the recursive relations \eqref{eq:high-frequency-recursive-relations-Ak} and \eqref{eq:high-frequency-recursive-relations-ank} than we pursue here.
\end{remark}

In the case $s=1$, numerical computations suggest that the superpolynomial decay with $\nu$ identified in Lemma~\ref{lemma:TS-angular-high-freq-expansion} for some $(l,m)$ is actually  exponential, see Figure~\ref{fig:B1}.

\section{The radial Teukolsky--Starobinsky constants}
\label{sec:radial-TS-constants}

In this section, we introduce the radial ODE corresponding to the Teukolsky equation~\eqref{eq:angular-ode} and define the radial Teukolsky--Starobinsky constants. The last subsection contains the proof of Facts~\ref{fact:negativity} to \ref{fact:lower-bounds}.

\subsection{The radial ODE}

We consider the radial ODEs
\begin{align*}
&\lp[\Delta^{\mp s}\frac{d}{dr}\lp(\Delta^{\pm s+1}\frac{d}{dr}\rp)  +\frac{[\omega(r^2+a^2)-am]^2-2i(\pm s)(r-M)[\omega(r^2+a^2)-am]}{\Delta}\rp] \smlambda{\upalpha}{\pm s}(r)\\
&\qquad\quad+\lp(\pm 4is\omega r -\Lambda\mp s+2am\omega\rp)\smlambda{\upalpha}{\pm s}(r)=0\,,\numberthis \label{eq:radial-ODE-alpha}
\end{align*}
where $M>0$, $|a|\leq M$, $r\in(r_+,\infty)$, $s\in\frac12\mathbb{Z}_{\geq 0}$, $m$ such that $m-s\in\mathbb{Z}$, $\Lambda\in\mathbb{R}$,  and $\omega\in\mathbb{R}$. An asymptotic analysis of \eqref{eq:radial-ODE-alpha} shows that a solution to \eqref{eq:radial-ODE-alpha} admits a unique set of complex numbers $\swei{a}{\pm s}_{\mc{H}^+}$, $\swei{a}{\pm s}_{\mc{H}^-}$, $\swei{a}{\pm s}_{\mc{I}^+}$ and $\swei{a}{\pm s}_{\mc{I}^-}$ such that
\begin{align}
\smlambda{\upalpha}{\pm s}=\swei{a}{\pm s}_{\mc{H}^+}\cdot\swei{\upalpha}{\pm s}_{\mc{H}^+}+\swei{a}{\pm s}_{\mc{H}^-}\cdot\swei{\upalpha}{\pm s}_{\mc{H}^-} = \swei{a}{\pm s}_{\mc{I}^+}\cdot\swei{\upalpha}{\pm s}_{\mc{I}^+}+\swei{a}{\pm s}_{\mc{I}^-}\cdot\swei{\upalpha}{\pm s}_{\mc{I}^-}\,, \label{eq:radial-ODE-decomposition}
\end{align}
where $\swei{\upalpha}{\pm s}_{\mc{I}^+}$ and $\swei{\upalpha}{\pm s}_{\mc{I}^-}$ are solutions of \eqref{eq:radial-ODE-alpha} with outgoing and ingoing, respectively, boundary conditions as $r\to \infty$ which are normalized at $r=\infty$, and where $\swei{\upalpha}{\pm s}_{\mc{H}^+}$ and $\swei{\upalpha}{\pm s}_{\mc{H}^-}$ are solutions of \eqref{eq:radial-ODE-alpha} with ingoing and outgoing, respectively, boundary conditions as $r\to r_+$ which are normalized at $r=r_+$. Concretely,
\begin{itemize}
\item $\swei{\upalpha}{\pm s}_{\mc{I}^+}e^{-i\omega r}r^{-2iM\omega+1\pm 2s}$ and $\swei{\upalpha}{\pm s}_{\mc{I}^-}e^{i\omega r}r^{2iM\omega+1}$ are smooth functions of $1/r$ as $r\to \infty$ which are normalized at $r=\infty$;
\item if $|a|<M$, $\swei{\upalpha}{\pm s}_{\mc{H}^+}(r-r_+)^{i\frac{2Mr_+}{r_+-r_-}(\omega-m\upomega_+)\pm s}$ and $\swei{\upalpha}{\pm s}_{\mc{H}^-}(r-r_+)^{-i\frac{2Mr_+}{r_+-r_-}(\omega-m\upomega_+)}$ are smooth as $r\to r_+$ and normalized at $r=r_+$;
\item if $|a|=M$, $\swei{\upalpha}{\pm s}_{\mc{H}^+}(r-M)^{2iM\omega \pm 2s}e^{-i\frac{2M^2}{r-M}(\omega-m\upomega_+)}$ and $\swei{\upalpha}{\pm s}_{\mc{H}^-}(r-M)^{-2iM\omega}e^{i\frac{2M^2}{r-M}(\omega-m\upomega_+)}$ are smooth in $1/(r-M)$ as $r\to M$ and normalized at $r=M$.
\end{itemize}

\subsection{The radial Teukolsky--Starobinsky identities and constants}

In this section, we will require the notation 
\begin{align}
\hat{\mc{D}}^{\pm}_n &\equiv \frac{d}{dr}\pm i\lp(\frac{\omega(r^2+a^2)}{\Delta} -\frac{am}{\Delta}\rp)+\frac{2n(r-M)}{\Delta}\,,
\end{align}
where $M>0$, $|a|\leq M$, $r\in(r_+,\infty)$, $n,m\in\frac12\mathbb{Z}$ and $\omega\in\mathbb{R}$.

We quote from \cite[Proposition 2.14]{TeixeiradaCosta2019} the following:

\begin{proposition}[Radial TS identities] \label{prop:TS-radial} Fix $M>0$, $|a|\leq M$ and $s\in\frac12\mathbb{Z}_{\geq 0}$, $m$ such that $m-s\in\mathbb{Z}$, $\Lambda\in\mathbb{R}$. Then, if $\swei{\upalpha}{\pm s}$ solve \eqref{eq:radial-ODE-alpha} and admit the decomposition \eqref{eq:radial-ODE-decomposition},
\begin{align}
\begin{split}
\Delta^s \lp(\hat{\mc{D}}_0^{+}\rp)^{2s}\lp(\Delta^s \swei{\upalpha}{+ s}\rp)&=\swei{a}{+s}_{\mc{I}^{+}}\mathfrak{C}_s^{(1)}\swei{\upalpha}{- s}_{\mc{I}^+}+\swei{a}{+s}_{\mc{I}^{-}}\mathfrak{C}_s^{(7)}\swei{\upalpha}{- s}_{\mc{I}^-}=\swei{a}{+s}_{\mc{H}^{+}}\mathfrak{C}_s^{(4)}\swei{\upalpha}{- s}_{\mc{H}^+}+\swei{a}{+s}_{\mc{H}^{-}}\mathfrak{C}_s^{(6)}\swei{\upalpha}{- s}_{\mc{H}^-}\,, \\
 \lp(\hat{\mc{D}}_0^{-}\rp)^{2s}\swei{\upalpha}{- s}&=\swei{a}{-s}_{\mc{I}^{+}}\mathfrak{C}_s^{(3)}\swei{\upalpha}{+ s}_{\mc{I}^+}+\swei{a}{+s}_{\mc{I}^{-}}\mathfrak{C}_s^{(5)}\swei{\upalpha}{+ s}_{\mc{I}^-}=\swei{a}{-s}_{\mc{H}^{+}}\mathfrak{C}_s^{(2)}\swei{\upalpha}{+ s}_{\mc{H}^+}+\swei{\upalpha}{+ s}_{\mc{H}^{-}}\mathfrak{C}_s^{(8)}\swei{\upalpha}{+ s}_{\mc{H}^-}\,,
\end{split}\label{eq:TS-radial-general}
\end{align}
where the products on the left hand side are replaced by the identity if $s=0$ and, if $s\neq 0$, have indices increasing from left to right.
Here, $\mathfrak{C}_s^{(i)}=\mathfrak{C}_s^{(i)}(a,M,\omega,m,\Lambda)$ for $i=1,\dots, 8$. Indeed, if $s=0$, $\mathfrak{C}_s^{(i)}=1$ for $i=1,\dots,8$. For $s\neq 0$, we easily obtain
\begin{gather*}
\mathfrak{C}_s^{(2)}=\prod_{j=0}^{2s-1}\lp[-4iMr_+(\omega-m\upomega_+)+(s-j)(r_+-r_-)\rp] \begin{dcases}
1 &\text{~if~} |a|=M\\
(r_+-r_-)^{-2|s|}&\text{~if~} |a|<M
\end{dcases}\,, \\
\mathfrak{C}_s^{(1)}=(2i\omega)^{2s}\,, \quad\mathfrak{C}_s^{(5)}=(-2i\omega)^{2s}\,,\quad
\mathfrak{C}_s^{(6)}=
\prod_{j=0}^{2s-1}\lp[4iMr_+(\omega-m\upomega_+)+(s-j)(r_+-r_-)\rp]\,;
\end{gather*}
the remaining $\mathfrak{C}_s^{(i)}$ can be computed explicitly in terms of the first $s$ coefficients in the asymptotic expansions of $\swei{\upalpha}{\pm s}_{\mc{I}^\mp}$ and $\swei{\upalpha}{\mp s}_{\mc{H}^\mp}$, which in turn can be explicitly computed in terms of $(a,M)$ and $(\omega,m,\Lambda)$. 
\end{proposition}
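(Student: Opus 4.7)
The plan is to establish a differential intertwining between the radial Teukolsky operators of opposite spin and then read off the eight constants $\mathfrak{C}_s^{(i)}$ by matching asymptotic expansions at the two singular points $r=r_+$ and $r=\infty$. Denote the radial ODE~\eqref{eq:radial-ODE-alpha} by $\mathcal{T}_{\pm s}\swei{\upalpha}{\pm s}=0$. The crucial observation is that, as differential operators on $(r_+,\infty)$,
\[
\mathcal{T}_{-s}\circ \bigl(\Delta^{s}(\hat{\mc{D}}_0^{+})^{2s}\Delta^{s}\bigr) = \bigl(\Delta^{s}(\hat{\mc{D}}_0^{+})^{2s}\Delta^{s}\bigr)\circ \mathcal{T}_{+s}\,, \qquad \mathcal{T}_{+s}\circ (\hat{\mc{D}}_0^{-})^{2s} = (\hat{\mc{D}}_0^{-})^{2s}\circ \mathcal{T}_{-s}\,,
\]
which I would verify by iterated use of the commutator identities relating $\hat{\mc{D}}_0^{\pm}$, $\Delta$ and $r-M$. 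A convenient bookkeeping device is to rewrite each $\mathcal{T}_{\pm s}$ as a product of the form $\Delta^{\mp s}\hat{\mc{D}}_0^{\mp}\Delta^{\pm s+1}\hat{\mc{D}}_0^{\pm}$ plus multiplication by the angular separation constant, so that the intertwining reduces to a single commutator computation per application of $\hat{\mc{D}}_0^{\pm}$. As a consequence, the right-hand sides of \eqref{eq:TS-radial-general} are bona fide solutions of the spin-$(\mp s)$ ODE.

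To identify the constants, I would then examine the leading asymptotic behavior of each normalized solution under the Teukolsky--Starobinsky operators. Since $\hat{\mc{D}}_0^{\pm}\to \tfrac{d}{dr}\pm i\omega$ as $r\to\infty$ and $\swei{\upalpha}{+s}_{\mc{I}^+}$ behaves like $e^{i\omega r}r^{-1-2s-2iM\omega}$ there, applying $\Delta^{s}(\hat{\mc{D}}_0^{+})^{2s}\Delta^{s}$ reproduces, to leading order, $(2i\omega)^{2s}$ times the normalized $\swei{\upalpha}{-s}_{\mc{I}^+}$, yielding $\mathfrak{C}_s^{(1)}=(2i\omega)^{2s}$. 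The same procedure on the ingoing data at infinity with $(\hat{\mc{D}}_0^{-})^{2s}$ gives $\mathfrak{C}_s^{(5)}=(-2i\omega)^{2s}$. At the horizon one has instead $\hat{\mc{D}}_0^{\pm}\approx \tfrac{d}{dr}\pm i\tfrac{2Mr_+(\omega-m\upomega_+)}{\Delta}+\tfrac{2n(r-M)}{\Delta}$ for the relevant weight $n$, and iterating one factor at a time shows that the horizon exponent shifts by one integer per application, while the coefficient accrues a factor of the form $\pm 4iMr_+(\omega-m\upomega_+)+(s-j)(r_+-r_-)$; telescoping over $j=0,\ldots,2s-1$ produces $\mathfrak{C}_s^{(2)}$ and $\mathfrak{C}_s^{(6)}$. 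The extremal case $|a|=M$ is handled separately because $r_+=r_-$ degenerates the horizon into an irregular singular point, which accounts for the different prefactor in the formula for $\mathfrak{C}_s^{(2)}$.

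The remaining constants $\mathfrak{C}_s^{(3)}$, $\mathfrak{C}_s^{(4)}$, $\mathfrak{C}_s^{(7)}$, $\mathfrak{C}_s^{(8)}$ involve the first $s$ subleading Frobenius coefficients in the expansions of the opposite-signed solution at each singular point, so extracting them reduces to a finite recursive computation which is tedious but mechanical. The main obstacle is the intertwining step: verifying the identity as a differential-algebraic identity for arbitrary $s\in\tfrac12\mathbb{Z}_{\geq 0}$ requires either an induction on $s$ (peeling off one application of $\hat{\mc{D}}_0^{\pm}$ at a time, showing that each step converts $\mathcal{T}_{+s-k}$ into an operator conjugate to $\mathcal{T}_{-s+k}$ up to a spin shift) or, more conceptually, an appeal to the GHP/Newman--Penrose formalism in which $\hat{\mc{D}}_0^{\pm}$ arise as weighted derivatives along the principal null directions. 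Once the intertwining is established, the asymptotic matching at both singular endpoints is routine and uniqueness of the decomposition~\eqref{eq:radial-ODE-decomposition} forces each mapped normalized solution to be a scalar multiple of its counterpart, defining the $\mathfrak{C}_s^{(i)}$ unambiguously.
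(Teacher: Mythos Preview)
Your approach is correct and aligns with the paper's. Note that the paper does not actually prove this proposition in its own text: it quotes the result verbatim from \cite[Proposition 2.14]{TeixeiradaCosta2019}. The method used there, as one can infer from the paper's proof of the angular analogue (Proposition~\ref{prop:TS-angular}), is precisely what you describe: one differentiates the asymptotic (Frobenius) expansions of the normalized solutions $\swei{\upalpha}{\pm s}_{\mc{I}^\pm}$, $\swei{\upalpha}{\pm s}_{\mc{H}^\pm}$ term by term and matches against the corresponding normalized solutions of the opposite-spin equation, the intertwining having been established beforehand. Your identification of the four ``easy'' constants $\mathfrak{C}_s^{(1)},\mathfrak{C}_s^{(2)},\mathfrak{C}_s^{(5)},\mathfrak{C}_s^{(6)}$ via the leading exponents, and your remark that the other four require the first $s$ subleading Frobenius coefficients, are exactly right.

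One small comment: you frame the intertwining as the ``main obstacle'' and suggest an induction on $s$ or an appeal to the GHP formalism. In the cited reference the intertwining is indeed established by an elementary inductive/commutator argument of the type you sketch; the GHP route is not needed, though it is of course conceptually illuminating. The paper treats this step as essentially routine and places the emphasis, as you do in your second and third paragraphs, on the asymptotic bookkeeping.
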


We are now ready to define the radial Teukolsky--Starobinsky constants:

\begin{definition}[Radial TS constants]\label{def:TS-radial-constant} Fix $s\in\frac12\mathbb{Z}_{\geq 0}$, $m$ such that $m-s\in\mathbb{Z}$, $\Lambda\in\mathbb{R}$, $M>0$, $|a|\leq M$ and $\omega\in\mathbb{R}$. Consider the operator 
\begin{align*}
\Delta^s\lp(\hat{\mc{D}}^{\mp}_0\rp)^{2s}\lp[\Delta^s\lp(\hat{\mc{D}}^{\pm}_0\rp)^{2s}\rp] \equiv \prod_{j=0}^{2s-1}\lp(\Delta^{1/2}\hat{\mc{D}}^{\mp}_{j/2}\rp)\prod_{k=0}^{2s-1}\lp(\Delta^{1/2}\hat{\mc{D}}^{\pm}_{k/2}\rp)\,, 
\end{align*}
with indices $j,k$ increasing from right to left on the product, and the latter being replaced by the identity if $s=0$. If $\Delta^{\frac{s}{2}(1\pm 1)}\smlambda{\upalpha}{\pm s}$, where  $\smlambda{\upalpha}{\pm s}$  solve the radial ODE~\eqref{eq:radial-ODE-alpha} of spin $\pm s$, are eigenfunctions of the above operator corresponding to the same eigenvalue, the eigenvalue is denoted by $\mathfrak C_s=\mathfrak C_s(a,M,\omega,m,\Lambda)$ and it is called the {\normalfont radial Teukolsky--Starobinsky constant}. 
\end{definition}

\begin{remark} Note that, in Definition~\ref{def:TS-radial-constant} and Lemma~\ref{lemma:TS-radial-constant-examples} below, we once again do not constrain $\Lambda$ to be an eigenvalue of the angular ODE~\eqref{eq:angular-ode} with spheroidal parameter $\nu=a\omega$. In what follows, if we do take $\Lambda=\bm\Lambda_{sml}^{(a\omega)}$ and $\Lslash=\bmLslash_{sml}^{(a\omega)}$ for some $l$, then we write $\mathfrak C_s(a,M,\omega,m,l)$.
\end{remark}

\subsection{Examples of radial Teukolsky--Starobinsky constants}

By direct computation, we can check that a radial Teukolsky--Starobinsky  constant exists at least for low values of Teukolsky spin:

\begin{lemma}\label{lemma:TS-radial-constant-examples} For any $s\in\{0,\frac12,1,\frac32,2,\frac52,3\}$, there exists a radial Teukolsky--Starobinsky constant. Furthermore, it can be computed explicitly, for instance:
\begin{align}\label{eq:TS-radial-constants}
\begin{split} 
\mathfrak C_{0}(a,M,\omega,m,\Lambda)&=1\,,\\
\mathfrak C_{\frac12}(a,M,\omega,m,\Lambda)&= -\mathfrak{B}_{\frac12}(a\omega,m,\Lambda)\,,\\
\mathfrak C_{1}(a,M,\omega,m,\Lambda)&= \mathfrak{B}_{1}(a\omega,m,\Lambda)\,,\\
\mathfrak C_{\frac32}(a,M,\omega,m,\Lambda)&=  -\mathfrak{B}_{\frac32}(a\omega,m,\Lambda)\,,\\
\mathfrak C_2(a,M,\omega,m,\Lambda)&= \mathfrak{B}_{2}(a\omega,m,\Lambda)+144M^2\omega^2\,,\\
\mathfrak C_{\frac52}(a,M,\omega,m,\Lambda)&= -\mathfrak{B}_{\frac52}(a\omega,m,\Lambda) +1152(\Lslash+2)M^2\omega^2 \\
\mathfrak C_{3}(a,M,\omega,m,\Lambda)&= \mathfrak{B}_3(a\omega,m,\Lambda) + 576\lp[(3\Lslash+10)^2+100a\omega(m-a\omega)\rp]M^2\omega^2\,,
\end{split}
\end{align}
where $\mathfrak{B}_s$ may be read off from \eqref{eq:TS-angular-constants}. 
In the above, if $a\omega=0$, and $\Lslash=\bmLslash_{sml}^{(a\omega)}$ corresponds to a spin-weighted spheroidal eigenvalue with spheroidal parameter $\nu=a\omega$ for some $l$, then $\mathfrak{C}_s(a,M,\omega,m,l)\geq 1$.
\end{lemma}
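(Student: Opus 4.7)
The plan is a case-by-case verification for each $s \in \{0, \tfrac12, 1, \tfrac32, 2, \tfrac52, 3\}$, followed by a comparison with the angular Teukolsky--Starobinsky constant for the final inequality.

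For existence, fix $s$ and apply the operator of Definition~\ref{def:TS-radial-constant} to $\Delta^s \alpha^{[+s]}$. The idea is to compute iteratively: first apply $\hat{\mc{D}}_0^+$ repeatedly $2s$ times, using the radial ODE~\eqref{eq:radial-ODE-alpha} to eliminate every derivative of $\alpha^{[+s]}$ of order $\geq 2$ after each step, so that intermediate expressions remain linear combinations of $\alpha^{[+s]}$ and $\frac{d}{dr}\alpha^{[+s]}$ with coefficients rational in $r$. Proposition~\ref{prop:TS-radial} ensures that the resulting $\Delta^s(\hat{\mc{D}}_0^+)^{2s}(\Delta^s\alpha^{[+s]})$ solves the radial ODE of spin $-s$, so the subsequent block $\Delta^s(\hat{\mc{D}}_0^-)^{2s}$ admits the same reduction and, by construction, produces a scalar multiple of $\Delta^s\alpha^{[+s]}$; this scalar is the candidate value of $\mathfrak{C}_s$. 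Running the analogous computation starting from $\alpha^{[-s]}$ must yield the same scalar, establishing existence. Collecting terms by powers of $(r-M)$, $\omega$, $m$ and $\Lambda$ then produces the formulas in \eqref{eq:TS-radial-constants}.

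For the non-negativity when $a\omega = 0$, we inspect each case of \eqref{eq:TS-radial-constants} separately. For $s \in \{0, \tfrac12, 1, \tfrac32\}$, one has $\mathfrak{C}_s = (-1)^{2s}\mathfrak{B}_s$ exactly, and Lemma~\ref{lemma:TS-angular-constant-examples} gives $(-1)^{2s}\mathfrak{B}_s(\nu=0, m, l) \geq 1$ at a spheroidal eigenvalue. For $s = 2$ and $s = 3$, the correction terms at $\nu = 0$ become $144 M^2\omega^2$ and $576(3\Lslash + 10)^2 M^2\omega^2$, both manifestly non-negative. For $s = \tfrac52$, the correction is $1152(\Lslash + 2)M^2\omega^2$, and at $\nu = 0$ one has $\Lslash = l(l+1) - s(s-1) \geq 5$ since $l \geq 5/2$, so this is again non-negative. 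In all cases, $\mathfrak{C}_s(a, M, \omega, m, l) \geq (-1)^{2s}\mathfrak{B}_s(0, m, l) \geq 1$, as claimed.

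The main obstacle is the algebraic bookkeeping for the higher-spin cases: for $s = \tfrac52$ and $s = 3$ we compose up to six first-order operators and must repeatedly invoke the ODE after each application; verifying that all residual derivative terms cancel, leaving a pure multiplication operator, requires computer algebra in practice, although every individual step is elementary.
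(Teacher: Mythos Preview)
Your proposal is correct and follows essentially the same route as the paper: apply the operators in succession, using the radial ODE~\eqref{eq:radial-ODE-alpha} after each step to reduce to a first-order expression, and observe (with computer-algebra assistance for the higher spins) that the end result is a pure multiple of the original function. The paper additionally mentions, as a computationally lighter alternative, extracting $\mathfrak{C}_s$ from one of the asymptotic coefficients $\mathfrak{C}_s^{(3)},\mathfrak{C}_s^{(4)},\mathfrak{C}_s^{(7)},\mathfrak{C}_s^{(8)}$ of Proposition~\ref{prop:TS-radial}, but this is not essential to the argument. For the final inequality, the paper only spells out the case $\omega=0$ (where all $M^2\omega^2$ terms drop out and one reads off $\mathfrak{C}_s=(-1)^{2s}\mathfrak{B}_s(0,m,l)\geq 1$ directly from Lemma~\ref{lemma:TS-angular-constant-examples}); your treatment is slightly more thorough in that you also handle the case $a=0$, $\omega\neq 0$ by checking that each correction term is non-negative at $\nu=0$.
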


\begin{proof} There are several ways of showing existence of the radial Teukolsky--Starobinsky constants. One option is to use Definition~\ref{def:TS-radial-constant}, i.e.\ to  apply the operators in the products above to radial functions one by one and using the radial ODE~\eqref{eq:radial-ODE-alpha} to trade second order derivatives of those functions by first and zeroth order terms (see for instance \cite[Sections 70 and 81]{Chandrasekhar} for $s=1, 2$). With the aid of a standard laptop, this naive approach allows one to verify existence of the constant beyond the upper bound  $s=3$ of the statement. 

Alternatively, in light of the radial Teukolsky--Starobinsky identities of Proposition~\ref{prop:TS-radial}, if the radial Teukolsky--Starobinsky constant exists, we must have
\begin{gather*}
\mathfrak{C}_s^{(3)}=\frac{\mathfrak{C}_s}{\mathfrak{C}_s^{(1)}}\,, \quad
\mathfrak{C}_s^{(4)}=\frac{\mathfrak{C}_s}{\mathfrak{C}_s^{(2)}}\,, \qquad
\mathfrak{C}_s^{(7)}= \frac{\mathfrak{C}_s}{\mathfrak{C}_s^{(5)}}\,,\quad
 \mathfrak{C}_s^{(8)}=\frac{\mathfrak{C}_s}{\mathfrak{C}_s^{(6)}}\,.
\end{gather*}
Hence, one may compute $\mathfrak{C}_s$  from one of $\mathfrak{C}_s^{(3)}, \mathfrak{C}_s^{(4)}, \mathfrak{C}_s^{(7)}, \mathfrak{C}_s^{(8)}$. As the latter are computable from the recursive formulas which yield the first $s$ coefficients of certain asymptotic series for solutions of \eqref{eq:radial-ODE-alpha}, this method is less computationally demanding than the previous one and has been suggested earlier in \cite{Fiziev2009} (see also the companion paper \cite{Fiziev2010}).

To conclude, we note that, for $\omega=0$, a glance at the formulas gives $\mathfrak{C}_s(a,M,\omega=0,m,l)=(-1)^{2s}\mathfrak{B}_s(a\omega=0,m,l)\geq 1$, from Lemma~\ref{lemma:TS-angular-constant-examples}.
\end{proof}

We remark that the formulas for $\mathfrak{C}_s$ when $s\leq 3$, given here in Lemma~\ref{lemma:TS-radial-constant-examples}, have been obtain before in \cite{Chandrasekhar1990} and \cite{Kalnins1989}.

\subsection{Properties of the radial Teukolsky--Starobinsky constants}

\subsubsection{Why the complex-conjugation argument for non-negativity is false}
\label{sec:classical-argument-nonnegativity-is-false}

This section examines the argument in \cite{Kalnins1989,Kalnins1992}, subsequently picked up by other authors, purportedly showing non-negativity of the Teukolsky--Starobinky constant for general $s\in\frac12\mathbb{Z}_{\geq 0}$. These authors claim that non-negativity of $\mathfrak{C}_s$ may be viewed as a consequence of $\Delta^{\frac{s}{2}(1\pm 1)}\smlambda{\upalpha}{\pm s}$ satisfying complex conjugate equations, or of the radial Teukolsky--Starobinsky identities being generated by complex conjugate operators, somehow implies non-negativity of the Teukolsky--Starobinky constant. To the best of our knowledge, the underlying rationale is that there is an analogy with the angular setting, where the relation between $\mc{L}_n^\pm $ and $\mc{L}_{-n+1}^\mp$ in \eqref{eq:TS-angular-IBP-lemma} leads to the conclusion of Lemma~\ref{lemma:TS-angular-constant-sign}. However, \eqref{eq:TS-angular-IBP-lemma} establishes an \textit{adjointness} relation between the angular operators $\mc{L}_n^\pm $ and $-\mc{L}_{-n+1}^\mp$ in the space of (real-valued) smooth spin-weighted functions.  In contrast, the relation between $\mc{D}_n^\pm $ and $\mc{D}_{n}^\mp$ is one of \textit{complex-conjugation}.

It is a fact of life that, in a space of complex-valued functions, as solutions to the radial ODE~\eqref{eq:radial-ODE-alpha} are bound to lie in, the complex conjugate and the adjoint of an operator are not necessarily the same\footnote{This is true already for matrices: the adjoint of a complex-valued matrix is obtained by complex-conjugation followed by transposition. Performing only one of these operations on the matrix will not, in general, produce its adjoint.}.  Indeed, for some weight $w(r)\colon(r_+,\infty)\to[0,\infty)$, for $f$ and $h$ sufficiently regular complex-valued functions of $r$ that the boundary terms in the following vanish, we have
\begin{align*}
\int_{r_+}^\infty \overline{h}\, \mc{D}^{\pm}_{0} f w \,\d r 
= -\int_{r_+}^\infty f\overline{\lp(\mc{D}^{\pm}_{0}+\frac{d}{dr}\log w\rp) h}w\, \d r\,,
\end{align*}
so the adjoint of $\mc{D}^{\pm}_{0}$ will be $-\mc{D}^{\pm}_{0}-\frac{d}{dr}\log w$. This makes the notation used in classical references such as \cite{Chandrasekhar} (see also the more recent \cite{Teukolsky2015}) rather unfortunate if one is looking to extrapolate from the contents of this book to $s>2$.

The upshot is that there is, in fact, no hope of establishing an analogue of Lemma~\ref{lemma:TS-angular-constant-sign} by a similar method. For we could establish
\begin{align*}
\mathfrak{C}_s(a,M,\omega,\Lambda) &= \int_{r_+}^\infty \overline{\upalpha^{[\pm s]}}\prod_{j=0}^{2s-1}\lp(\Delta^{1/2}\mc{D}^\mp_{j/2}\rp)\prod_{k=0}^{2s-1}\lp(\Delta^{1/2}\mc{D}^\pm_{k/2}\rp)\upalpha^{[\pm s]} w \,\d r  \\
&=  \int_{r_+}^\infty \lp|\prod_{k=0}^{2s-1}\lp(\Delta^{1/2}\mc{D}^\pm_{k/2}\rp)\upalpha^{[\pm s]}\rp|^2 w \, \d r \geq 0\,,
\end{align*}
where $\upalpha^{[\pm s]}$ is chosen to have unit $L^2_w$ norm, only if we were to prove
\begin{align}  \label{eq:TS-radial-IBP-lemma-fake}
\int_{r_+}^\infty \overline{h} \Delta^{1/2}\mc{D}^{\pm}_{n/2} f w \,\d r = \int_{r_+}^\infty f \Delta^{1/2}\overline{\mc{D}^{\mp}_{s-\frac{n+1}{2}} h} w \,\d r\,,
\end{align}
and \eqref{eq:TS-radial-IBP-lemma-fake} clearly cannot hold for any positive weight $w$. It follows that, if $\mathfrak{C}_s$ is indeed non-negative for some $s$, a different proof strategy should be sought. 

\subsubsection{Saving and improving non-negativity for $s\leq 2$}
\label{sec:saving-non-negativity}

As we have noted that the angular Teukolsky--Starobinsky constants have a definite sign (Lemma~\ref{lemma:TS-angular-constant-sign}), it is natural to try to compare the explicit expressions for such constants with those of the radial ones, i.e.\ Lemmas~\ref{lemma:TS-angular-constant-examples} and \ref{lemma:TS-radial-constant-examples} in the case where the spheroidal parameter in the angular ODE is $\nu=a\omega$. As is clear from our \eqref{eq:TS-radial-constants} (see also \cite{Kalnins1992}),
\begin{lemma} \label{lemma:TS-radial-constant-decomposition} Fix $s\in\lp\{0,\frac12,1,\frac32,2,\frac52,3\rp\}$, $m$ such that $m-s\in\mathbb{Z}$, $\Lambda\in\mathbb{R}$, $M>0$, $|a|\leq M$ and $\omega\in\mathbb{R}$. Then, there is a real $\mathfrak F_s=\mathfrak{F}_s(a\omega, m,\Lambda)$ such that
$$\mathfrak{C}_s(a,M,\omega,m,\Lambda)=(-1)^{2|s|}\mathfrak{B}_s(a\omega,m,\Lambda) + \mathfrak{F}_s(a\omega, m,\Lambda)M^2\omega^2\,.$$
Indeed, one has $\mathfrak{F}_s\equiv 0$ for $s\leq \frac32$ and 
\begin{align*}
\mathfrak{F}_2=144\,, \qquad \mathfrak{F}_{\frac52} = 1152(\Lslash+2)\,, \qquad \mathfrak{F}_{3} = 576\lp[(3\Lslash+10)^2+100a\omega(m-a\omega)\rp]\,.
\end{align*}
\end{lemma}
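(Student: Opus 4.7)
The statement is essentially a direct consequence of comparing the explicit formulas for the angular and radial Teukolsky--Starobinsky constants that have already been collected in Lemmas~\ref{lemma:TS-angular-constant-examples} and \ref{lemma:TS-radial-constant-examples}. My plan is therefore purely computational: for each of the seven spins $s\in\{0,\tfrac12,1,\tfrac32,2,\tfrac52,3\}$, I would form the difference $\mathfrak{C}_s(a,M,\omega,m,\Lambda)-(-1)^{2|s|}\mathfrak{B}_s(a\omega,m,\Lambda)$ using \eqref{eq:TS-radial-constants} and \eqref{eq:TS-angular-constants}, and read off $\mathfrak{F}_s$ as the quotient of this difference by $M^2\omega^2$.

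Concretely, for $s\in\{0,\tfrac12,1,\tfrac32\}$, a glance at \eqref{eq:TS-radial-constants} shows that $\mathfrak{C}_s$ is precisely $(-1)^{2|s|}\mathfrak{B}_s$, accounting for the alternating signs $\mathfrak{C}_{1/2}=-\mathfrak{B}_{1/2}$, $\mathfrak{C}_1=\mathfrak{B}_1$, $\mathfrak{C}_{3/2}=-\mathfrak{B}_{3/2}$, so that $\mathfrak{F}_s\equiv 0$. For $s=2$, the extra contribution is $144M^2\omega^2$, giving $\mathfrak{F}_2=144$. For $s=5/2$, the extra contribution is $1152(\Lslash+2)M^2\omega^2$, so $\mathfrak{F}_{5/2}=1152(\Lslash+2)$. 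Finally, for $s=3$, the extra contribution displayed in \eqref{eq:TS-radial-constants} gives $\mathfrak{F}_3=576\bigl[(3\Lslash+10)^2+100a\omega(m-a\omega)\bigr]$. Each of these coefficients is manifestly real whenever $(a\omega,m,\Lambda)$ are real, which confirms the remaining claim.

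There is no real obstacle here: the content of the lemma is a consolidation of the formulas already verified case-by-case in the proof of Lemma~\ref{lemma:TS-radial-constant-examples}. The only mildly nontrivial observation is structural, namely that for the spins in question the $M^2\omega^2$ prefactor captures \emph{all} the discrepancy between $\mathfrak{C}_s$ and $(-1)^{2|s|}\mathfrak{B}_s$, with no contributions of mixed order in $M$ and $\omega$ appearing. This reflects the fact that $M$ enters the radial ODE~\eqref{eq:radial-ODE-alpha} only through $\Delta$ and the combination $M\omega$ in the long-range potential, so that the Teukolsky--Starobinsky operator must produce terms polynomial in $M\omega$ beyond those already present in the angular operator, and dimensional/degree counting in the successive applications of $\hat{\mc D}_0^\pm$ shows the leading such correction appears only at $s\geq 2$, with the overall power $M^2\omega^2$ dictated by the two highest-derivative terms that cannot be absorbed into the angular part.
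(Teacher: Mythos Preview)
Your proposal is correct and matches the paper's approach exactly: the paper does not even supply a formal proof, introducing the lemma with ``As is clear from our \eqref{eq:TS-radial-constants},'' so the intended argument is precisely the line-by-line comparison of \eqref{eq:TS-radial-constants} with \eqref{eq:TS-angular-constants} that you carry out. Your closing structural remark about why the discrepancy must enter as $M^2\omega^2$ is additional heuristic colour not present in the paper, but it is harmless.
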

Lemma~\ref{lemma:TS-radial-constant-decomposition} indeed yields, in the restricted case $s\leq 2$, non-negativity of the Teukolsky--Starobinsky constant, as correctly noted in Teukolsky's original paper on the identities \cite{Teukolsky1974}. Indeed, by Lemma~\ref{lemma:TS-angular-constant-zeros}, it even yields positivity:

\begin{lemma}[Positivity of radial TS constant for $s\leq 2$] \label{lemma:TS-radial-constant-nonnegative} Fix $s\in\lp\{0,\frac12, 1,\frac32, 2\rp\}$, $m$ such that $m-s\in\mathbb{Z}$, $\Lambda\in\mathbb{R}$, $M>0$, $|a|\leq M$ and $\omega\in\mathbb{R}$. Then, $\mathfrak{F}_s(a\omega,m,\Lambda)\geq 0$ hence $\mathfrak{C}_s(a,M,\omega,m,\Lambda)\geq 0$.

Furthermore, if $\Lambda=\bm\Lambda_{sml}^{(a\omega)}$ is a spin-weighted spheroidal eigenvalue for some $l\in\mathbb{Z}_{\geq \max\{|m|,s\}}$, then $\mathfrak{C}_s(a,M,\omega,m,l)> 0$. In particular, there are no real algebraically special frequencies $(\omega,m,l)$ for any Kerr parameters $(a,M)$. 
\end{lemma}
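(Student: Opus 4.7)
The plan is to reduce the statement to facts already established for the angular constants by invoking the decomposition in Lemma~\ref{lemma:TS-radial-constant-decomposition}:
\[
\mathfrak{C}_s(a,M,\omega,m,\Lambda) = (-1)^{2s}\mathfrak{B}_s(a\omega,m,\Lambda) + \mathfrak{F}_s(a\omega,m,\Lambda)\,M^2\omega^2.
\]
First I would observe, directly from the explicit formulas recalled in Lemma~\ref{lemma:TS-radial-constant-decomposition}, that $\mathfrak{F}_s \equiv 0$ whenever $s \leq 3/2$ and $\mathfrak{F}_2 = 144$. Hence $\mathfrak{F}_s(a\omega,m,\Lambda)\,M^2\omega^2 \geq 0$ throughout the range $s \in \{0,1/2,1,3/2,2\}$ considered in the statement.

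Next, since existence of the angular Teukolsky--Starobinsky constant for $s \leq 2$ is guaranteed by Lemma~\ref{lemma:TS-angular-constant-examples}, Lemma~\ref{lemma:TS-angular-constant-sign} applies and gives $(-1)^{2s}\mathfrak{B}_s(a\omega,m,\Lambda) \geq 0$. Adding the two non-negative contributions in the decomposition then yields the first claim, $\mathfrak{C}_s(a,M,\omega,m,\Lambda) \geq 0$. To upgrade this to strict positivity when $\Lambda = \bm{\Lambda}_{sml}^{(a\omega)}$ is a spin-weighted spheroidal eigenvalue, I would invoke Lemma~\ref{lemma:TS-angular-constant-zeros}, which sharpens the angular bound to $(-1)^{2s}\mathfrak{B}_s(a\omega,m,l) > 0$; combined with $\mathfrak{F}_s M^2\omega^2 \geq 0$ this gives $\mathfrak{C}_s(a,M,\omega,m,l) > 0$. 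The absence of real algebraically special frequencies is then immediate, since these are by definition the real $(\omega,m,l)$ at which $\mathfrak{C}_s$ vanishes.

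There is no substantial obstacle here beyond bookkeeping: the heavy lifting has already been done in the decomposition lemma and in the sign/positivity analysis for the angular constants. The only genuinely interesting observation is that the extra summand $\mathfrak{F}_s M^2\omega^2$, which is present in the radial constant but not in the angular one, happens to be non-negative precisely in the range $s \leq 2$. As one sees from the expressions for $\mathfrak{F}_{5/2}$ and $\mathfrak{F}_{3}$ in Lemma~\ref{lemma:TS-radial-constant-decomposition}, this non-negativity fails for higher spin, which is ultimately why the same strategy breaks down there and why Fact~\ref{fact:negativity} can hold.
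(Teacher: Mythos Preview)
Your proposal is correct and follows exactly the approach the paper takes: the paper's ``proof'' of this lemma is in fact just the sentence preceding it, which invokes Lemma~\ref{lemma:TS-radial-constant-decomposition} for non-negativity and Lemma~\ref{lemma:TS-angular-constant-zeros} for strict positivity, precisely as you do. Your closing remark on why the strategy fails for $s>2$ also mirrors the paper's Remark immediately following the lemma.
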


\begin{remark} For $s> 2$,  $\mathfrak{F}_{s}$ depends nontrivially on  $\Lambda$. Without more constraints on $\Lambda$ in terms of the black hole parameters $(a,M)$ and frequencies $\omega$ and $m$, one cannot hope to investigate the validity of Lemma~\ref{lemma:TS-radial-constant-nonnegative} for $s>2$. 
\end{remark}

In fact, for $s\leq 2$, we can use Lemma~\ref{lemma:TS-radial-constant-decomposition} and the last statement in Lemma~\ref{lemma:TS-radial-constant-examples}  to obtain a more precise statement when $\Lambda=\bm\Lambda$ is a spin-weighted spheroidal eigenvalue:

\begin{lemma}[Lower bound for radial TS constant for $s=2$] \label{lemma:TS-radial-positive-spin2} Fix $s=2$, $m\in\mathbb{Z}$, $l\in\mathbb{Z}_{\geq \max\{|m|,2\}}$, $M>0$, $|a|\leq M$ and $\omega\in\mathbb{R}$. The radial Teukolsky--Starobinsky constant $\mathfrak{C}_2$ admits a positive lower bound, i.e.\ there is a $b>0$ such that 
$$\inf_{(a,M,\omega,m,l)}\mathfrak{C}_2(a,M,\omega,m,l)\geq b>0\,.$$ 
Numerically, using \cite{BHPToolkit}, we find $b\approx 150$ is enough.
\end{lemma}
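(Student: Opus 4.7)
The plan builds on Lemma~\ref{lemma:TS-radial-constant-decomposition}, which gives $\mathfrak{C}_2(a,M,\omega,m,l) = \mathfrak{B}_2(a\omega,m,l) + 144 M^2\omega^2$; combined with $\mathfrak{B}_2 \geq 0$ from Lemma~\ref{lemma:TS-angular-constant-sign}, this yields $\mathfrak{C}_2 \geq 144 M^2\omega^2$ outright. Fixing a threshold $T>0$, the high-frequency regime $M|\omega| \geq T$ is handled immediately by $\mathfrak{C}_2 \geq 144 T^2$. Because $|a|\leq M$, in the complementary regime $M|\omega|<T$ one has $|\nu|:=|a\omega| \leq M|\omega| < T$, so the task reduces to a uniform positive lower bound on $\mathfrak{B}_2(\nu,m,l)$ for $|\nu|\leq T$ and all admissible $(m,l)$.

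I would then split this low-frequency piece by the size of $l$. For $l\leq L_0$ (some constant to be fixed), there are only finitely many angular modes $(m,l)$ with $l\geq \max\{|m|,2\}$. For each of these, the function $\nu\mapsto\mathfrak{B}_2(\nu,m,l)$ is continuous and strictly positive on the compact interval $[-T,T]$ by Lemma~\ref{lemma:TS-angular-constant-zeros}, hence attains a positive infimum; taking the minimum over the finitely many $(m,l)$ gives a uniform positive bound. For $l>L_0$, the min-max principle applied to the self-adjoint operator underlying \eqref{eq:angular-ode} yields $|\bm\Lambda_{2ml}^{(\nu)} - \bm\Lambda_{2ml}^{(0)}| \leq \sup_\theta|\nu^2\sin^2\theta+4\nu\cos\theta| \leq T^2+4T$, and combining with \eqref{eq:def-Lambda} and the constraint $|m|\leq l$ produces $\bmLslash_{2ml}^{(\nu)} \geq l(l+1)-2 - T^2-4T - 2lT$, which grows like $l^2$ uniformly in $m$. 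Feeding this into the formula \eqref{eq:TS-angular-constants} for $\mathfrak{B}_2$, the leading $\bmLslash^2(\bmLslash+2)^2$ term dominates the remaining ones, whose absolute values are polynomially bounded in $l$ with $T$-dependent coefficients; for $L_0$ large enough, $\mathfrak{B}_2 \geq 1$ throughout this range, completing the uniform lower bound.

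The principal obstacle is purely quantitative: making the qualitative dominance $\bmLslash^2(\bmLslash+2)^2 \gg$ lower-order terms sharp enough that $L_0$ can be chosen independently of $m$, and then reconciling the three regimes to extract an explicit admissible constant $b$. The precise numerical value $b\approx 150$ advertised in the statement is naturally accessible only via the numerics announced in the lemma, since the analytic bounds used above are very far from sharp, especially in the transition zone where $M|\omega|$ and $1/L_0$ are both of order $T$.
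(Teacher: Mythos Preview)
Your argument follows the same overall structure as the paper's proof: both use the decomposition $\mathfrak{C}_2=\mathfrak{B}_2+144M^2\omega^2$ together with $\mathfrak{B}_2\geq 0$ to control the high-frequency region $|M\omega|\geq T$ (the paper's $\delta$), and both appeal to continuity and the value $\mathfrak{C}_2(a,M,0,m,l)\geq 576$ at $\omega=0$ for the low-frequency region.

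The difference is that the paper's proof is a sketch which does not explain why the threshold $\delta$ can be taken uniform in $(a,m,l)$; it simply invokes continuity in $\omega$ for fixed parameters. Your proposal is more careful on exactly this point: by reducing the low-frequency case to a bound on $\mathfrak{B}_2(\nu,m,l)$ for $|\nu|\leq T$, then splitting into finitely many modes $l\leq L_0$ (compactness plus Lemma~\ref{lemma:TS-angular-constant-zeros}) and large $l>L_0$ (min-max estimate on $\bm\Lambda^{(\nu)}_{2ml}$ plus dominance of the $\bmLslash^2(\bmLslash+2)^2$ term), you obtain genuine uniformity. This is a legitimate strengthening; the paper's version implicitly relies on the reader supplying an argument of this type. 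Your min-max bound and the polynomial counting of lower-order terms in $\mathfrak{B}_2$ are correct, so the proposal stands.
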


\begin{proof} 
First note that, once $a$ and $(s,m,l)$ are fixed, $\bmLslash_{ml}^{[s],\,(a\omega)}$ is continuous in $\omega$ (see, for instance, \cite{Meixner1954} or \cite{Hartle1974}), hence  $\mathfrak{C}_2(a,M,\omega,m,l)$ is also continuous in $\omega$. 

At $\omega=0$, $\mathfrak{C}_2(a,M,0,m,l)=(-1)^{4}\mathfrak{B}_2(0,m,l)=(24)^2=576>0$. Hence, by continuity in $\omega$, there is a $\delta>0$ such that $\mathfrak{C}_2(a,M,\omega,m,l)\geq 1$ for $|M\omega|\leq \delta$. On the other hand, if $|M\omega|>\delta$, by the non-negativity of $\mathfrak{B}_2$ (Lemma~\ref{lemma:TS-angular-constant-sign})  $\mathfrak{C}_2\geq \mathfrak{F}_2M^2\omega^2> 144\delta^2$.  This concludes the proof.
\end{proof}

However, Lemma~\ref{lemma:TS-radial-positive-spin2} does not extend to $s< 2$. Though always positive, it follows from Lemma~\ref{lemma:TS-angular-high-freq-expansion} that the radial Teukolsky--Starobinksy constants for those spins asymptotically approach $0$ as $\omega\to \infty$ if $a\neq 0$ is fixed and suitable $(l,m)$ are chosen:
\begin{lemma} \label{lemma:TS-radial-3/2-high-freq-expansion}
Fix $M>0$, $0<|a|\leq M$ and $s\in\{\frac12,1,\frac32\}$. Then, there are some pairs $(l,m)$, where $m-s\in\mathbb{Z}$ and $l-\max\{|m|,s\}\in\mathbb{Z}_{\geq 0}$, for which we have, as $\omega\to \infty$,
\begin{equation}
\mathfrak{C}_s(a,M,\omega,m,l) = O(|M\omega|^{-N})\,, \quad \forall\, N>0\,. \label{eq:TS-radial-3/2-high-freq-expansion}
\end{equation}
\end{lemma}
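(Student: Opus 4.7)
The plan is to reduce this statement directly to the angular superpolynomial decay already established in Lemma~\ref{lemma:TS-angular-high-freq-expansion}, via the decomposition formula for the radial constant provided by Lemma~\ref{lemma:TS-radial-constant-decomposition}.

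First I would recall from Lemma~\ref{lemma:TS-radial-constant-decomposition} that for $s\in\{\frac12, 1, \frac32\}$ we have $\mathfrak{F}_s \equiv 0$, so
\begin{equation*}
\mathfrak{C}_s(a,M,\omega,m,l) = (-1)^{2s}\mathfrak{B}_s(a\omega, m, l) \, .
\end{equation*}
In particular, no genuinely radial contribution of the form $M^2\omega^2$ survives for these half-integer and integer spins, and the whole asymptotic behavior of $\mathfrak{C}_s$ in $\omega$ is dictated by the angular Teukolsky--Starobinsky constant evaluated at spheroidal parameter $\nu = a\omega$.

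Next I would invoke Lemma~\ref{lemma:TS-angular-high-freq-expansion}, which supplies, for each $s\in\{\frac12, 1, \frac32\}$, explicit pairs $(l,m)$ (from the list in \eqref{eq:sml-for-angularTS-superpolynomial-decay}) along which
\begin{equation*}
(-1)^{2s}\mathfrak{B}_s(\nu, m, l) = O(\nu^{-N}) \qquad \text{as } \nu\to\infty, \text{ for all } N>0\, .
\end{equation*}
Since by hypothesis $a\neq 0$ and $M>0$ are fixed, setting $\nu = a\omega$ gives $|\nu| = |a|\cdot |\omega| \asymp |M\omega|$, so $\nu\to\infty$ if and only if $\omega\to\infty$, and $O(\nu^{-N}) = O(|M\omega|^{-N})$. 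Combining this with the previous display yields \eqref{eq:TS-radial-3/2-high-freq-expansion} for precisely the triples $(s,m,l)$ identified in \eqref{eq:sml-for-angularTS-superpolynomial-decay}.

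There is no real obstacle here: all the analytic work has been done upstream in the asymptotic expansion of $\bmLslash_{sml}^{(\nu)}$ (Proposition~\ref{prop:high-freq-expansion}) and in the verification that $A_{-1} = \pm 2$ together with $A_k = 0$ for $k\geq 0$ for the triples in \eqref{eq:sml-for-angularTS-superpolynomial-decay}. The only thing worth highlighting in the write-up is the necessity of $a\neq 0$: without it, $\nu = a\omega = 0$ identically in $\omega$, the angular constant is no longer driven to zero, and in fact one has $\mathfrak{C}_s(0,M,\omega,m,l)\geq 1$ by the last statement of Lemma~\ref{lemma:TS-radial-constant-examples}, so the conclusion genuinely fails in the Schwarzschild limit.
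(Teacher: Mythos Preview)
Your proposal is correct and takes essentially the same approach as the paper: the paper simply remarks that the lemma ``follows from Lemma~\ref{lemma:TS-angular-high-freq-expansion}'' via the identity $\mathfrak{C}_s=(-1)^{2s}\mathfrak{B}_s$ for $s\leq 3/2$ (Lemma~\ref{lemma:TS-radial-constant-decomposition}), which is exactly what you do. One cosmetic slip: the clause ``$\nu\to\infty$ if and only if $\omega\to\infty$'' is literally false when $a<0$, but your preceding observation $|\nu|\asymp|M\omega|$ together with the $m\mapsto -m$ symmetry of the angular ODE covers both signs, so the argument stands.
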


\subsubsection{A myth debunked: negative values for $s\geq \frac52$}
\label{sec:myth-debunked}

Not only is the pervasive argument outlined in Section~\ref{sec:classical-argument-nonnegativity-is-false} or in \cite{Kalnins1989,Chandrasekhar1990,Kalnins1992} purportedly asserting non-negativity of $\mathfrak{C}_s$ for all $s$ incorrect, but its conclusion is also false. Lemma~\ref{lemma:TS-angular-high-freq-expansion} is our starting point to debunk this myth; we now consider the contribution of $\mathfrak{F}_s(a\omega,m,l)$:

\begin{lemma}[Negativity for $s\geq \frac52$] \label{lemma:TS-radial-constant-negative} Fix $M>0$, $0<|a|\leq M$ and $s\in\{\frac52,3\}$. Then, there are some pairs $(l,m)$, where $m-s\in\mathbb{Z}$ and $l-\max\{|m|,s\}\in\mathbb{Z}_{\geq 0}$, for which there is an $A>0$ such that, as $\omega\to \infty$,
$$\mathfrak{C}_s(a,M,\omega,m,l)=-A|M\omega|^{2|s|-2}+O(|M\omega|^{2|s|-3})\,.$$
\end{lemma}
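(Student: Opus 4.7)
The plan is to start from the decomposition of Lemma~\ref{lemma:TS-radial-constant-decomposition},
$$\mathfrak{C}_s = (-1)^{2s}\mathfrak{B}_s(a\omega,m,l) + \mathfrak{F}_s(a\omega,m,\bmLslash)M^2\omega^2,$$
and to drive the angular contribution into the advertised error term by choosing $(l,m)$ from the list appearing in the proof of Lemma~\ref{lemma:TS-angular-high-freq-expansion}. For any such pair, the quoted polynomial rate $N\geq 8-2s$ gives $(-1)^{2s}\mathfrak{B}_s = O(|M\omega|^{-(8-2s)})$, well inside $O(|M\omega|^{2|s|-3})$. The remaining task is to extract a strictly negative leading order from $\mathfrak{F}_s M^2\omega^2$, using the expansion $\bmLslash_{sml}^{(a\omega)} = 2(q_{sml}-m)a\omega + O(1)$ from Proposition~\ref{prop:high-freq-expansion}.

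Inserting this expansion into the closed forms of $\mathfrak{F}_s$ read off from Lemma~\ref{lemma:TS-radial-constant-decomposition} yields, after a short computation,
\begin{align*}
s = \tfrac{5}{2}:\quad &\mathfrak{F}_{5/2}\,M^2\omega^2 = 2304\,(q_{sml}-m)\,a\,M^2\omega^3 + O(M^2\omega^2),\\
s = 3:\quad &\mathfrak{F}_{3}\,M^2\omega^2 = 576\{36(q_{sml}-m)^2 - 100\}\,a^2\,M^2\omega^4 + O(M^2\omega^3).
\end{align*}
For $s = 5/2$, one now picks a pair in the list for which $(q_{sml}-m)a<0$: when $a>0$ the choice $l=m$ with $m\geq 5/2$ puts us in the second case of the formula for $q_{sml}$, giving $q_{sml}-m=-4$ and leading coefficient $-9216\,a\,M^2\omega^3$. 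For $s=3$, negativity of the quartic coefficient forces $|q_{sml}-m|=1$, which is realized in the list by $l=m+2$ with $m\geq 1$ (yielding $q_{sml}-m=-1$) or by $l=m+3$ with $m\geq 0$ (yielding $q_{sml}-m=1$); either pair produces leading coefficient $-36864\,a^2\,M^2\omega^4$. In both cases the result takes the advertised form $-A|M\omega|^{2|s|-2}$ with $A=A(a,M)>0$.

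The only delicate bookkeeping is the sign of $a$. When $a<0$, the $\nu\to+\infty$ expansion of Proposition~\ref{prop:high-freq-expansion} must be replaced by its $\nu\to-\infty$ variant, described in the remark immediately after that proposition, or equivalently the angular-ODE symmetry $\bmLslash_{sml}^{(\nu)} = \bmLslash_{s,-m,l}^{(-\nu)}$, which simply swaps $m\to -m$ in the conditions characterizing the pair. A mirror pair satisfying both the Lemma~\ref{lemma:TS-angular-high-freq-expansion} hypothesis and the correct sign of the leading radial coefficient is always available. I expect no genuine obstacle: once the decomposition of Lemma~\ref{lemma:TS-radial-constant-decomposition} is invoked, the argument reduces to substitution into Proposition~\ref{prop:high-freq-expansion} and identification of the leading power of $\omega$.
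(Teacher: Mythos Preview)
Your proposal is correct and follows essentially the same approach as the paper: decompose $\mathfrak{C}_s$ via Lemma~\ref{lemma:TS-radial-constant-decomposition}, suppress $(-1)^{2s}\mathfrak{B}_s$ into the error term by invoking the proof of Lemma~\ref{lemma:TS-angular-high-freq-expansion}, and then read off the leading order of $\mathfrak{F}_s M^2\omega^2$ from Proposition~\ref{prop:high-freq-expansion}, identifying pairs $(l,m)$ with $q_{sml}-m$ making it negative. Your explicit choices ($q_{sml}-m=-4$ for $s=5/2$ and $q_{sml}-m=\pm 1$ for $s=3$) and the resulting coefficients $-9216\,a/M$ and $-36864\,a^2/M^2$ coincide with the paper's, and your handling of the case $a<0$ via the $m\to -m$ symmetry is slightly more explicit than, but consistent with, the paper's argument.
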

\begin{proof} We use Proposition~\ref{prop:high-freq-expansion} once more, appealing to the proof of Lemma~\ref{lemma:TS-angular-high-freq-expansion}. Indeed, if $s=5/2$,
$\mathfrak{F}_\frac52 = 2304(q_{\frac52,ml}- m)+O(1)$, 
and we note that
\begin{align*}
q_{\frac52,ml}- m=-2 &\Rightarrow \mathfrak{C}_\frac52 = -4608\frac{a}{M}|M\omega|^3+O(|M\omega|^2)\,,\\
q_{\frac52,ml}- m=-4 &\Rightarrow \mathfrak{C}_\frac52 = -9216\frac{a}{M}|M\omega|^3+O(|M\omega|^2)\,.
\end{align*}
For $s=3$, $
\mathfrak{F}_3=576\lp[36(q_{3,ml}- m)^2-100\rp](a\omega)^2+ O(|M\omega|)$, 
and we note that
\begin{align*}
q_{3,ml}- m=\pm 1 &\Rightarrow \mathfrak{B}_3 = -36864\frac{a^2}{M^2}|M\omega|^4+O(|M\omega|^3)\,;
\end{align*}
see also Figure~\ref{fig:F3}. As shown in the proof of Lemma~\ref{lemma:TS-angular-high-freq-expansion}, the above conditions may be realized for pairs $(l,m)$ satisfying the constraints of the present lemma.
\end{proof}

\begin{figure}[htbp]
\centering
\includegraphics[width=8cm]{./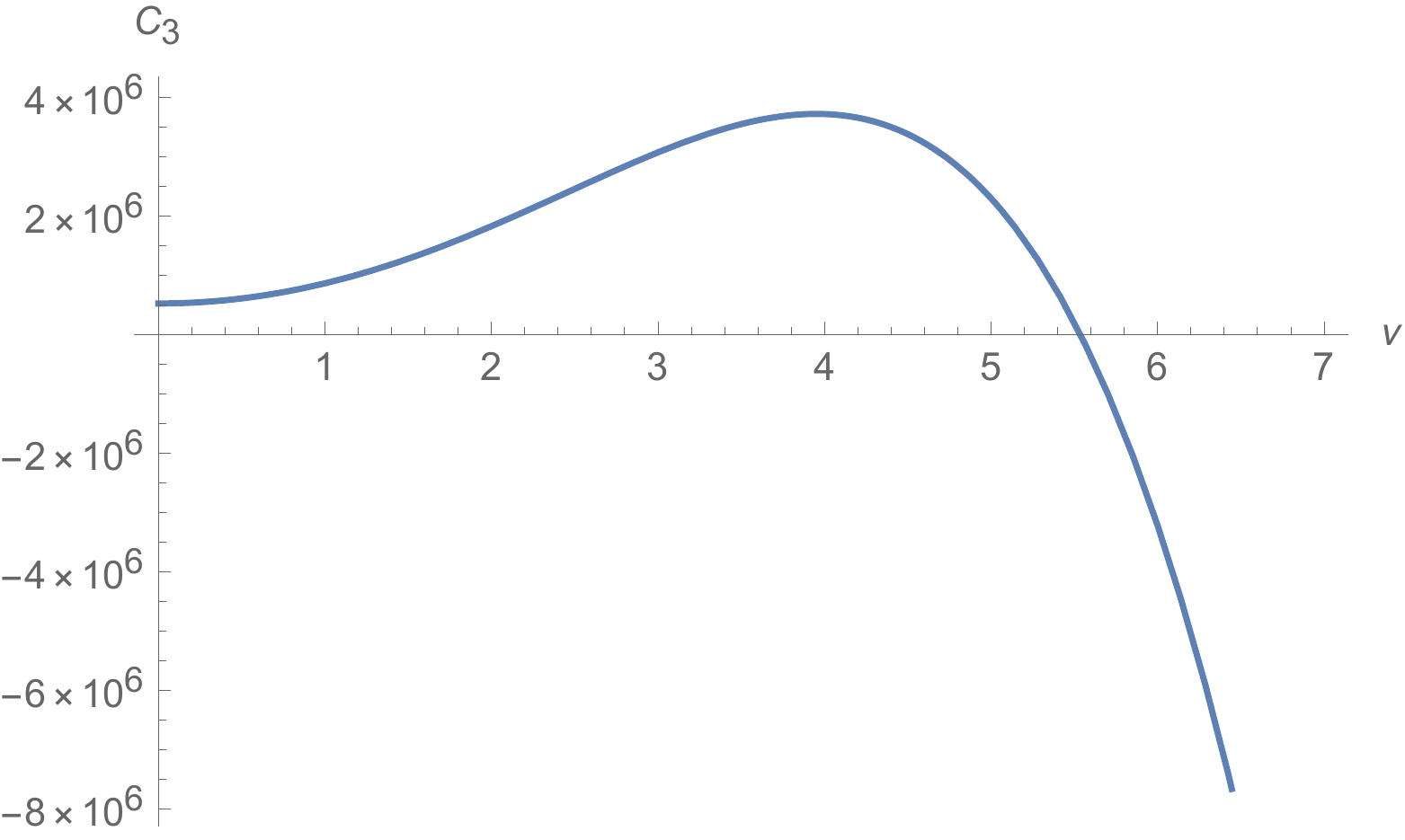}
\caption{Plot of $\mathfrak{C}_{3}(a\omega,m=0,l=3)$ as a function of $\nu=a\omega$ for $a=0.8M$, and obtained through numerical computation of the $\Lambda^{[s],(\nu)}_{ml}$.}
\label{fig:F3}
\end{figure}

While for $s=\frac52,3$ $\mathfrak{C}_s$ may take negative values for some $(\omega,m,l)$, they can also take positive values, for instance if $\omega=0$ (see Lemma~\ref{lemma:TS-radial-constant-examples}). Appealing once more to continuity in $\omega$ (see proof of Lemma~\ref{lemma:TS-radial-positive-spin2}), we thus conclude

\begin{lemma} Fix $M>0$, $0<|a|\leq M$ and $s\in\{\frac52,3\}$.
There exist real algebraically special frequencies, i.e.\ real $(\omega,m,l)$, such that $\mathfrak{C}_s(a,M,\omega,m,l)=0$.
\end{lemma}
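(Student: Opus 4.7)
The plan is to combine Lemma~\ref{lemma:TS-radial-constant-negative} with the positivity of $\mathfrak{C}_s$ at $\omega=0$ (from the last statement of Lemma~\ref{lemma:TS-radial-constant-examples}) via a continuity/intermediate value argument in $\omega$, mirroring the continuity argument used in the proof of Lemma~\ref{lemma:TS-radial-positive-spin2}. The point is that for the pairs $(m,l)$ isolated in Lemma~\ref{lemma:TS-radial-constant-negative}, $\mathfrak{C}_s$ is positive at $\omega=0$ and strictly negative for $|\omega|$ large, so it must vanish somewhere in between.

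More concretely, I would fix $s\in\{5/2,3\}$, $M>0$, $a$ with $0<|a|\leq M$, and pick a pair $(m,l)$ for which Lemma~\ref{lemma:TS-radial-constant-negative} applies (for example, any pair with $q_{s,ml}-m\in\{-2,-4\}$ when $s=5/2$, or $q_{3,ml}-m=\pm 1$ when $s=3$). First, I would note that the function
\[
\omega\longmapsto \mathfrak{C}_s(a,M,\omega,m,l)
\]
is continuous on $\mathbb{R}$: this follows from continuity of the spheroidal eigenvalue $\bmLslash_{sml}^{(a\omega)}$ in $\omega$ (as invoked in the proof of Lemma~\ref{lemma:TS-radial-positive-spin2}, citing \cite{Meixner1954} or \cite{Hartle1974}) together with the explicit polynomial expressions for $\mathfrak{C}_s$ in terms of $\bmLslash$, $m$, $a\omega$, $M\omega$ given by Lemma~\ref{lemma:TS-radial-constant-examples}.

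Next I would evaluate at the two endpoints. At $\omega=0$, the last statement of Lemma~\ref{lemma:TS-radial-constant-examples} gives $\mathfrak{C}_s(a,M,0,m,l)\geq 1>0$. On the other hand, Lemma~\ref{lemma:TS-radial-constant-negative} provides an $A>0$ such that
\[
\mathfrak{C}_s(a,M,\omega,m,l)=-A|M\omega|^{2s-2}+O(|M\omega|^{2s-3})\qquad\text{as }\omega\to\infty,
\]
so $\mathfrak{C}_s(a,M,\omega_\star,m,l)<0$ for some sufficiently large $\omega_\star>0$. By the intermediate value theorem applied on $[0,\omega_\star]$, there exists $\omega\in(0,\omega_\star)$ with $\mathfrak{C}_s(a,M,\omega,m,l)=0$, which is precisely a real algebraically special frequency.

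I do not anticipate any real obstacle here: the two inputs (strict positivity at $\omega=0$ and strict negativity for large $\omega$) are already proved, and continuity in the spheroidal parameter is classical. The only point deserving a small remark is that the $(m,l)$ delivered by Lemma~\ref{lemma:TS-radial-constant-negative} genuinely satisfy $m-s\in\mathbb{Z}$ and $l\geq\max\{|m|,s\}$, so that they are admissible frequency indices; this is already built into the statement of that lemma, so no further checking is needed.
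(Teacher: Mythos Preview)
Your proposal is correct and follows essentially the same approach as the paper: the paper's argument is simply to note that $\mathfrak{C}_s$ is positive at $\omega=0$ (Lemma~\ref{lemma:TS-radial-constant-examples}), negative for large $\omega$ (Lemma~\ref{lemma:TS-radial-constant-negative}), and continuous in $\omega$ (as in the proof of Lemma~\ref{lemma:TS-radial-positive-spin2}), then to invoke the intermediate value theorem. Your write-up is in fact more detailed than the paper's, which dispatches the lemma in a single sentence preceding its statement.
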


\section{On non-superradiant amplification for \texorpdfstring{$s>2$}{high spin} fields}
\label{sec:nonsuperradiant-amplification}

In this section, we consider the implications of the properties of the radial Teukolsky--Starobinsky constants on the energy associated to the Teukolsky radial ODE~\eqref{eq:radial-ODE-alpha}. 

\subsection{Energy for the Teukolsky equation}

We begin by discussing a notion of energy compatible with \eqref{eq:Teukolsky-equation-intro} for any $s\in\frac12\mathbb{Z}_{\geq 0}$ with the stationary Kerr Killing field fails to produce a conservation law unless $s= 0$. However, as $(r^2+a^2)^{\frac12}\Delta^{\pm \frac{s}{2}}\smlambda{\upalpha}{\pm s}$ satisfy complex conjugate ODEs, see \eqref{eq:radial-ODE-alpha}, the Wronskian
\begin{align*}
\lp(\Delta\frac{d}{dr}(\Delta^s\smlambda{\upalpha}{+s})\Delta^{-s}\overline{\smlambda{\upalpha}{-s}}-\Delta\frac{d}{dr}\lp(\overline{\smlambda{\upalpha}{-s}}\rp)\smlambda{\upalpha}{+s}\rp)
\end{align*}
is independent of $r$, and hence is conserved. From the Teukolsky--Starobinsky identities, given a solution $\smlambda{\upalpha}{+s}$ to \eqref{eq:radial-ODE-alpha} with spin $+s$, we may generate a $\smlambda{\upalpha}{-s}$ which solves \eqref{eq:radial-ODE-alpha} with spin $-s$, and vice-versa, to plug into this conservation law. We thus obtain:

\begin{lemma}[TS energy identity] \label{lemma:TS-energy-identity} Fix $M>0$, $|a|\leq M$, $s\in\frac12\mathbb{Z}_{\geq 0}$, $\omega\in\mathbb{R}\backslash\{0,m\upomega_+\}$ and $\Lambda\in\mathbb{R}$. Suppose there exists a radial Teukolsky--Starobinsky constant $\mathfrak{C}_s(a,M,\omega,m,\Lambda)$ as given in Definition~\ref{def:TS-radial-constant}. 

Let $\smlambda{\upalpha}{\pm s}$ be a solution to \eqref{eq:radial-ODE-alpha}, and  let its decomposition \eqref{eq:radial-ODE-decomposition} be characterized by 
\begin{align*}
\swei{a}{\pm s}_{\mc{H}^-}=0\,, \qquad \swei{\tilde a}{\pm s}_{\mc{H}^+ }\equiv \swei{a}{\pm s}_{\mc{H}^+ }\begin{dcases}
(2Mr_+)^{1/2}\lp\{\begin{array}{ll}
1\,, & |a|=M\\
(r_+-r_-)^{\pm s}\,, & |a|<M
\end{array}\rp\}  &\text{~if $s$ integer}\\
\lp\{\begin{array}{ll}
(2Mr_+)^{-1/2}\,, & |a|=M\\
(r_+-r_-)^{\pm s-1/2}\,, & |a|<M
\end{array}\rp\}  &\text{~if $s$ half-integer}
\end{dcases}
\,. 
\end{align*}
Then, it satisfies the energy identity 
\begin{align*}
1 = \swei{\mathfrak{R}}{\pm s}(a,M,\omega,m,\Lambda)+
 \swei{\mathfrak{T}}{\pm s}(a,M,\omega,m,\Lambda)\,,  
\end{align*}
where $\swei{\mathfrak{R}}{\pm s}(a,M,\omega,m,\Lambda)$  and $\swei{\mathfrak{T}}{\pm s}(a,M,\omega,m,\Lambda)$ are called the reflection and transmission coefficients, respectively, and are given as follows. If $s$ is an integer,
\begin{align*}
\begin{split}
\swei{\mathfrak{T}}{-s}&=\frac{\omega-m\upomega_+}{\omega}\frac{\mathfrak C_s^{(10)}}{(2\omega)^{2s}}\frac{\lp|\swei{\tilde a}{- s}_{\mc{H}^+ }\rp|^2}{\lp|\swei{a}{- s}_{\mc{I}^- }\rp|^2}\,, \quad
\swei{\mathfrak{R}}{-s}=\frac{\mathfrak C_s}{(2\omega)^{4s}}\frac{\lp|\swei{ a}{- s}_{\mc{I}^+ }\rp|^2}{\lp|\swei{a}{- s}_{\mc{I}^- }\rp|^2}\,;\\
\text{if further $\mathfrak{C}_s\neq 0$\,,} \quad \swei{\mathfrak{T}}{+s}&=\frac{\omega-m\upomega_+}{\omega}\frac{(2\omega)^{2s}}{\mathfrak C_s^{(9)}}\frac{\lp|\swei{\tilde a}{+ s}_{\mc{H}^+ }\rp|^2}{\lp|\swei{a}{+ s}_{\mc{I}^- }\rp|^2}\,,\quad
\swei{\mathfrak{R}}{+s}=\frac{(2\omega)^{4s}}{\mathfrak C_s}\frac{\lp|\swei{ a}{+ s}_{\mc{I}^+ }\rp|^2}{\lp|\swei{a}{+ s}_{\mc{I}^- }\rp|^2}\,.
\end{split} 
\end{align*}
If $s$ is half-integer, 
\begin{align*}
\begin{split}
\swei{\mathfrak{T}}{-s}&=\frac{\mathfrak C_s^{(10)}}{(2\omega)^{2s}}\frac{\lp|\swei{\tilde a}{- s}_{\mc{H}^+ }\rp|^2}{\lp|\swei{a}{- s}_{\mc{I}^- }\rp|^2}\,, \quad
\swei{\mathfrak{R}}{-s}=\frac{\mathfrak C_s}{(2\omega)^{4s}}\frac{\lp|\swei{ a}{- s}_{\mc{I}^+ }\rp|^2}{\lp|\swei{a}{- s}_{\mc{I}^- }\rp|^2}\,;\\
\text{if further $\mathfrak{C}_s\neq 0$\,,} \quad\swei{\mathfrak{T}}{+s}&=\frac{(2\omega)^{2s}}{\mathfrak C_s^{(9)}}\frac{\lp|\swei{\tilde a}{+ s}_{\mc{H}^+ }\rp|^2}{\lp|\swei{a}{+ s}_{\mc{I}^- }\rp|^2}\,,\quad
\swei{\mathfrak{R}}{+s}=\frac{(2\omega)^{4s}}{\mathfrak C_s}\frac{\lp|\swei{ a}{+ s}_{\mc{I}^+ }\rp|^2}{\lp|\swei{a}{+ s}_{\mc{I}^- }\rp|^2}\,.
\end{split} 
\end{align*}
Here, we take $\mathfrak{C}_s^{(9)}=\mathfrak{C}_s^{(10)}=1$ when $s=0$, $\mathfrak{C}_s^{(9)}=1$ and $\mathfrak{C}_s^{(10)}=\lp[4Mr_+(\omega-m\upomega_+)\rp]^2+(r_+-r_-)^2/4$ if $s=\pm 1/2$, and otherwise, using the shorthand notation $\mathfrak{C}_{s,j}:=\lp[4Mr_+(\omega-m\upomega_+)\rp]^2+(s-j)^2(r_+-r_-)^2$,
\begin{align*}
\mathfrak{C}_s^{(9)}&=
\begin{dcases}
\displaystyle\prod_{j=1}^{|s|} \mathfrak{C}_{s,j} &\text{if~} s\in\mathbb{Z}\\
\displaystyle\prod_{j=1}^{|s|-1/2} \mathfrak{C}_{s,j} &\text{if~} s\in\lp(\frac12\mathbb{Z}\rp)\backslash\mathbb{Z}
\end{dcases}\,,\qquad
\mathfrak{C}_s^{(10)}=
\begin{dcases}
\displaystyle\prod_{j=0}^{|s|-1} \mathfrak{C}_{s,j} &\text{if~} s\in\mathbb{Z}\\
\displaystyle\prod_{j=0}^{|s|-1/2} \mathfrak{C}_{s,j} &\text{if~} s\in\lp(\frac12\mathbb{Z}\rp)\backslash\mathbb{Z}
\end{dcases}\,.
\end{align*}
\end{lemma}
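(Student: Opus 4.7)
The plan is to derive the identity by combining the conserved Wronskian mentioned immediately above the statement with the radial Teukolsky--Starobinsky identities of Proposition~\ref{prop:TS-radial}, and by evaluating that Wronskian at the two endpoints of the radial domain. First, passing to the rescaled quantities $(r^2+a^2)^{1/2}\Delta^{\pm s/2}\smlambda{\upalpha}{\pm s}$ and a tortoise-like coordinate, the ODEs~\eqref{eq:radial-ODE-alpha} take the form of one-dimensional Schrödinger equations whose potentials are complex conjugates of each other; hence $(r^2+a^2)^{1/2}\Delta^{s/2}\smlambda{\upalpha}{+s}$ and $(r^2+a^2)^{1/2}\Delta^{-s/2}\overline{\smlambda{\upalpha}{-s}}$ solve the same equation, and the ordinary Wronskian of this pair is $r$-independent. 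Reorganizing in terms of $\smlambda{\upalpha}{\pm s}$ yields the $r$-independence of
\begin{equation*}
\mathcal{W}:=\Delta\frac{d}{dr}\lp(\Delta^s\smlambda{\upalpha}{+s}\rp)\Delta^{-s}\overline{\smlambda{\upalpha}{-s}}-\Delta\frac{d}{dr}\lp(\overline{\smlambda{\upalpha}{-s}}\rp)\smlambda{\upalpha}{+s}\,.
\end{equation*}

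Given the prescribed solution $\smlambda{\upalpha}{\pm s}$ (satisfying $\swei{a}{\pm s}_{\mc{H}^-}=0$), Proposition~\ref{prop:TS-radial} produces a partner of opposite spin whose $\mc{H}^\pm$- and $\mc{I}^\pm$-coefficients are explicit multiples of those of $\smlambda{\upalpha}{\pm s}$ by the $\mathfrak{C}_s^{(i)}$; going from $+s$ to $-s$ requires $\mathfrak{C}_s\neq 0$ in order to invert the Teukolsky--Starobinsky operator, by the argument in the proof of Lemma~\ref{lemma:TS-angular-constant-zeros}. Substituting this partner into $\mathcal{W}$ gives two quantities, equal by the Wronskian conservation, which I evaluate separately at the two ends of the radial domain. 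As $r\to\infty$, the explicit phases $e^{\pm i\omega r}$ of $\swei{\upalpha}{\pm s}_{\mc{I}^\pm}$ imply that only cross-terms between outgoing and ingoing components survive, $d/dr$ brings down $\pm i\omega$, and one obtains a combination of $|\swei{a}{\pm s}_{\mc{I}^\pm}|^2$ weighted by pure powers of $2\omega$ and by $\mathfrak{C}_s$, matching the $\mathfrak{R}$-term and the $1$ on the left-hand side. As $r\to r_+$, the assumption $\swei{a}{\pm s}_{\mc{H}^-}=0$ eliminates half of the terms; the remaining one is proportional to $|\swei{a}{\pm s}_{\mc{H}^+}|^2$, with coefficient given by $|\mathfrak{C}_s^{(2)}|^2$ or $|\mathfrak{C}_s^{(6)}|^2$. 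Taking modulus squared of each factor in the product representation of these constants in Proposition~\ref{prop:TS-radial} yields $\prod_j\mathfrak{C}_{s,j}$, which collapses to $\mathfrak{C}_s^{(9)}$ or $\mathfrak{C}_s^{(10)}$ (possibly times one additional $(\omega-m\upomega_+)$-factor), with the remaining $(2Mr_+)$- and $(r_+-r_-)$-prefactors absorbed into the renormalized amplitude $\swei{\tilde a}{\pm s}_{\mc{H}^+}$.

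The integer vs.\ half-integer dichotomy is built into this last computation through the product $\prod_{j=0}^{2s-1}[4iMr_+(\omega-m\upomega_+)+(s-j)(r_+-r_-)]$: for integer $s$ the factor at $j=s$ reduces to the pure $4iMr_+(\omega-m\upomega_+)$, whose modulus squared provides the extra $(\omega-m\upomega_+)/\omega$ visible in the integer-spin transmission coefficient; for half-integer $s$, $s-j$ is never zero on $j\in\mathbb{Z}$, every factor contributes a full $\mathfrak{C}_{s,j}$ on squaring, and no such extra $(\omega-m\upomega_+)/\omega$ appears, matching the second set of formulas. Equating the two asymptotic values of $\mathcal{W}$ and dividing through by $|\swei{a}{\pm s}_{\mc{I}^-}|^2$ times the appropriate $\omega$-power then yields $1=\mathfrak{R}+\mathfrak{T}$ in each case.

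The main obstacle is the bookkeeping in the asymptotic evaluation: carefully tracking the exact phases and moduli picked up by iterated applications of $\hat{\mathcal{D}}_0^\pm$ on the explicit horizon- and infinity-normalized forms of $\upalpha^{[\pm s]}$, identifying precisely which products of $\mathfrak{C}_{s,j}$ combine to $\mathfrak{C}_s^{(9)}$ versus $\mathfrak{C}_s^{(10)}$, and absorbing the $(2Mr_+)$- and $(r_+-r_-)$-prefactors into the renormalization $\swei{\tilde a}{\pm s}_{\mc{H}^+}$ so that the final expressions agree with those displayed in the statement.
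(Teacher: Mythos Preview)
Your approach is exactly that of the paper: the proof there consists only of a reference to the preceding paragraph (the conserved Wronskian of $(r^2+a^2)^{1/2}\Delta^{\pm s/2}\smlambda{\upalpha}{\pm s}$ and its complex conjugate partner, combined with the Teukolsky--Starobinsky identities to generate the opposite-spin solution) together with citations to \cite{Andersson2017,SRTdC2020} for the bookkeeping. Your sketch in fact supplies more detail than the paper does, including the correct explanation of the integer/half-integer dichotomy via the vanishing of $s-j$ at $j=s$ in the product for $\mathfrak{C}_s^{(2)}$ and $\mathfrak{C}_s^{(6)}$.
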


\begin{proof} The proof is sketched in the paragraph above, but we encourage the reader to see \cite[Section IIB]{Andersson2017} and \cite{SRTdC2020} for details.
\end{proof}

\begin{remark} The notion of energy put forth in Lemma~\ref{lemma:TS-energy-identity} is consistent with previous literature: it matches that introduced in \cite{Teukolsky1974} for $s=1,2$,  \cite{Unruh1973} for $s=1/2$ and \cite{TorresdelCastillo1990} for $s=3/2$.
\end{remark}

If $\swei{\mathfrak T}{\pm s}<0$ and $\swei{\mathfrak R}{\pm s}>0$, there is \textit{amplication} in the energy reflected to future null infinity. As
\begin{align*}
\swei{\mathfrak T}{\pm s}(a,M,\omega,m,\Lambda)<0\Leftrightarrow \omega(\omega-m\upomega_+)<0\,,\quad s\in\mathbb{Z}_{\geq 0}\,,
\end{align*}
we refer to this as superradiant amplification. On the other hand, if $\swei{\mathfrak T}{\pm s}>0$ and $\swei{\mathfrak R}{\pm s}<0$, there is \textit{amplication} in the energy transmitted into the future event horizon. As
\begin{align*}
\swei{\mathfrak R}{\pm s}(a,M,\omega,m,\Lambda)<0\Leftrightarrow \mathfrak{C}_s(a,M,\omega,m,\Lambda)<0\,,
\end{align*}
we refer to this as non-superradiant amplification. By  Lemma~\ref{lemma:TS-radial-constant-nonnegative} and Lemma~\ref{lemma:TS-radial-constant-negative}, if we constrain $\Lambda$ to be a spin-weighted spheoidal eigenvalue, $\Lambda=\bm\Lambda_{sml}^{(a\omega)}$ for some $l$, we see that non-superradiant amplification  occurs for some $(\omega,m,l)$ when $s>2$, see Figure~\ref{fig:T3R3}. 

\begin{figure}[t]
\centering   
\begin{subfigure}{.5\textwidth}
  \centering
  \includegraphics[width=.9\textwidth]{./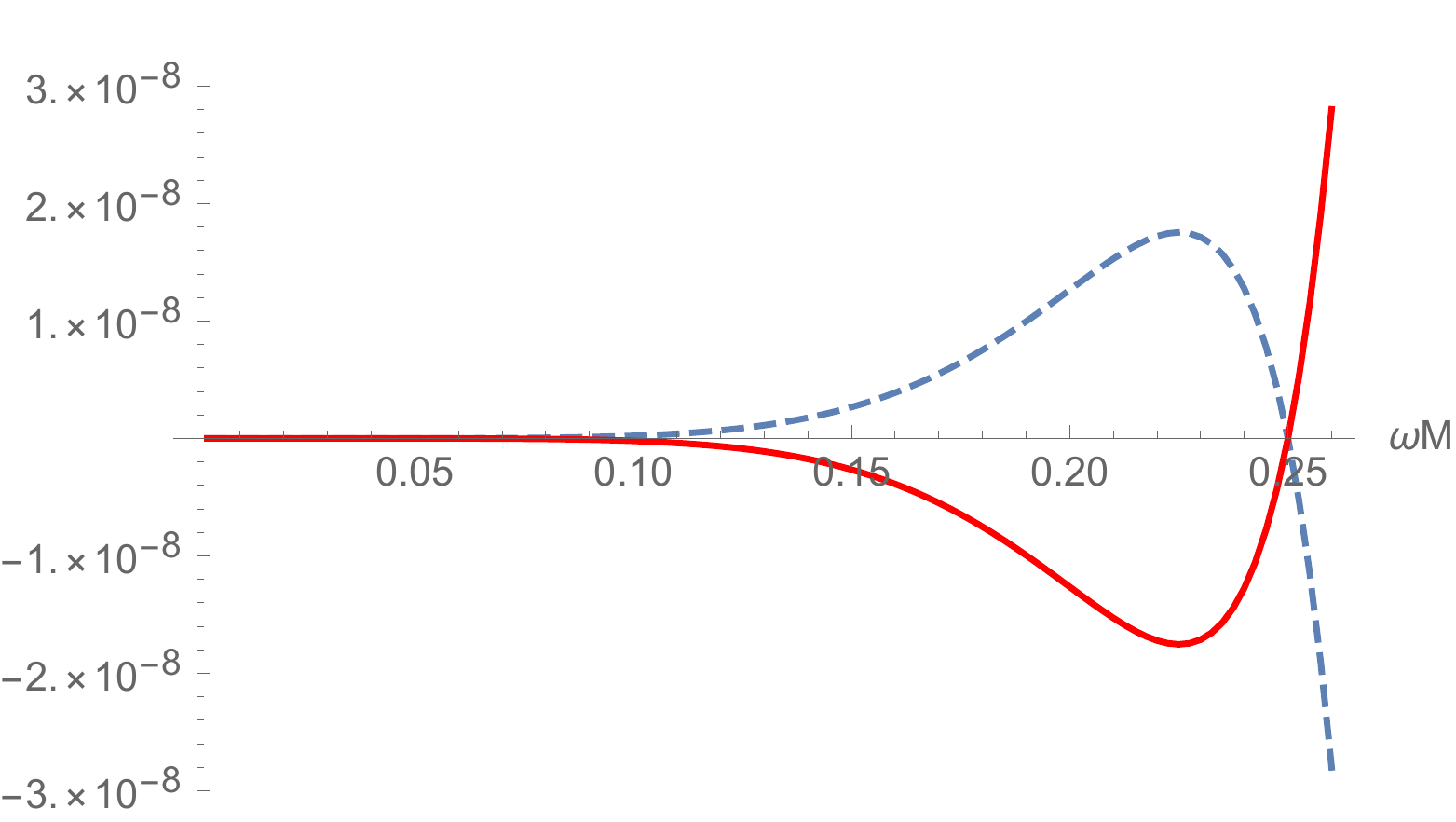}
  \caption{}
  \label{fig:T3}
\end{subfigure}%
\begin{subfigure}{.5\textwidth}
  \centering
  \includegraphics[width=.9\textwidth]{./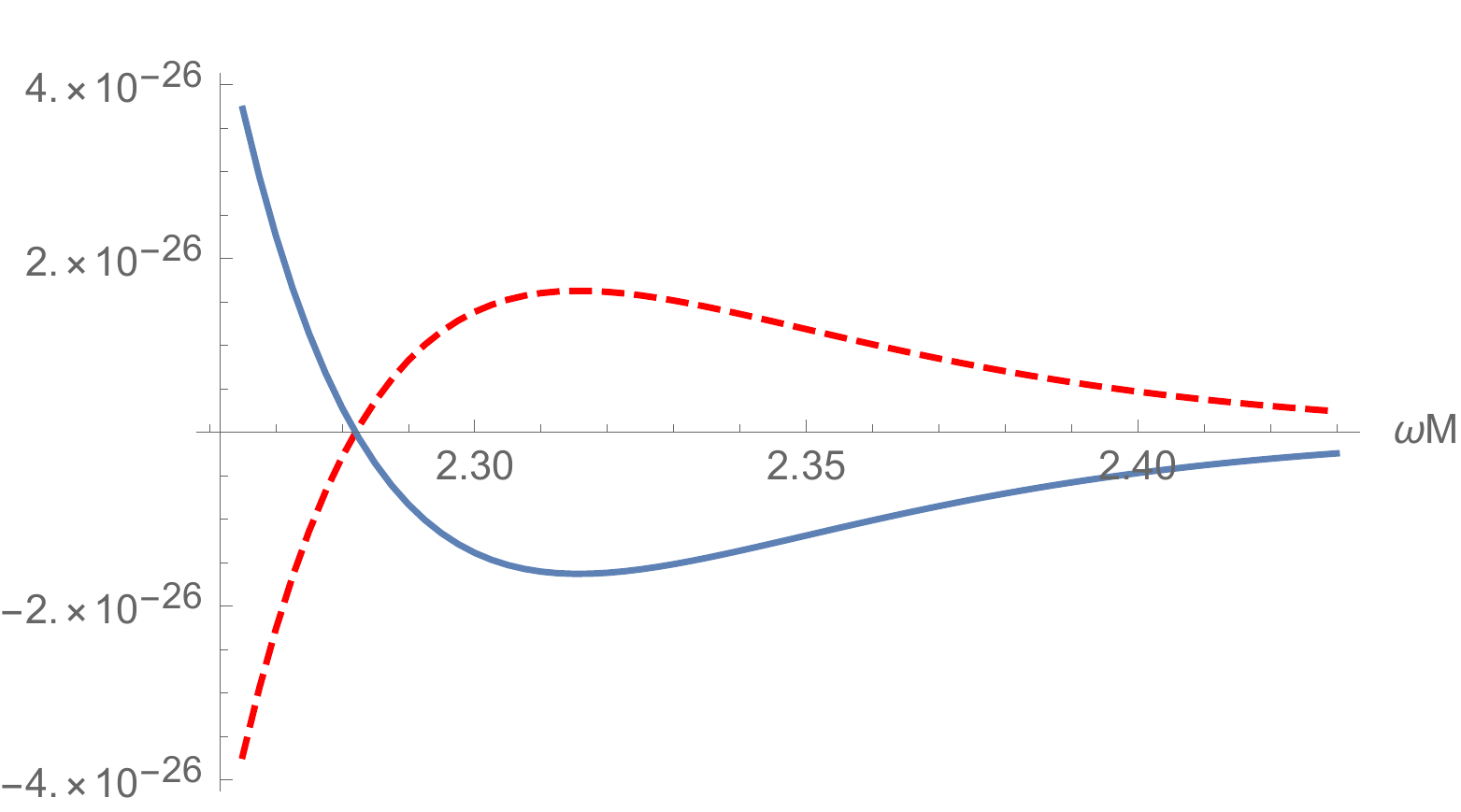}
  \caption{}
  \label{fig:R3}
\end{subfigure}
\caption{Plot showing the behavior of $\swei{\mathfrak T}{-3}(a, M, \omega,m,l)$ and $\swei{\mathfrak R}{-3}(a,M,\omega,m,l)$ for $a=8M/10$, $m=1$ and $l=3$, as functions of $M\omega$. Plot (a) captures superradiant amplification, with the solid red line corresponding to $\swei{\mathfrak T}{-3}$, and the dashed blue line denoting $\swei{\mathfrak R}{-3}-1$. Plot (b) captures non-superradiant amplification, with the solid blue line corresponding to $\swei{\mathfrak R}{-3}$, and the dashed red line denoting $\swei{\mathfrak T}{-3}-1$.}
\label{fig:T3R3}
\end{figure}

\subsection{Energy for the system of linearized Maxwell or Einstein equations}

In the previous section, we considered a scattering problem under the evolution equation \eqref{eq:Teukolsky-equation-intro} alone. Consequently, the notion of energy in Lemma~\ref{lemma:TS-energy-identity} involves a single spin sign. If, for $s=1,2$, one considers a scattering problem under the entire system of linearized Maxwell or Einstein equations, respectively, then the natural notion of energy involves both spin signs. We extend this reasoning to other spins:

\begin{lemma}[TS energy identity under TS correspondence] \label{lemma:TS-energy-identity-with-TS-correspondence} Fix $M>0$, $|a|\leq M$, $s\in\frac12\mathbb{Z}_{\geq 0}$, $\omega\in\mathbb{R}\backslash\{0,m\upomega_+\}$ and $\Lambda\in\mathbb{R}$. Suppose there exists a radial Teukolsky--Starobinsky constant $\mathfrak{C}_s(a,M,\omega,m,\Lambda)$ as given in Definition~\ref{def:TS-radial-constant}. Further assume that the frequencies are such that $\mathfrak{C}_s(a,M,\omega,m,\Lambda)\neq 0$.

Let $\smlambda{\upalpha}{\pm s}$ be solutions to \eqref{eq:radial-ODE-alpha} which are related to each other by the radial Teukolsky--Starobinsky identities of Proposition~\ref{prop:TS-radial}. Assume their decompositions \eqref{eq:radial-ODE-decomposition} to be characterized by
\begin{align*}
\swei{a}{- s}_{\mc{H}^-}=0\,, \qquad \swei{\tilde a}{+ s}_{\mc{H}^+ }\equiv \swei{a}{+ s}_{\mc{H}^+ }\begin{dcases}
(2Mr_+)^{1/2}\lp\{\begin{array}{ll}
1\,, & |a|=M\\
(r_+-r_-)^{s}\,, & |a|<M
\end{array}\rp\}  &\text{~if $s$ integer}\\
\lp\{\begin{array}{ll}
(2Mr_+)^{-1/2}\,, & |a|=M\\
(r_+-r_-)^{+ s-1/2}\,, & |a|<M
\end{array}\rp\}  &\text{~if $s$ half-integer}
\end{dcases}
\,. 
\end{align*}
Then, one has  the energy identity 
\begin{align*}
1 = \mathfrak{R}_s(a,M,\omega,m,\Lambda)+\mathfrak{T}_s(a,M,\omega,m,\Lambda)
\,,  
\end{align*}
where the reflection and transmission coefficients, $\mathfrak{R}_s(a,M,\omega,m,\Lambda)$ and $\mathfrak{T}_s(a,M,\omega,m,\Lambda)$ respectively, are given as follows:
\begin{align*}
\begin{split}
\mathfrak{R}_{s}\equiv \frac{\lp|\swei{ a}{- s}_{\mc{I}^+ }\rp|^2}{\lp|\swei{ a}{+ s}_{\mc{I}^- }\rp|^2}\,; \quad \text{$s$ integer,}\,\,\, \mathfrak{T}_{s}\equiv\frac{\omega-m\upomega_+}{\omega}\frac{(2\omega)^{2s}}{\mathfrak C_s^{(9)}}\frac{\lp|\swei{\tilde a}{+ s}_{\mc{H}^+ }\rp|^2}{\lp|\swei{ a}{+ s}_{\mc{I}^- }\rp|^2}\,; \quad \text{$s$ half-integer,}\,\,\,\mathfrak{T}_{s}\equiv\frac{(2\omega)^{2s}}{\mathfrak C_s^{(9)}}\frac{\lp|\swei{\tilde a}{+ s}_{\mc{H}^+ }\rp|^2}{\lp|\swei{ a}{+ s}_{\mc{I}^- }\rp|^2}\,;
\end{split} 
\end{align*}
where we take $\mathfrak{C}_s^{(9)}$  to be the same as in Lemma~\ref{lemma:TS-energy-identity}.
\end{lemma}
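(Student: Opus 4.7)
The plan is to reduce the mixed-spin energy identity to the single-spin version in Lemma~\ref{lemma:TS-energy-identity}, by invoking the radial Teukolsky--Starobinsky identities of Proposition~\ref{prop:TS-radial} to transcribe asymptotic amplitudes between spin signs.

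First, I would verify the boundary condition propagates across spin signs: since Proposition~\ref{prop:TS-radial} gives $\swei{a}{-s}_{\mc{H}^-}=\mathfrak{C}_s^{(6)}\swei{a}{+s}_{\mc{H}^-}$ with $\mathfrak{C}_s^{(6)}\neq 0$ (an explicit nonvanishing product under our assumptions), the hypothesis $\swei{a}{-s}_{\mc{H}^-}=0$ forces $\swei{a}{+s}_{\mc{H}^-}=0$. Applying Lemma~\ref{lemma:TS-energy-identity} to the spin $+s$ solution then yields $1=\swei{\mathfrak{R}}{+s}+\swei{\mathfrak{T}}{+s}$. The transmission coefficient $\mathfrak{T}_s$ of the statement coincides verbatim with $\swei{\mathfrak{T}}{+s}$, so the task reduces to identifying $\mathfrak{R}_s$ with $\swei{\mathfrak{R}}{+s}$.

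To do this, I would use the TS relation $\swei{a}{-s}_{\mc{I}^+}=\mathfrak{C}_s^{(1)}\swei{a}{+s}_{\mc{I}^+}$, with $\mathfrak{C}_s^{(1)}=(2i\omega)^{2s}$, interpreted with the normalization convention implicit in `TS-related'---namely, the rescaling which makes the forward-then-backward TS composition equal to the identity, absorbing the factor $\mathfrak{C}_s$ that otherwise appears. Taking absolute squares gives $|\swei{a}{-s}_{\mc{I}^+}|^2=(2\omega)^{4s}|\swei{a}{+s}_{\mc{I}^+}|^2/\mathfrak{C}_s$, which substitutes directly into the defining ratio to match $\swei{\mathfrak{R}}{+s}$. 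Adding gives $1=\mathfrak{R}_s+\mathfrak{T}_s$. The half-integer case is handled identically; the omission of the $(\omega-m\upomega_+)/\omega$ factor in $\mathfrak{T}_s$ is inherited from Lemma~\ref{lemma:TS-energy-identity}, and the modified $(2Mr_+)^{\pm 1/2}$ and $(r_+-r_-)^{\pm s-1/2}$ normalizations in $\swei{\tilde{a}}{+s}_{\mc{H}^+}$ enter exactly as needed.

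The main obstacle is not conceptual but bookkeeping: one must carefully track every normalization factor, in particular those multiplying the horizon amplitudes across the integer vs.\ half-integer and subextremal ($|a|<M$) vs.\ extremal ($|a|=M$) cases. The key algebraic check is the factorization $|\mathfrak{C}_s^{(2)}|^2(r_+-r_-)^{4s}=\mathfrak{C}_s^{(9)}\mathfrak{C}_s^{(10)}$ (subextremal case; an analogous identity handles $|a|=M$), which follows from pairing the $2s$ factors in the product definition of $\mathfrak{C}_s^{(2)}$ in Proposition~\ref{prop:TS-radial}: those with $j=0,\ldots,s-1$ assemble into $\mathfrak{C}_s^{(10)}$ and those with $j=s,\ldots,2s-1$ assemble into $\mathfrak{C}_s^{(9)}$. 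No new mathematical content beyond Sections~\ref{sec:angular-TS-constants}--\ref{sec:radial-TS-constants} is required; a detailed execution of this strategy in the $s=2$ case appears in \cite[Section IIB]{Andersson2017}.
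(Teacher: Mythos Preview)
Your approach is essentially what the paper intends: the paper gives no separate proof of this lemma, leaving it as an immediate consequence of Lemma~\ref{lemma:TS-energy-identity} combined with the Teukolsky--Starobinsky correspondence of Proposition~\ref{prop:TS-radial}, exactly as you outline. Your reduction---propagating the horizon boundary condition via $\mathfrak{C}_s^{(6)}\neq 0$, identifying $\mathfrak{T}_s$ with $\swei{\mathfrak{T}}{+s}$ verbatim, and converting $\swei{\mathfrak{R}}{+s}$ into $\mathfrak{R}_s$ through the $\mc{I}^+$ amplitude relation---is the correct skeleton, and your reference to \cite{Andersson2017} matches the paper's own citation for Lemma~\ref{lemma:TS-energy-identity}.

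One point deserves emphasis. You correctly flag that the phrase ``related to each other by the radial Teukolsky--Starobinsky identities'' is doing real work: the ratio $\mathfrak{R}_s=|\swei{a}{-s}_{\mc{I}^+}|^2/|\swei{a}{+s}_{\mc{I}^-}|^2$ mixes amplitudes from both spin signs and is therefore \emph{not} invariant under independent rescaling of $\smlambda{\upalpha}{-s}$. With the unnormalized convention $\smlambda{\upalpha}{-s}=\Delta^s(\hat{\mc{D}}_0^+)^{2s}(\Delta^s\smlambda{\upalpha}{+s})$ one gets $\mathfrak{R}_s=\mathfrak{C}_s\,\swei{\mathfrak{R}}{+s}$ rather than $\mathfrak{R}_s=\swei{\mathfrak{R}}{+s}$; your square-root rescaling repairs this when $\mathfrak{C}_s>0$, which covers the physical $s\leq 2$ cases treated in the Remark following the lemma (and in \cite{Masaood2020}, \cite{Pasqualotto2016}). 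For $\mathfrak{C}_s<0$ the same rescaling becomes complex and $|c|^2=1/|\mathfrak{C}_s|$ no longer matches $1/\mathfrak{C}_s$, so the identity $\mathfrak{R}_s=\swei{\mathfrak{R}}{+s}$ would fail in sign. This is not a flaw in your argument but rather a point the paper leaves implicit: the precise normalization intended by ``TS-related'' is what fixes the statement, and the paper's purpose here is only to exhibit a manifestly non-negative reflection coefficient, as its closing sentence makes clear.
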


\begin{remark} The notion of energy considered in Lemma~\ref{lemma:TS-energy-identity-with-TS-correspondence} matches that of the recent \cite[Section 1.3.4]{Masaood2020} on scattering under the linearized Einstein vacuum equations around $a=0$ in Kerr.  Indeed, when considering this system, the two gauge-invariant curvature quantities satisfy the Teukolsky Master equation \eqref{eq:Teukolsky-equation-intro} with spin $\pm 2$ together with a physical space version of the radial Teukolsky--Starobinsky identities of Proposition~\ref{prop:TS-radial}, see \cite[Equations 1.5 and 1.6]{Masaood2020}. A similar situation arises when one considers the linearized Maxwell equations, as one may readily deduce from \cite[Equations 3.11--3.16]{Pasqualotto2016}.
\end{remark}

We note that it is only under the notion of energy in Lemma~\ref{lemma:TS-energy-identity-with-TS-correspondence}, seldom encountered in the physics literature, that amplification does not occur for non-superradiant frequencies for fields of any spin.

\printbibliography

\end{document}